\documentclass[11pt]{article}
\usepackage{color}
\usepackage[colorlinks=true, allcolors=blue,backref=page]{hyperref}
\usepackage{amsmath, amssymb, amsthm, thmtools}
\usepackage{mathrsfs}
\usepackage{mathtools}
\usepackage[noabbrev,capitalize,nameinlink]{cleveref}
\usepackage{fullpage}
\usepackage[noadjust]{cite}
\usepackage{graphics}
\usepackage{pifont}
\usepackage{tikz}
\usepackage{bbm}
\usepackage{float} 
\usepackage[T1]{fontenc}

\usetikzlibrary{arrows.meta}

\usepackage{environ}
\usepackage{framed}
\usepackage{url}
\usepackage[linesnumbered,ruled,vlined]{algorithm2e}
\usepackage[noend]{algpseudocode}
\usepackage[labelfont=bf]{caption}
\usepackage{cite}
\usepackage{framed}
\usepackage[framemethod=tikz]{mdframed}
\usepackage{appendix}
\usepackage{graphicx}
\usepackage[textsize=tiny,textwidth=2cm]{todonotes}
\usepackage{tcolorbox}
\usepackage{enumerate}
\allowdisplaybreaks[1]
\usepackage{enumerate}
\usepackage{stmaryrd}

\usepackage{wasysym}
\usepackage{MnSymbol}
\usepackage[margin=1in]{geometry}
\usepackage{braket}
\usepackage{todonotes}

\usepackage[shortlabels]{enumitem}
\crefformat{enumi}{#2#1#3}
\crefrangeformat{enumi}{#3#1#4 to~#5#2#6}
\crefmultiformat{enumi}{#2#1#3}
{ and~#2#1#3}{, #2#1#3}{ and~#2#1#3}

\DeclareSymbolFont{symbolsC}{U}{pxsyc}{m}{n}
\SetSymbolFont{symbolsC}{bold}{U}{pxsyc}{bx}{n}
\DeclareFontSubstitution{U}{pxsyc}{m}{n}
\DeclareMathSymbol{\medcircle}{\mathbin}{symbolsC}{7}

\crefname{equation}{}{} 
\AtBeginEnvironment{appendices}{\crefalias{section}{appendix}} 

\usepackage[color,final]{showkeys} 

\colorlet{refkey}{orange!20}
\colorlet{labelkey}{blue!30}

\numberwithin{equation}{section}
\newtheorem{theorem}{Theorem}[section]
\newtheorem{proposition}[theorem]{Proposition}
\newtheorem{lemma}[theorem]{Lemma}

\newtheorem{corollary}[theorem]{Corollary}
\newtheorem{conjecture}[theorem]{Conjecture}
\newtheorem*{question*}{Question}
\newtheorem{fact}[theorem]{Fact}

\theoremstyle{definition}
\newtheorem{definition}[theorem]{Definition}
\newtheorem{problem}[theorem]{Problem}
\newtheorem{question}[theorem]{Question}
\newtheorem*{definition*}{Definition}
\newtheorem*{theorem*}{Theorem}

\newtheorem{remark}[theorem]{Remark}

\newcommand{\mb}{\mathbb}
\newcommand{\mbf}{\mathbf}

\newcommand{\mc}{\mathcal}

\newcommand{\on}{\operatorname}
\newcommand{\wh}{\widehat}

\let\originalleft\left
\let\originalright\right
\renewcommand{\left}{\mathopen{}\mathclose\bgroup\originalleft}
\renewcommand{\right}{\aftergroup\egroup\originalright}

\allowdisplaybreaks

\title{On Worst-Case Optimal Polynomial Intersection}
\author{Yihang Sun\thanks{Stanford University.  \texttt{kimisun@stanford.edu}} \ and Mary Wootters\thanks{Stanford University. \texttt{marykw@stanford.edu}}}
\date{}
\begin{document}
\maketitle
\begin{abstract}
The \emph{Optimal Polynomial Intersection (OPI)} problem is the following: Given sets $S_1, \ldots, S_m \subseteq \mathbb{F}$ and evaluation points $a_1, \ldots, a_m \in \mathbb{F}$,  find a polynomial $Q \in \mathbb{F}[x]$ of degree less than $n$ so that $Q(a_i) \in S_i$ for as many $i \in \{1, 2, \ldots, m\}$ as possible.  \emph{Decoded Quantum Interferometry (DQI)} is a quantum algorithm that efficiently returns good solutions to the problem, even on worst-case instances~\cite{dqi}.  The quality of the solutions returned follows a \emph{semicircle law}, which outperforms known efficient classical algorithms.  But does DQI obtain the best possible solutions?  That is, are there solutions better than the semicircle law for worst-case OPI instances?  Surprisingly, before this work, the best existential results coincide with (and follow from) the best algorithmic results.

In this work, we show that there are better solutions for worst-case OPI instances over prime fields.  In particular, DQI and the semicircle law are not optimal.  For example, when the lists $S_i$ have size $\rho p$ for $\rho \sim 1/2$, our results imply the existence of a solution that asymptotically beats the semicircle law whenever $n/m \geq 0.6225$, and we show that an asymptotically perfect solution exists whenever  $n/m \geq 0.7496$. 
Our results generalize to Max-LINSAT problems derived from any Maximum Distance Separable (MDS) code, and to any $\rho \in (0,1)$.
 The key insight to our improvement is a connection to local leakage resilience of secret sharing schemes. Along the way, we recover several re-proofs of the existence of solutions achieving the semicircle law.
\end{abstract}
\maketitle
\section{Introduction}

The \emph{Optimal Polynomial Intersection} (OPI) problem is to find a low-degree polynomial that satisfies a large number of constraints on its evaluations.  Formally, we have the following definition.
\begin{problem}[Optimal Polynomial Intersection]\label{question:opi}
Fix prime power $q$ and integers $0\le n \le m \le q$, and fix distinct evaluation points $a_1, \ldots, a_m \in \mathbb{F}_q$. 
The \emph{Optimal Polynomial Intersection} (OPI) problem (with respect to $a_1, \ldots, a_m$) is the following: Given input subsets $S_1, \ldots, S_m$ with $S_i \subseteq \mb{F}_q$, find a polynomial $Q\in \mb{F}_q[x]$ with $\deg Q < n$ that maximizes the \emph{satisfaction ratio}
\[ s(Q) \coloneqq \frac{1}{m}|\{i \in \{1, 2, \ldots, m\} \,:\, Q(a_i) \in S_i \} |. \]
\end{problem}
OPI arises as a natural problem in many different areas.  In coding theory, it is related to \emph{list-recovery} of Reed-Solomon codes: There, the goal is to show that there are not too many polynomials $Q$ of degree less than $n$ so that $s(Q) \geq \alpha$ (for some parameter $\alpha$), and to return them all.  In cryptography, OPI has been studied under the name \emph{noisy polynomial reconstruction/interpolation} (e.g., \cite{np99,bn00}), and has been considered as a hardness assumption in certain parameter regimes.  More recently, OPI has arisen as a potential demonstration of quantum advantage: The \emph{Decoded Quantum Interferometry} (DQI) algorithm~\cite{dqi} and related algorithms~\cite{chailloux1,chailloux2,rosmanis2026nearly,khattar2025verifiable} give quantum algorithms to solve OPI that out-perform known efficient classical heuristics.

The guarantees of DQI for OPI hold in the worst case: For \emph{any} input sets $S_1, \ldots, S_m$, DQI finds a polynomial $Q$ of degree less than $n$ so that the fraction of satisfied constraints approaches a \emph{semicircle law}.  More precisely, for $\mu \in [0,1/2]$ and $\rho \in (0,1)$, 
let
\begin{equation}
\label{eq:scl}
\on{SCL}_\rho(\mu) \coloneqq \begin{cases}
\left(\sqrt{\mu(1-\rho)}+\sqrt{\rho(1-\mu)}\right)^2 &\text{if }\mu+\rho \le 1\\
1 & \text{if }\mu+\rho \ge 1
\end{cases}.
\end{equation}
Then, as $n,m \to \infty$ with fixed ratio $\mu = n/2m$, for any $S_1, \ldots, S_m \subseteq \mathbb{F}_q$ of size $\rho q$, \cite{dqi} shows that DQI efficiently finds a polynomial $Q$ of degree less than $n$ so that
\[ \mathbb{E}[s(Q)] \geq \on{SCL}_\rho(\mu) - o(1).\]
Above, the expectation is over the randomness of the algorithm.  When $\rho \sim 1/2$, we have 
\[ \on{SCL}_{1/2}(\mu) = \frac{1}{2} + \sqrt{\mu(1-\mu)},\]
the equation of a semicircle.  In contrast, the best known classical heuristic (Prange's algorithm) is only able to obtain a satisfaction ratio of $1/2 + \mu + o(1)$, even on average-case instances.

A natural question is how the semicircle law compares to the best possible solution to OPI.  Does DQI find optimal solutions in the worst case?  Before this work, it was not known if there were \emph{any} better solutions in the worst case.

\begin{question}\label{q:main}
Fix $a_1, \ldots, a_m \subseteq \mathbb{F}_q$.  Given $\rho \in (0,1)$ and $n, m \to \infty$ so that $n/2m \to \mu$, what is the optimal worst-case satisfaction ratio for OPI?  That is, what is
\[ \min_{\substack{S_1, \ldots, S_m \\ |S_i| = \rho q}} \max_{\substack{Q \in \mathbb{F}_q^[x]\\\deg Q<n}} s(Q) \text{?} \] In particular, is it strictly larger than the semicircle law $\on{SCL}_\rho(\mu)$?  Or does DQI find asymptotically optimal worst-case solutions?
\end{question}

We note that prior work~\cite{chailloux1,chailloux2} obtained (algorithmic) improvements on the semicircle law for OPI; however those works consider the average case over an ensemble of OPI instances, not worst-case OPI instances as we do here.  We discuss the relationship to these works in \cref{sec:dqi-related}. 

\paragraph{Our contributions.}  Our contributions are twofold.
\begin{itemize}
\item Our main results, discussed quantitatively below in \cref{sec:results}, make progress on \cref{q:main}.  We show that, over prime fields, indeed there are solutions to worst-case OPI instances that are asymptotically better than the semicircle law $\on{SCL}_\rho(\mu)$, for a wide range of parameters $(\rho, \mu)$.
While our results are existential, we hope that our framework may lead to improved quantum algorithms.  We discuss this possibility more in \cref{sec:discussion}. 
\item Along the way, we develop new and simplified proofs of the existence of solutions on the semicircle law.  The fact that these solutions exist follows from the analysis of DQI~\cite{dqi}, but ``proof-by-quantum-algorithm'' does not seem like the ``correct'' way to prove this classical combinatorial statement.  We give two simpler re-proofs of this result.  These simpler proofs form the basis of our (more complicated) improvements.  We note that some of the ideas in our re-proofs are present in \cite{tie, dqi-complexity}; by making them explicit, we are able to identify avenues for improvement.   
\end{itemize}

More generally, our results apply to any Max-LINSAT problem that arises from a \emph{Maximum Distance Separable} (MDS) code. A \emph{linear code} $C \subseteq \mathbb{F}_q^m$ of dimension $n$ is just an $n$-dimensional subspace of $\mathbb{F}_q^m$.  A matrix $B \in \mathbb{F}_q^{m \times n}$ is a \emph{generator matrix} of $C$ if $C$ is the column span of $B$.  We say that $C$ is \emph{Maximum Distance Separable} (MDS) if any $n$ rows of $B$ are linearly independent; equivalently, if any $n$ symbols of a codeword $c \in C$ determine the entire codeword.

\begin{problem}[(MDS) Max-LINSAT]
\label{question:max-linsat} Fix a prime power $q$ and a matrix $B \in \mathbb{F}_q^{m \times n}$.  The \emph{Max-LINSAT} problem (with respect to $B$) is the following: Given input lists $S_1, \ldots, S_m$ with $S_i \subseteq \mb{F}_q$,
find $x\in \mb{F}_q^{n}$ that maximizes the \emph{satisfaction ratio}
\begin{equation}
s(x)\coloneqq \frac{1}{m}\left|\left\{i\in\{1,2, \ldots, m\}:(Bx)_i\in S_i \right\}\right|.
\end{equation}
When the matrix $B$ is the generator matrix of an MDS code, we call the problem  \emph{MDS Max-LINSAT} (with respect to $B$).
\end{problem}
As observed in \cite[Section 5]{dqi}, OPI is a special case of MDS Max-LINSAT when the matrix $B$ is the Vandermonde matrix with $B_{i,j} = a_i^{j-1}$.  It corresponds to the case when the MDS code $C$ is a \emph{Reed-Solomon} code.\footnote{A \emph{Reed-Solomon} code $C$ of dimension $n$ with evaluation points $a_1, \ldots, a_m$ is the code whose generator matrix is the Vandermonde matrix $B \in \mathbb{F}^{m \times n}$ with $B_{i,j} = a_i^j$.  It can also be viewed as the set $C = \{ (Q(a_1), \ldots, Q(a_n)) : \deg(Q) < n \}.$}
We note that DQI applies to more general Max-LINSAT instances, although OPI is of special interest because of its potential demonstration of significant quantum advantage.  

\begin{remark}\label{rem:B_does_not_matter}
    All of our results hold for the general MDS Max-LINSAT problem for \emph{any} matrix $B \in \mb{F}_q^{m \times n}$ that is the generator matrix of an MDS code.  When restricted to OPI, this means that our results hold for any choice of evaluation points $a_i$.  Thus, we omit the dependence on $B$ in the notation $s(x)$, and we will often refer the ``the'' MDS Max-LINSAT problem, rather than the MDS Max-LINSAT problem with respect to a particular matrix $B$.  Similarly, we omit the dependence of the evaluation points $a_i$ on the notation $s(Q)$, and refer the ``the'' OPI problem rather than OPI with respect to particular evaluation points.
\end{remark}

 We parameterize both OPI and Max-LINSAT in terms of $n/m = 2\mu$, which is the rate of the code whose generator matrix is $B$.  

\subsection{Main Results}\label{sec:results}
Our main results hold over prime fields $\mb{F}_p$.  We consider the asymptotic regime where $m, n\to\infty$ with fixed rate $2\mu = n/m$, and the setting where the sets $S_i$ have size $\rho p$, for some $\rho \in (0,1)$. 
As we present our results, 
we highlight two {threshold rates} for $\mu$:
\begin{itemize}
    \item \emph{Improvement Threshold:} Let $\mu_0(\rho)$ be the minimum $\mu$ for which DQI is not asymptotically optimal for MDS Max-LINSAT for any input lists of size $\rho p$, i.e. for any sets $S_1, \ldots, S_m$ of size $\rho p$, there exists an $\varepsilon>0$ and a solution $x \in \mathbb{F}_p^n$ such that $s(x)\ge \on{SCL}_\rho(\mu)+\varepsilon-o(1)$.
    \item \emph{Saturation Threshold:} Let $\mu_1(\rho)$ be the minimum $\mu$ for which an asymptotically perfect solution exists to MDS Max-LINSAT for any input lists of size $\rho p$, i.e. for any sets $S_1, \ldots, S_m$ of size $\rho p$, there exists a solution $x \in \mathbb{F}_p^n$ such that $s(x)\ge 1-o(1)$. 
\end{itemize}
We note that the saturation threshold obeys monotonicity: If $\mu>\mu_1(\rho)$, then an asymptotically perfect solution exists for $\mu$. A priori, this is not true for the improvement threshold, 
but it does hold for our results: Fix any $\rho$, if we can improve on the semicircle law for the case of $\mu_0$, then we can do so for any $\mu \in [\mu_0, 1-\rho]$.
Let $H$ be the binary entropy function
\begin{equation}\label{eq:Hx}
    H(x)\coloneqq -x\log x-(1-x)\log (1-x).
\end{equation}
We begin with the balanced case where $\rho = |S_i|/p\sim 1/2$. 
Below, we state two results, \cref{thm:main-avg} and \cref{thm:main-best}.  The second theorem is quantitatively stronger, but is more complicated to state.  Both results are plotted in \cref{fig:balanced}, along with the bounds they imply on the Improvement and Saturation Thresholds.  

\begin{restatable}{theorem}{thmA}\emph{(First improvement; balanced case)}
\label{thm:main-avg}
Let $2\mu \in [0,1]$, and let $n,m$ be sufficiently large, with $n/m = 2\mu$. Let $p$ be prime and fix any MDS generator matrix $B \in \mb{F}_p^{m \times n}$.  Then for any input sets $S_i$ of size $|S_i|/p \sim 1/2$, the MDS Max-LINSAT problem (with respect to $B$)
admits a solution $x \in \mathbb{F}_p^n$ with satisfaction ratio
\begin{equation}
 s(x)\ge \on{SCL}_{1/2}(\mu+\delta)-o(1),
\end{equation}
for any $\delta \in [0, 1/2-\mu]$ such that $E(\mu, \delta)+F(\mu)< 0$, where
\begin{equation}
\label{eq:EFH}
\begin{aligned}
E(\mu, \delta) & \coloneqq(\mu+\delta)H\left(\frac{\mu}{\mu+\delta}\right)+(1-\mu-\delta)H\left(\frac{\mu}{1-\mu-\delta}\right)-H(2\mu)\\
F(\mu) & \coloneqq (1-2\mu)\log 2 +2\mu\left(4\mu-1\right) \log\left(\frac{2}{\pi}\right).
\end{aligned}
\end{equation}
\end{restatable}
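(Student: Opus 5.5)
The plan is to redo the classical ``weighted distribution'' re-proof of the semicircle law, but to push the degree parameter past the point where that proof works exactly, and to control the resulting error terms using leakage resilience of the MDS code over $\mathbb{F}_p$.

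\textbf{Setup.} Write $f_i(y)=\mathbbm{1}[y\in S_i]-\rho$ for $i=1,\ldots,m$, with $\rho\sim 1/2$. Set $\ell=(\mu+\delta)m$. For a suitable polynomial $P$ of degree $\ell$, take $P_x$ to be a fixed linear combination of the elementary symmetric polynomials $e_k\bigl(f_1((Bx)_1),\dots,f_m((Bx)_m)\bigr)$ with $k\le \ell$. The coefficients are the same ones that realize $\on{SCL}_{1/2}(\mu+\delta)$ in the semicircle re-proof. Then sample $x\in\mathbb{F}_p^n$ with probability proportional to $P_x^2$.

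\textbf{Step 1: the main term.} Expand $\sum_x P_x^2$ and $\sum_x P_x^2\, s(x)$ in the Fourier basis on $\mathbb{F}_p^m$. Since $Bx$ is uniform on the code $C$, every product term reduces to a sum over dual codewords $\alpha\in C^\perp$ of products of Fourier coefficients $\hat f_i(\alpha_i)$. The $\alpha=0$ contribution is exactly the ``diagonal'' computation of the earlier re-proof, and it yields the ratio $\on{SCL}_{1/2}(\mu+\delta)$. In the old regime $2\ell\le n$, every other term vanishes, because $C^\perp$ has minimum distance $n+1$. Here $2\ell>n$, so nonzero dual codewords of weight $w\in[n+1,2\ell]$ survive, and the task reduces to showing their total contribution is $o(1)$ relative to the diagonal term. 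The diagonal term has size roughly $\binom{m}{\ell}$ times the appropriate powers of $\rho(1-\rho)$.

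\textbf{Step 2: bounding each cross term (leakage resilience).} Each surviving term involves a product over the support of $\alpha$ of Fourier coefficients of the $\pm$-balanced indicators, evaluated at nonzero frequencies. Over a prime field, a set of density $\sim 1/2$ satisfies $|\hat{\mathbbm{1}}_S(\beta)|\le \frac{1}{p\sin(\pi/p)}\to 1/\pi$ for $\beta\neq 0$. After normalizing by the variance $1/4$, each nonzero coordinate therefore contributes a factor of at most $2/\pi$. This is the local leakage resilience phenomenon for Shamir/MDS sharing, and it is the source of the $\log(2/\pi)$ in $F(\mu)$.

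\textbf{Step 3: counting the cross terms.} I would not count dual codewords naively, since that costs factors of $p$. Instead I would use the MDS structure. Fix the supports $T$ and $T'$ (each of size at most $\ell$) of the two monomials being multiplied. Coordinates in $T\cap T'$ see second moments, while coordinates in $T\triangle T'$ see first moments, i.e.\ Fourier coefficients. Averaging over $C$, the sum over frequency vectors supported on $T\triangle T'$ collapses to a sum over dual codewords supported there; this set is empty unless $|T\triangle T'|\ge n+1$.
\begin{itemize}
\item For each such pair, use Step 2 to bound the product, and note that any $n$ coordinates determine the codeword, so the sum over dual codewords supported on $T\triangle T'$ has at most $p^{|T\triangle T'|-n}$ terms. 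I expect that, over prime fields and with the right normalization, Cauchy--Schwarz or Parseval on $n$ free coordinates gives an exponential factor $2^{(1-2\mu)m}$. This would account for the $(1-2\mu)\log 2$ in $F$.
\item Counting pairs $(T,T')$ with $|T|=|T'|=\ell$ and $|T\cap T'|$ forced by $|T\triangle T'|\ge n$, normalized by $\binom{m}{\ell}$, produces the entropy expression $(\mu+\delta)H\bigl(\frac{\mu}{\mu+\delta}\bigr)+(1-\mu-\delta)H\bigl(\frac{\mu}{1-\mu-\delta}\bigr)$. The $-H(2\mu)$ in $E$ comes from the normalization by the number of ways to choose the $n=2\mu m$ determining coordinates.
\end{itemize}
The extremal overlap lies at the boundary $|T\triangle T'|=n$, which gives the exponent $E(\mu,\delta)+F(\mu)$ with the term $2\mu(4\mu-1)\log(2/\pi)$. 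If this exponent is negative, the cross terms are $e^{-\Omega(m)}$ relative to the diagonal. The same bound applies to the numerator $\sum_x P_x^2 s(x)$, since $s(x)$ only inserts one extra linear factor $f_i$, which shifts weights by at most one. Hence $\mathbb{E}[s(x)]\ge \on{SCL}_{1/2}(\mu+\delta)-o(1)$, and some $x$ achieves this.

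\textbf{Main obstacle.} Step 3 is the hard part: organizing the non-vanishing cross terms so that the Fourier-bias factor $(2/\pi)^{\text{weight}}$ and the count of dual codewords or supports combine to exactly $E+F$. In particular, one must avoid losing a factor of $p$ per free coordinate. My first attempt would bound the sum over dual codewords on a fixed support using the MDS property together with a Fourier $L^1$ bound. If that loses too much, I would fall back on a Parseval-type bound with the $2/\pi$ bias applied only to the $|T\triangle T'|-n$ ``excess'' coordinates.
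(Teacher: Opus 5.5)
Your Steps 1--2 coincide with the paper's framework. You sample $x$ with probability proportional to $\bigl(\sum_{k\le\ell}u_kq_k(x)\bigr)^2$ with $\ell=(\mu+\delta)m$, read $\on{SCL}_{1/2}(\mu+\delta)$ off the $t=0$ term, and observe that only dual codewords of weight $t\in[d^\perp,2\ell]$ survive, each nonzero coordinate contributing at most about $2/\pi$. (A minor point: unless $\rho=1/2$ exactly, coordinates of $T\cap T'$ do not see only second moments, since $g_i^2=1+\beta g_i$. The paper absorbs the extra terms using $|\beta|=o(1)$.)

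\textbf{The gap is Step 3.} All of $F(\mu)$ lives there, and your accounting is matched to the target formula rather than derived. Organizing the error pair by pair cannot work. Write $\Sigma_D$ for the sum over dual codewords with support exactly $D$. After regrouping (for $\rho=1/2$), the $(\ell,\ell)$ cross term is exactly $N(\ell,\ell;t)\,\mb{E}[q_t]$, where $\mb{E}[q_t]=\sum_{|D|=t}\Sigma_D$. The $-H(2\mu)$ in $E$ is the factor $\binom{m}{t}^{-1}$ in $N(\ell,\ell;t)/\binom{m}{\ell}=\binom{\ell}{t/2}\binom{m-\ell}{t/2}/\binom{m}{t}$. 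It is only paid for if $\mb{E}[q_t]$ is bounded as a single sum; it is not a ``normalization by determining coordinates''. If you bound each $|\Sigma_D|$ separately, you must absorb $\binom{m}{t}$. A single support genuinely can have bias of order $(2/\pi)^{n}$ (take suitably dilated and translated intervals). The natural per-support bound is Cauchy--Schwarz over two information sets of size $t-n$ of the shortened dual code inside $D$. It gives $(2/\pi)^{2n-t}$, hence the exponent $E+H(2\mu)+2\mu\log(2/\pi)$. At $2\mu=0.75$ this is about $E+0.22$, versus $E+F(\mu)\approx E+0.004$. Your fallbacks are worse:
\begin{itemize}
\item An $L^1$ bound loses $\|\wh{g_i}\|_1$ per free coordinate, which is $\Theta(\log p)$ for intervals. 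That is superexponential once $t-n=\Theta(m)$.
\item Applying the bias only to the $t-n$ excess coordinates gives no exponential gain at the critical weight $t=d^\perp=n+1$.
\end{itemize}

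\textbf{What you are missing} are the two ideas of \cref{thm:step-3} and \cref{lem:avg-buckets}.

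First, bound $\mb{E}[q_t]$ globally. Split $[m]=L\sqcup R\sqcup B$ with $|L|=|R|=m-n$, each an information set for the MDS code $C^\perp$, and $|B|=2n-m$. Apply Cauchy--Schwarz over \emph{all} of $C^\perp$, together with Parseval on $L$ and $R$. The $2(m-n)$ coordinates of $L\cup R$ then contribute $\sqrt{\rho}$ each instead of $\rho$. After the $\approx 2^m$ normalization, this is the factor $2^{m-n}=e^{(1-2\mu)m\log 2}$; it comes from $m-n$ dual information coordinates, not from $n$ free ones. The $2/\pi$ gain is collected only on $\on{supp}(y)\cap B$.

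Second, with a single split the union bound guarantees only $|\on{supp}(y)\cap B|\ge t-2(m-n)$. That gives the coefficient $6\mu-2$ of the weaker \cref{thm:main-green}. The coefficient $2\mu(4\mu-1)$ comes from averaging over splits:
\begin{itemize}
\item Take the $m$ cyclic windows $B_j$ of size $2n-m$, which cover every coordinate equally often.
\item Then every $D$ of size $t$ meets some $B_j$ in at least $t(2n-m)/m\ge 2\mu(4\mu-1)m$ coordinates.
\item Partition $C_t^\perp$ according to which window works, and run the Cauchy--Schwarz bound on each part. This loses only the factor $J=m$.
\end{itemize}
This has nothing to do with the overlap of $T$ and $T'$, which governs only $E$ (maximized at $t=d^\perp$). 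Since $2\mu(4\mu-1)-(6\mu-2)=2(1-2\mu)^2>0$, this averaging step is exactly what separates the theorem from the off-the-shelf bound.
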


\begin{figure}[H]
    \centering
    \includegraphics[width=\textwidth]{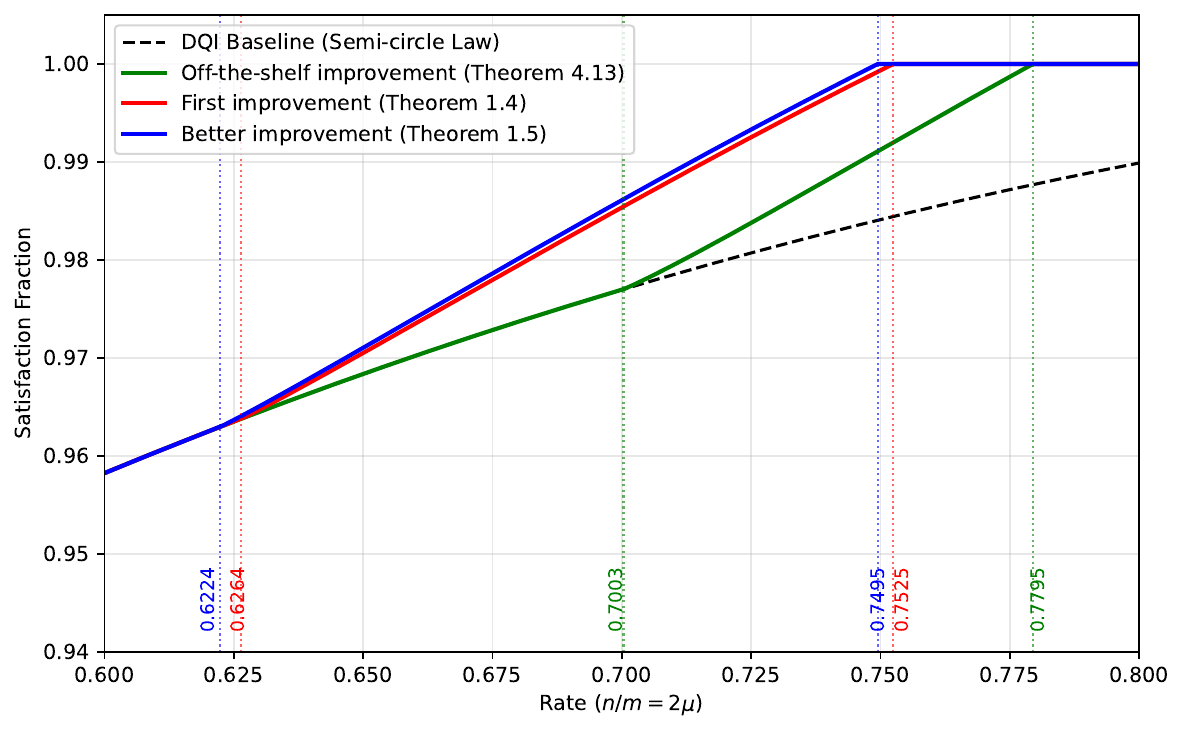}
    \caption{Three improvements on the semicircle law in the balanced case where $\rho\sim 1/2$, including \cref{thm:main-avg,thm:main-best}.  The green curve is the result of applying techniques from leakage-resilient secret-sharing in an off-the-shelf way, which we discuss more in \cref{sec:overview}. We color code and mark the bounds on critical rates of each curve: $2\mu_0(\rho)$ is where each curve diverges from $\on{SCL}_{1/2}(\mu)$, and $2\mu_1(\rho)$ when it hits $1$ asymptotically.} 
    \label{fig:balanced}
\end{figure}

As $\on{SCL}(\mu + \delta)$ is increasing with $\delta$, we want the largest feasible $\delta>0$; call this $\delta_{\max}$. In \cref{fig:balanced}, we plot $\on{SCL}_{1/2}(\mu+\delta_{\max})$ in red, improving the DQI benchmark $\on{SCL}_{1/2}(\mu)$ in the black dashed line. We unpack the thresholds: 
\begin{itemize}
    \item \emph{Improvement Threshold:} When no such $\delta$ exists, we cannot improve on the semicircle law; otherwise, when any $\delta >0$ exists so that $E(\mu, \delta)+F(\mu)<0$, we can improve asymptotically from the semicircle law. From \cref{fig:balanced}, we see that \cref{thm:main-avg} implies $2\mu_0(1/2)<0.6265$. 
    \item \emph{Saturation Threshold:} Since $\on{SCL}_{1/2}(1/2)=1$, the largest possible $\delta$ we would ever take is $1/2-\mu$, so feasibility of $\delta=1/2-\mu$ bounds $\mu_1(1/2)$.
    From \cref{fig:balanced}, we see that \cref{thm:main-avg} implies $2\mu_1(1/2)< 0.7526$. 
\end{itemize}
We further improve on this result as follows, at the expense of additional complexity in the statement.
\begin{restatable}{theorem}{thmB}\emph{(Better improvement; balanced case)}
\label{thm:main-best}
Let $2\mu \in [0,1]$, and let $n,m$ be sufficiently large, with $n/m = 2\mu$.  Let $p$ be prime and fix any MDS generator matrix $B \in \mb{F}_p^{m \times n}$.  Then for any input sets $S_i$ with size $|S_i|/p \sim 1/2$, the MDS Max-LINSAT problem (with respect to $B$) admits a solution $x \in \mathbb{F}_p^n$
 with satisfaction ratio
\begin{equation}
 s(x)\ge \on{SCL}_{1/2}(\mu+\delta)-o(1),
\end{equation}
for any $\delta \in [0, 1/2-\mu]$ and $\lambda \in [0, 1]$ such that $E(\mu, \delta)+G(\mu, \lambda )< 0$ where
\begin{equation}
\label{eq:EFG}
\begin{aligned}
E(\mu, \delta) & \coloneqq(\mu+\delta)H\left(\frac{\mu}{\mu+\delta}\right)+(1-\mu-\delta)H\left(\frac{\mu}{1-\mu-\delta}\right)-H(2\mu),\\
G(\mu, \lambda ) & \coloneqq \left(1-2\mu\right)\log 2 +H\left(2\mu \right) -\left(4\mu -1\right)H\left(\frac{\lambda }{4\mu -1}\right)-\left(2-4\mu \right)H\left(\frac{2\mu -\lambda }{2-4\mu }\right)+\lambda \log\left(\frac{2}{\pi}\right).
\end{aligned}
\end{equation}
Solving for minimizer $\lambda_\star$ of $G(\mu, \lambda)$ gives
\begin{equation}
\lambda_\star\coloneqq \frac{A-\sqrt{A^2-8(1-2/\pi)\mu(4\mu-1)}}{2(1-2/\pi)}\quad\text{where}\quad A\coloneqq \frac{2}{\pi}+\left(1-\frac{2}{\pi}\right)(6\mu-1).
\end{equation}
\end{restatable}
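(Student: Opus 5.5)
We will prove \cref{thm:main-best} with a first-moment / Fourier argument over a structured family of candidate solutions, in the same way one would prove \cref{thm:main-avg}; the difference lies only in how the leakage term is bounded.

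\emph{Setup.} Write $f_i=\mathbbm{1}_{S_i}-\rho$, viewed as a function on $\mb{F}_p$. Call a set $T\subseteq[m]$ of size $k=2(\mu+\delta)m$ with $|T|\geq n$ an \emph{enlarged support}. On such a set we place the DQI-type semicircle distribution. Concretely, the semicircle law at rate $\mu+\delta$ comes from a weighting of the codewords of the dual code, restricted to low-weight patterns. We realize it by choosing, uniformly at random, a set $J\subseteq T$ of $n$ coordinates, which is an information set by the MDS property.

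\emph{Key decomposition.} For a random codeword $c=Bx$, let $\mathcal{D}$ be the distribution induced by a weight that favors satisfying the constraints in $T$. On $T$ the codeword behaves like an $[k,n]$ MDS code of rate $\mu/(\mu+\delta)$. The re-proof of the semicircle law shows that, if the coordinates outside the information set were ``as if independent'', the expected satisfaction would be $\on{SCL}_{1/2}(\mu+\delta)$. The error from dependence is controlled by a sum over nonzero dual codewords $y$ of products $\prod_{i\in\on{supp}(y)}|\widehat{f_i}(y_i)|$.

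This is exactly the local leakage resilience quantity for Shamir-type (MDS) secret sharing. Over prime fields we use the bound that a set indicator of density $1/2$ satisfies $\sum_{a\ne0}|\widehat{\mathbbm 1_S}(a)|$-type estimates, with the $\ell_1$-to-$\ell_\infty$ ratio $2/\pi$ (the Benhamouda et al.\ / Maji et al.\ bound for density-$1/2$ sets).

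\emph{Counting.} The number of choices of the pattern ($T$ and $J$) gives the entropy term $E(\mu,\delta)$, with the normalization $-H(2\mu)$ for the code weighting. The leakage sum is handled by splitting the dual codewords according to how many of their support coordinates receive the strong bound $2/\pi$ versus the trivial bound $1/2$. Let $\lambda m$ be the number of coordinates where the $2/\pi$ bound applies. The dual code has dimension $m-n$ and minimum distance $n+1$, so the restriction to supports of size $\geq n+1$ is counted by binomial coefficients in $(4\mu-1)m$ and $(2-4\mu)m$. Summing $\binom{(4\mu-1)m}{\lambda m}\binom{(2-4\mu)m}{(2\mu-\lambda)m}$ against $(2/\pi)^{\lambda m}2^{(1-2\mu)m}$ then gives exponent $G(\mu,\lambda)$. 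Where \cref{thm:main-avg} used the crude bound $F(\mu)$, the parameter $\lambda$ is now optimized. Once $E+G<0$, the union/first-moment bound shows the error term is $o(1)$, so some $x$ attains $s(x)\geq\on{SCL}_{1/2}(\mu+\delta)-o(1)$. Finally, $\lambda_\star$ is found by setting $\partial_\lambda G=0$. This gives the quadratic
\[(\lambda)(2-4\mu-2\mu+\lambda)\cdot\tfrac{2}{\pi}=(4\mu-1-\lambda)(2\mu-\lambda),\]
whose smaller root is the stated formula.

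\emph{Main obstacle.} The hard step is the leakage counting: bounding the weight distribution of the MDS dual code restricted to $T$, and correctly attributing which coordinates get the $2/\pi$ factor. Doing this in a way that produces exactly the binomial split in $G$ is the delicate part. I would first try to use the exact MDS weight enumerator, bounding $A_w\leq\binom{m}{w}p^{w-n}$, and combine it with the Fourier $\ell_1$ bounds coordinatewise.
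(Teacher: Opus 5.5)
Your proposal has a genuine gap. It names the right ingredients: $2/\pi$, binomials in $(4\mu-1)m$ and $(2-4\mu)m$, and the leakage-resilience connection. But it lacks the mechanisms that actually produce $\on{SCL}_{1/2}(\mu+\delta)$, $E$ and $G$, and the mechanisms you do propose give the wrong quantities.

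The ``enlarged support $T$ with information set $J\subseteq T$'' is not a well-defined sampling rule. A semicircle law for the punctured $[2(\mu+\delta)m,n]$ code on $T$ would only control satisfaction on $T$, at parameter $\mu/(2(\mu+\delta))$, not $\on{SCL}_{1/2}(\mu+\delta)$ on all of $[m]$. Likewise, counting pairs $(T,J)$ has rate $H(2\mu+2\delta)+2(\mu+\delta)H(\tfrac{\mu}{\mu+\delta})$, which even after subtracting $H(2\mu)$ is not $E(\mu,\delta)$.

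What works is to keep the DQI form $\mb{P}_u(x)\propto\bigl(\sum_{k\le\ell}u_kq_k(x)\bigr)^2$ but push the cutoff to $\ell=(\mu+\delta)m$, past $d^\perp/2$. Expand $\mb{E}_{\mb{P}_u}[q_1]$ using the symmetric-difference counts $N(k,k';t)$ and $N(k,k',1;t)$. The $t=0$ terms give $\on{SCL}_{1/2}(\mu+\delta)$. The terms with $0<t<d^\perp$ vanish since $C^\perp_t=\emptyset$. What remains are corrections $\sum_{t=d^\perp}^{2\ell}\mb{E}[q_t]\,N(\cdot\,;t)$ weighted by $u$. The exponent $E$ is the rate of $N(\ell,\ell;t)/\binom{m}{\ell}=\binom{\ell}{t/2}\binom{m-\ell}{t/2}/\binom{m}{t}$, which over $t\ge d^\perp$ is largest at $t=d^\perp$.

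The leakage step also fails as you plan it. You propose bounding $\sum_{y\in C^\perp_w}\prod_i|\wh{\mbf{1}_{S_i}}(y_i)|$ by the weight enumerator $A_w\approx\binom{m}{w}p^{w-n}$ times coordinatewise sup bounds. This cannot work. The relevant weights run up to $2\ell=2(\mu+\delta)m$, and $p\gtrsim m/2$, so for $w-n=\Theta(m)$ the factor $p^{w-n}$ is superexponential, while every compensating factor is only exponential in $m$. (Also, the Fourier input is an $\ell_\infty$ bound, $|\wh{\mbf{1}_S}(z)|\le 1/\pi+o(1)$ for $z\ne0$ against $\wh{\mbf{1}_S}(0)=1/2$, not an $\ell_1$ estimate.)

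The missing idea is Cauchy--Schwarz over a split $[m]=L\cup R\cup B$ with $|L|=|R|=m-n$ and $|B|=2n-m$. By the MDS property of $C^\perp$, projection onto $L$ (or $R$) is a bijection $C^\perp\to\mb{F}_p^{m-n}$. So Parseval ($\|\wh{\mbf{1}_S}\|_2^2=\rho$) absorbs the whole sum over codewords, and the pointwise $2/\pi$ gain is collected only on $\on{supp}(y)\cap B$. With the $2^m$ normalization this gives $2^{(1-2\mu)m}(2/\pi)^{|\on{supp}(y)\cap B|}$.

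To force $|\on{supp}(y)\cap B|\ge\lambda m$, one chooses $J$ random buckets $B_j$ of size $2n-m$ so that every $D\in\binom{[m]}{d^\perp}$ meets some $B_j$ in at least $\lambda m$ points. One then partitions $C^\perp_t$ accordingly and pays a factor $J$. The binomials in $G$ come from
\[
\tfrac{1}{m}\log J\approx H(2\mu)-(4\mu-1)H(\tfrac{\lambda}{4\mu-1})-(2-4\mu)H(\tfrac{2\mu-\lambda}{2-4\mu}),
\]
the inverse of the probability that a random bucket meets a fixed $D$ in $\lambda m$ points. They therefore enter with a minus sign and an extra $+H(2\mu)$. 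Summing them as a positive count, as you propose, gives a different exponent that is not $G$. Your derivation of $\lambda_\star$ from $\partial_\lambda G=0$ is correct.
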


In \cref{fig:balanced}, this is plotted as the blue line, which slightly improves on \cref{thm:main-avg}. This improves the bounds to the threshold rates to $2\mu_0(1/2)< 0.6225$ and $2\mu_1(1/2) < 0.7496$.
We record these observations below as a direct answer to \cref{q:main}.
\begin{corollary}
\label{cor:feature-cor}
For MDS Max-LINSAT with $|S_i|/p \sim 1/2$, DQI does \emph{not} find the asymptotically optimal solution if the rate satisfies $2\mu \ge 0.6225$. Moreover, an asymptotically perfect solution exists for rates $2\mu\ge 0.7496$.
\end{corollary}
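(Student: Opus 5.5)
The corollary follows from \cref{thm:main-best} by checking two feasibility conditions numerically. I treat the two claims separately and use, at each rate, the best admissible $\lambda$, namely the minimizer $\lambda_\star$ given in the theorem. Write
\[ \Phi(\mu,\delta)\coloneqq E(\mu,\delta)+G(\mu,\lambda_\star(\mu)). \]
A pair $(\mu,\delta)$ is admissible exactly when $\Phi(\mu,\delta)<0$, $\delta\in[0,1/2-\mu]$, and $\lambda_\star\in[0,1]$.

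\emph{Saturation claim.} At $\delta=1/2-\mu$ we have $\on{SCL}_{1/2}(\mu+\delta)=\on{SCL}_{1/2}(1/2)=1$. So it suffices to show that $\Phi(\mu,1/2-\mu)<0$ for every $\mu$ with $2\mu\ge 0.7496$. At this $\delta$ the entropy expression for $E$ simplifies to
\[ E(\mu,1/2-\mu)=\tfrac12 H(2\mu)+\tfrac12 H(2\mu)-H(2\mu)=0. \]
The condition therefore reduces to $G(\mu,\lambda_\star(\mu))<0$. I would verify this in three steps:
\begin{enumerate}
\item Evaluate $G(\mu,\lambda_\star)$ at $2\mu=0.7496$ and confirm it is negative.
\item Show that $\mu\mapsto \min_\lambda G(\mu,\lambda)$ is decreasing on $[0.3748,1/2]$. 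For instance, by the envelope theorem it suffices that $\partial_\mu G<0$ at $\lambda_\star$. Alternatively, the saturation monotonicity noted in the paper (if $\mu>\mu_1$, a perfect solution exists) gives this directly.
\item Check that $\lambda_\star\in[0,1]$ and that the entropy arguments lie in $[0,1]$ for $4\mu-1>0$.
\end{enumerate}

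\emph{Improvement claim.} DQI achieves $\on{SCL}_{1/2}(\mu)$. Since $\on{SCL}_{1/2}$ is strictly increasing on $[0,1/2]$, any admissible $\delta>0$ gives a strict asymptotic gain $\varepsilon=\on{SCL}_{1/2}(\mu+\delta)-\on{SCL}_{1/2}(\mu)>0$. I would show such $\delta$ exists as follows.
\begin{enumerate}
\item Note that $E(\mu,0)=\mu H(1)+(1-\mu)H\!\left(\frac{\mu}{1-\mu}\right)-H(2\mu)$, which is $\le 0$. Hence, by continuity in $\delta$, it is enough to have $E(\mu,0)+G(\mu,\lambda_\star)<0$ strictly.
\item Verify this strict inequality numerically at $2\mu=0.6225$.
\item Extend it to all rates above $0.6225$ by showing that the left-hand side decreases in $\mu$ up to $1/2-\mu\ge\delta$. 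For $\mu$ close to $1/2$, the saturation part already gives the claim.
\end{enumerate}

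The main obstacle is making the monotonicity in $\mu$ rigorous, rather than relying on point evaluations. I would first try a direct derivative computation of $\Phi$ along $\lambda_\star$. If that becomes too messy, I would fall back to interval-arithmetic evaluation on a fine grid, combined with Lipschitz bounds on the entropy terms.
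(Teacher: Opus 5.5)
Your proposal is correct and follows the same route as the paper, which simply reads both thresholds off \cref{thm:main-best}: saturation at $\delta=1/2-\mu$, where $E=0$ and the condition reduces to $G(\mu,\lambda_\star)<0$, and improvement as $\delta\to0^+$, where the condition becomes $E(\mu,0)+G(\mu,\lambda_\star)<0$; this does hold at $2\mu=0.6225$, with a margin of only about $1.5\cdot10^{-4}$, so any grid or interval-arithmetic check needs adequate precision, and your envelope-theorem monotonicity argument supplies rigor that the paper leaves to \cref{fig:balanced}. One small caveat is that the improvement claim must exclude the endpoint $2\mu=1$, where $\on{SCL}_{1/2}(1/2)=1$ already and no admissible $\delta>0$ exists.
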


\begin{remark}\label{rem:chailloux}
   The improvement of $2 \mu_1(1/2) < 0.7496$ is notable as $0.7496 < 3/4$.  This $3/4$ bound is the rate above which \cite{chailloux2} can \emph{algorithmically} find an asymptotically perfect solution in expectation, over a slightly randomized ensemble of input lists $S_i$ (see the discussion in \cref{sec:dqi-related}).   
\end{remark}

We also obtain similar results for a general $\rho\in (0,1)$ below. The proof of \cref{thm:main-biased} is analogous to that of \cref{thm:main-avg} for the balanced $\rho \sim 1/2$ case. It is possible to slightly improve \cref{thm:main-biased} using ideas from the proof of~\cref{thm:main-best}, but we pursue that direction only for the balanced case for simplicity. 
\begin{restatable}{theorem}{thmC}\label{thm:main-biased}\emph{(Main theorem; biased case)}
Let $2\mu \in [0,1]$, and let $n,m$ be sufficiently large, with $n/m = 2\mu$.  Let $\rho \in (0,1)$. Let $p$ be prime, and fix an MDS generator matrix $B \in \mb{F}_p^{m \times n}$.  Then for any input sets $S_i$ with size $|S_i| = \rho p$, the MDS Max-LINSAT problem (with respect to $B$) admits a solution $x \in \mathbb{F}_p^n$ with satisfaction ratio
\begin{equation}
 s(x)\ge \on{SCL}_\rho(\mu+\delta)-o(1),
\end{equation}
for any $\delta \in [0, 1-\rho-\mu]$ such that
$E_\rho(\mu, \delta)+F_\rho(\mu) <0$,
where 
\begin{equation}
\begin{aligned}
E_\rho(\mu, \delta) & \coloneqq 2\mu \log 2-H(\mu+\delta)+
\max_{\gamma} \left\{\gamma\log\left(\frac{|1-2\rho|}{2\sqrt{\rho(1-\rho)}}\right)+2\mu H\left(\frac{\gamma}{2\mu}\right)+(1-2\mu) H\left(\frac{\delta-\gamma/2}{1-2\mu}\right)\right\},
\\ F_\rho \left(\mu\right) & \coloneqq  (2\mu-1)\log\rho +\mu\log \left(\frac{\rho}{1-\rho}\right)+2\mu(4\mu-1)\log \left(\frac{|\sin(\rho \pi)|}{\rho \pi}\right).
\end{aligned}
\end{equation}
\end{restatable}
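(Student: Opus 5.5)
We prove \cref{thm:main-biased} by running the argument for \cref{thm:main-avg} with general $\rho$. The idea is to take a weighted average of the satisfaction count over all $x\in\mathbb{F}_p^n$, with weights from a well-chosen polynomial in the indicators, and then use Fourier analysis over $\mathbb{F}_p$.

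\textbf{Setup.} Write $f_i=\mathbbm{1}_{S_i}-\rho$ for the centered indicators. Let $C^\perp$ be the dual of the MDS code spanned by $B$; it is also MDS, with minimum distance $n+1$. Choose an amplitude polynomial
\[ P(x)=\sum_{k\le \ell} w_k \sum_{|T|=k}\prod_{i\in T} f_i((Bx)_i), \qquad \ell=(\mu+\delta)m . \]
The goal is to lower-bound
\[ \frac{\mathbb{E}_x[P(x)^2 s(x)]}{\mathbb{E}_x[P(x)^2]} . \]

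\textbf{Main term.} Expand $P^2 s$ and average over $x$. Since $Bx$ is uniform on $C$, a term survives only if the Fourier support of the product lies in $C^\perp$. For $\ell\le \mu m=n/2$ (the DQI regime), two sets $T,T'$ of size at most $n/2$ have symmetric difference of size at most $n$. Such a difference cannot support a nonzero dual codeword, so only diagonal terms remain. The quadratic form then reduces to the tridiagonal matrix whose top eigenvalue gives $\on{SCL}_\rho(\mu)$. With $\ell=(\mu+\delta)m$ we would get the same form at $\mu+\delta$, i.e.\ $\on{SCL}_\rho(\mu+\delta)$, but only after controlling the off-diagonal cross terms.

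\textbf{Error term.} The cross terms are sums over nonzero dual codewords $c\in C^\perp$ of weight $w\in(n,2\ell]$. Each contributes a product of Fourier coefficients $\widehat{f_i}(c_i)$ over the support of $c$. Here the local-leakage-resilience connection enters: over a prime field, for any set $S$ of size $\rho p$, we have $|\widehat{\mathbbm{1}_S}(\alpha)|\le |\sin(\rho\pi)|/(p\sin(\pi/p))\to |\sin(\rho\pi)|/(\rho\pi)\cdot\rho$. This is the extremal interval bound that gives the $\log(|\sin(\rho\pi)|/(\rho\pi))$ factor in $F_\rho$.

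To bound the error we combine three estimates:
\begin{itemize}
\item the MDS weight enumerator, counting roughly $\binom{m}{w}(p-1)^{w-n}$ dual codewords of weight $w$;
\item for each pair $(T,T')$, the count of overlap patterns $\gamma$, where the positions in $T\cap T'$ contribute $|1-2\rho|/(2\sqrt{\rho(1-\rho)})$ from $\widehat{f^2}$ terms after normalization;
\item the Fourier-coefficient bound above.
\end{itemize}
Taking $\frac1m\log$ of each piece, the choices of $T,T'$ and the overlap give the entropy terms in $E_\rho(\mu,\delta)$, maximized over $\gamma$. The normalization by the main term and the Fourier and $\rho$ factors give $F_\rho(\mu)$. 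The factor $2\mu(4\mu-1)$ comes from the $w-n\approx 2\ell-n$ free coordinates, bounded crudely using $\ell\le m/2$. When $E_\rho+F_\rho<0$, the error is exponentially smaller than the main term, and the ratio is at least $\on{SCL}_\rho(\mu+\delta)-o(1)$. Since a weighted average is attained by some $x$, this proves the theorem.

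\textbf{Main obstacle.} The hard step is the error bound: summing over all codewords $c$, all pairs $(T,T')$ with $T\triangle T'\subseteq\operatorname{supp}(c)$, and the optimal weights $w_k$. In particular, one must show that the worst-case Fourier bound over primes, the $\sin(\rho\pi)/(\rho\pi)$ factor, applies coordinatewise to every $c_i\neq 0$ simultaneously. One must also keep the overlap entries on $T\cap T'$ separate, because there the product involves $f_i^2$ rather than $f_i$. I would first carry out the computation at the top level $k=\ell$, where the weights concentrate, and then absorb the remaining levels into a polynomial factor.
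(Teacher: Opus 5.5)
\textbf{The error bound fails.} Your framework matches the paper's up to the error term: the squared discrepancy polynomial with $\ell=(\mu+\delta)m$, the weight-$0$ part giving $\on{SCL}_\rho(\mu+\delta)$, error terms indexed by dual codewords of weight $t\in[d^\perp,2\ell]$, and the overlap parameter $\gamma$ producing $E_\rho$. The problem is how you bound $\mb{E}[q_t]=\sum_{y\in C^\perp_t}\prod_i\wh{g}_i(y_i)$.

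Multiplying the number of weight-$t$ dual codewords, $A_t\approx\binom{m}{t}p^{t-n}$, by the coordinatewise sup $|\sin(\rho\pi)|/\pi$ cannot give exponential decay. By \cref{lem:mds-bound}, $p\gtrsim m$. So for $t-n=\Theta(m)$, which occurs as soon as $\delta>0$, the count is $e^{\Theta(m\log m)}$. The Fourier sup bound and $\tilde N_\rho/\binom{m}{\ell}$ contribute only $e^{-O(m)}$, so your $\frac1m\log$ of the count diverges. This is exactly the naive union bound the paper warns against in \cref{sec:discussion}.

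The missing idea is the Cauchy--Schwarz/Parseval step of \cref{thm:step-3}. Pick disjoint $L,R$ of size $m-n$; since $C^\perp$ is MDS of dimension $m-n$, each is an information set. Hence $\sum_{y\in C^\perp}\prod_{i\in L}|\wh{\mathbf{1}_{S_i}}(y_i)|^2=\prod_{i\in L}\|\wh{\mathbf{1}_{S_i}}\|_2^2=\rho^{m-n}$, which absorbs the entire codeword count. The sup bound is then used only on $\on{supp}(y)\cap B$, where $B$ is the remaining $2n-m$ coordinates.

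You then need $|\on{supp}(y)\cap B|$ large for every $y$. The paper partitions $C^\perp_t$ according to which of $m$ cyclic windows $B_j$ meets $\on{supp}(y)$ in at least $t(2n-m)/m$ points (\cref{lem:avg-buckets}), losing only a factor $m$. Together with the normalization $\rho^{t/2-m}(1-\rho)^{-t/2}$ from \cref{eq:biased-indicator-comp}, this is what produces $F_\rho$. In particular, $2\mu(4\mu-1)=\frac{d^\perp}{m}\cdot\frac{2n-m}{m}$. Your attribution of it to ``$2\ell-n$ free coordinates with $\ell\le m/2$'' does not produce this quantity, and $\ell\le m/2$ can fail when $\rho<1/2$.

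\textbf{The dependence on $t$ is never handled.} The exponents in the statement are the values at $t=d^\perp$ only. The error terms, however, run over all $t=2(\mu+\tau)m$ with $\tau\in[0,\delta]$, and $E_\rho(\mu,\delta,\tau)+F_\rho(\mu,\delta,\tau)$ genuinely depends on $\tau$. For instance, $(\rho/(1-\rho))^{t/2}$ grows with $t$ when $\rho>1/2$. Your reduction to $k=k'=\ell$ handles the levels $k$, not the weights $t$. So you still must show the supremum over $\tau$ is attained at $\tau=0$.

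That is precisely the content of the paper's proof of this theorem. By the envelope theorem in $\gamma$, $\exp(\partial_\tau[E_\rho+F_\rho])=|\sin(\rho\pi)/(\rho\pi)|^{2(4\mu-1)}f_\rho(x)$, and one shows this is $<1$. For $\rho\le 1/2$ this uses $f_\rho\le f_\rho(1-\rho)=1$. For $\rho>1/2$ it needs a monotonicity bound on $x$ plus numerics valid only in the non-vacuous range $1/2<\rho\le 0.67$, $\mu\ge\bar\mu(\rho)$.
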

\begin{figure}[H]
    \centering
 \includegraphics[width=\textwidth]{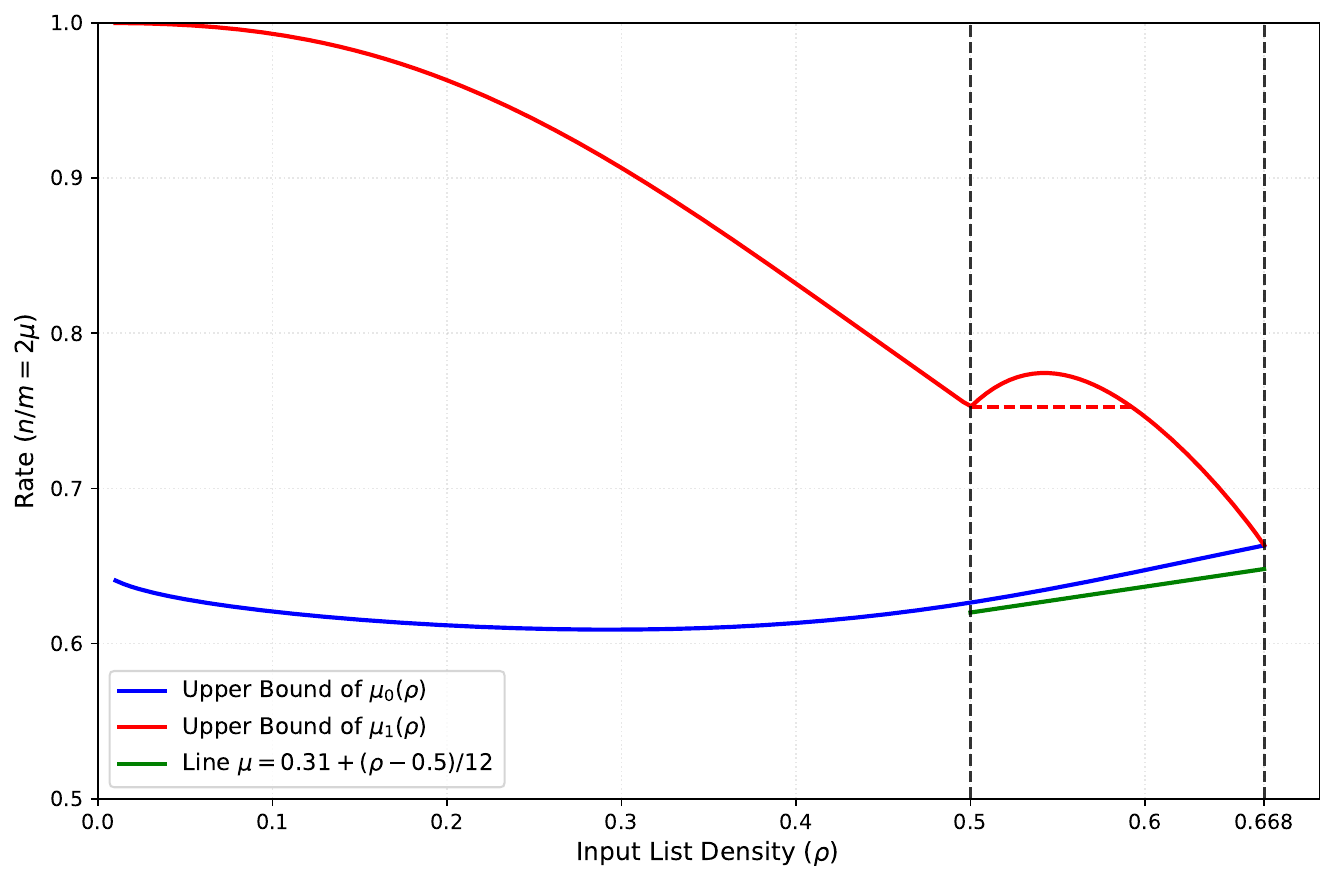}
    \caption{Phase diagram for the upper bounds from \cref{thm:main-biased} on the improvement and saturation thresholds as functions of input list density $\rho$. Observe critical densities $\rho=1/2$ and the maximum density $\rho\approx 0.668$ beyond which \cref{thm:main-biased} does not improve from $\on{SCL}_\rho(\mu)$. The red bump is an artifact of the proof and can be replaced by the dashed segment. The green segment is a lower bound on the rate above which \cref{thm:main-biased} improves from the semicircle law when $\rho\ge 1/2$. Below the segment, \cref{thm:main-biased} is vacuous. This numerical observation is used later in \cref{sec:general-proofs}.}
    \label{fig:phase}
\end{figure}

In \cref{fig:phase}, we plot a phase diagram for upper bounds on threshold rates $2\mu_0(\rho)$ and $2\mu_1(\rho)$. Beyond $\rho\approx 0.668$, \cref{thm:main-biased} cannot improve from the semicircle law, i.e. no feasible $\delta>0$ exists. 

The bump right after $1/2$ is an artifact of our analysis, and stems from having absolute values around $1-2\rho$ in $E_\rho$ (see \cref{rem:tildeN}). Since the worst-case maximum satisfaction fraction is clearly increasing in $\rho$, we see that the saturation threshold $\mu_1(\rho)$ is decreasing in $\rho$. Thus, we can replace the bump with the dashed horizontal red segment.
Finally, we have included figures (\cref{fig:biased}) showcasing improvement over the biased semicircle law for typical values of $\rho$ in \cref{sec:general-proofs}. 
\subsection{Technical Overview}
\label{sec:overview}
In this section, we outline our approach.  For simplicity, we focus on the perfectly balanced case where $\rho = 1/2$.\footnote{We note that it is not possible for $\rho$ to be exactly $1/2$ and for the field size to be a large prime (which is required for our main results, \cref{thm:main-avg,thm:main-best,thm:main-biased}), as $\rho |\mathbb{F}|$ must be an integer.  Generalizing to arbitrary $\rho$ (either $1/2 + o(1)$ or more generally) requires some delicate analysis, but does not change the main ideas of the proof.  We discuss this briefly at the end of this section.}  
%

\subsubsection{Two re-proofs of the semicircle law, and what we learn from them}\label{sec:overview1}
Our starting point is the observation that the semicircle law attained by DQI in \cite{dqi} is a lower bound on the worst-case satisfaction ratio.

\begin{proposition}[\cite{dqi}]
\label{prop:dqi-on-opi}
The MDS Max-LINSAT problem for any sets $S_i$ of size $\rho q$ and rate $2\mu= n/m$ admits a solution $x \in \mathbb{F}_q^n$ with satisfaction ratio $s(x)\ge \on{SCL}_{\rho}(\mu)-o(1)$.
\end{proposition}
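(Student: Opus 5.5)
We will prove the bound by a classical second-moment argument, with no quantum algorithm: we exhibit a probability distribution on $x\in\mathbb{F}_q^n$ under which $\mathbb{E}[s(x)]\ge \on{SCL}_\rho(\mu)-o(1)$. For each $i$, let $f_i=\mathbbm{1}_{S_i}-\rho$, a function on $\mathbb{F}_q$ with mean zero and $\|f_i\|_2^2=\rho(1-\rho)$ under the uniform measure. Write $y=Bx$. For a weight function $w$ on $y$, the number of satisfied constraints is $\sum_i \mathbbm{1}_{S_i}(y_i)=m\rho+\sum_i f_i(y_i)$. We take the distribution
\[\Pr[x]\propto P(x)^2,\qquad P(x)=\sum_{k=0}^{\ell} w_k\, e_k(x),\qquad e_k(x)=\sum_{|T|=k}\prod_{i\in T} f_i(y_i),\]
that is, the squared amplitude of a polynomial in the elementary symmetric functions of the $f_i(y_i)$, with $\ell\approx \mu m=n/2$. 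This is exactly the classical shadow of the DQI state.

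The key step is orthogonality, and this is where the MDS property enters. When $x$ is uniform on $\mathbb{F}_q^n$, any $n$ coordinates of $y=Bx$ are independent and uniform. For $|T|,|T'|\le \ell\le n/2$ we have $|T\cup T'|\le n$, so
\[\mathbb{E}_x\Bigl[\prod_{i\in T}f_i(y_i)\prod_{j\in T'}f_j(y_j)\Bigr]\]
factorizes over coordinates. Since each $f_i$ has mean zero, this expectation vanishes unless $T=T'$, in which case it equals $(\rho(1-\rho))^{|T|}$. The same reasoning works after multiplying by one additional $f_i(y_i)$, provided $|T\cup T'\cup\{i\}|\le n$; we choose $\ell\le (n-1)/2$ to guarantee this.

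In that product, $f_i^2=(1-2\rho)f_i+\rho(1-\rho)$ pointwise, because $\mathbbm{1}_{S_i}^2=\mathbbm{1}_{S_i}$. Consequently $\mathbb{E}[P^2\sum_i f_i]$ reduces to a tridiagonal quadratic form in the vector $(w_k\sqrt{\binom mk}(\rho(1-\rho))^{k/2})_k$, with the following entries:
\begin{itemize}
\item off-diagonal entries coupling $k$ and $k+1$, equal to $\sqrt{\rho(1-\rho)}\sqrt{(k+1)(m-k)}$;
\item diagonal entries equal to $(1-2\rho)k$.
\end{itemize}
The normalization $\mathbb{E}[P^2]$ is the squared norm of this same vector.

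The expected satisfaction ratio is therefore $\rho+\frac1m\lambda_{\max}(A_\ell)$, where $A_\ell$ is the $(\ell+1)\times(\ell+1)$ tridiagonal matrix above. This is precisely the DQI computation done classically. The main obstacle is the asymptotic largest eigenvalue of $A_\ell$ with $\ell=\mu m$. We would handle it by writing $k=tm$ and approximating the maximum Rayleigh quotient over slowly varying vectors concentrated near the endpoint $t=\mu$. This gives
\[\lambda_{\max}/m\to \max_{t\le\mu}\Bigl[(1-2\rho)t+2\sqrt{\rho(1-\rho)t(1-t)}\Bigr]-o(1),\]
using a test vector that is a smooth bump near the maximizer, with $O(1/\sqrt m)$ error.

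For $\mu+\rho\le 1$ the function above is increasing on $[0,\mu]$, and a direct computation gives
\[\rho+(1-2\rho)\mu+2\sqrt{\rho(1-\rho)\mu(1-\mu)}=\bigl(\sqrt{\mu(1-\rho)}+\sqrt{\rho(1-\mu)}\bigr)^2.\]
This is the formula in \eqref{eq:scl}. For $\mu+\rho\ge 1$ the maximizer $t^*=1-\rho$ lies in $[0,\mu]$ and gives the value $1$. Since a random $x$ drawn from this distribution achieves the bound in expectation, some $x$ achieves it. The delicate points are checking the $f_i^2$ reduction and the $|T\cup T'\cup\{i\}|\le n$ bookkeeping, and making the lower bound on the eigenvalue rigorous; only a lower bound is needed, and a test vector supplies one.
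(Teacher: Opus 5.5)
Your proof is correct and is essentially the paper's discrepancy-based re-proof (its third proof), in the general-$\rho$ form set up in Section~4 and taken with $\delta=0$. Concretely, your $e_k$ are the paper's $q_k$ up to the factor $(\rho(1-\rho))^{k/2}$, your reduction $f_i^2=(1-2\rho)f_i+\rho(1-\rho)$ is its identity $g_i^2=1+\beta g_i$, the tridiagonal form is its three-term recursion, and your bump near $k=\ell$ is its window choice $u_k\propto\binom{m}{k}^{-1/2}$ on $[\ell-\sigma,\ell]$ used to compute the $t=0$ terms. The differences are minor: you get orthogonality directly from $n$-wise independence rather than from the Fourier formula for $\mathbb{E}[q_t]$, and you lower-bound $\lambda_{\max}$ with a test vector where the $\rho=1/2$ version in Section~3 identifies the exact optimum with the largest Kravchuk root.
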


A natural question is whether there is a classical, combinatorial proof of this classical, combinatorial statement (rather than a proof-by-quantum-algorithm).
The answer is yes, and 
we begin in \cref{sec:re-proof} with two re-proofs of \cref{prop:dqi-on-opi}, for the special case that $\rho = 1/2$.\footnote{These proofs can be generalized to $\rho \in (0,1)$ with ideas from \cref{sec:improvements}, but since the point of these re-proofs is their relative simplicity, we consider only the $\rho = 1/2$ case in \cref{sec:re-proof}.}  We describe these below.  

In \cref{sec:moments-proof}, we give a very succinct re-proof of \cref{prop:dqi-on-opi} using known results on the \emph{moments problem} (see \cref{sec:moments}).  This proof bypasses the Fourier analysis and coding theoretic arguments in \cite{dqi}. The key idea is that for a uniform random solution $x \in \mathbb{F}_q^n$, $s(x)$ has the same first $n$ moments as $Z/m$ where $Z\sim \on{Bin}(m, 1/2)$; this follows from the fact that the underlying matrix $B$ is the generator matrix of an MDS code.  
Known results on the {moments problem} then say something about the support of the random variable $s(x)$ (over the randomness of $x \in \mathbb{F}_q^n$).  In more detail, \cref{thm:cms} implies that the support of any distribution with these moments must interlace the support of a distribution supported on the roots of \emph{Kravchuk polynomials}.  This implies that the maximum value that $s(x)$ can take on is at least the largest of these roots. This turns out to be exactly the semicircle law, $\on{SCL}_{1/2}(\mu)-o(1)$.

While the proof in \cref{sec:moments-proof} is very concise, we did not see a direct way to improve it to obtain our main results.
In \cref{sec:discrepancy}, we give a second, discrepancy-based re-proof of \cref{prop:dqi-on-opi}, which is more similar to the analysis of DQI in~\cite{dqi}.  While this second re-proof is less concise, understanding why it, our first re-proof, and the original DQI analysis all get stuck at the same place will motivate the main analytical framework that is the backbone of our improved results.  The idea of our second re-proof is as follows.

Define the \emph{$k$-wise discrepancy} for a solution $x$ by
\begin{equation}
    q_k(x):= \sum_{S\in \binom{[m]}{k}} \prod_{i\in S}f_i(\langle b_i, x\rangle) \qquad\text{where}\qquad f_i(a) \coloneqq \begin{cases}
1 &\text{if }a\in S_i\\
-1 &\text{otherwise}
\end{cases}.
\end{equation}
For $\sigma \ll \ell$, DQI samples solution $x$ to Max-LINSAT with probability
\begin{equation} 
\label{eq:dqi-Pu-1}
\mb{P}_u(x) \propto \left(\sum_{k=0}^\ell u_k q_k(x)\right)^2\quad\text{where}\quad u_k\coloneqq\binom{m}{k}^{-1/2}\mbf{1}\{\ell - \sigma\le k\le \ell\}.
\end{equation}
To lower bound $\mb{E}_{\mb{P}_u}[s(x)]$ upon expanding $\mb{P}_u(x)$, we need two key properties of $q_k$.
\begin{enumerate}
 \item Let $C^\perp_t\coloneqq \{y\in \mb{F}_p^{m}:|y|=t, B^\top y=0\}$. With expectation taken uniformly over $\mb{F}_p^n$,
    \begin{equation}\label{eq:prop-1}
        \mb{E}[q_t(x)]=\sum_{y\in C^\perp_t} \prod_{i:y_i\ne 0} \wh{f}_i(y_i).
    \end{equation}
    In particular, $\mb{E}[q_0(x)]=1$ as $C^\perp_0=\{0\}$, and for $0<t<d^\perp=n+1$, $C^\perp_t=\emptyset$, so $\mb{E}[q_t(x)]=0$.
\item Let $N(k_1, \ldots, k_r;t)$ be the number of subsets $T_1, \ldots, T_r$ of $[m]$ where $|T_i|=k_i$ and the set of elements in an odd number of $T_i$'s is exactly $[t]$. With expectation taken uniformly over $\mb{F}_p^n$,
\begin{equation}\label{eq:prop-2}
\mb{E}\left[q_{k_1}(x) \cdot \cdots \cdot q_{k_r}(x)\right] = \sum_{t=0}^{m} N(k_1, \ldots, k_r;t) \mb{E}[q_t(x)].
\end{equation}
\end{enumerate}
These properties are listed as \cref{lem:balanced-fourier,lem:key-fourier}, and \cref{prop:bal-prod,prop:gen-prod,prop:q-triple-rec}, respectively. Using these, we arrive at our main expansion of expected satisfaction ratio: if $2\ell+1<d^\perp$, then
\begin{equation}\label{eq:naive-expansion}
\mb{E}_{x\sim \mb{P}_u} [s(x)]  = \frac{1}{2}+\frac{1}{2m}\frac{\sum_{k, k'=0}^\ell u_ku_{k'}\mb{E}[q_k(x)q_{k'}(x)q_1(x)]}{\sum_{k, k'=0}^\ell u_ku_{k'}\mb{E}[q_k(x)q_{k'}(x)]}
 = \frac{1}{2}+\frac{1}{2m}\frac{\sum_{k, k'=0}^\ell u_ku_{k'}N(k,k',1; 0)}{\sum_{k, k'=0}^\ell u_ku_{k'}N(k,k'; 0)}.
\end{equation}
Plugging in the definitions of $N$ and $u$ gives $\mb{E}[s(x)]\ge \on{SCL}_{1/2}(\ell/m)-o(1)$, recovering the semicircle law. These proofs of \cref{prop:dqi-on-opi} lead to the following natural question:
\begin{question}
Why do the three proofs (DQI, moments-based re-proof, discrepancy-based re-proof) all get stuck at $\on{SCL}_{1/2}(\mu)$?
\end{question}
Restricting \cref{eq:prop-1} to the range $0\le t<d^\perp$ is equivalent to observing $m\cdot s(x)$ has matching moments as $\on{Bin}(m, 1/2)$ of order up to $d^\perp-1=n$, up to a change of basis given by $q_k$ and $u$. In particular, from the moments problem we know there is a distribution $Z$ with such moments whose maximum is exactly the semicircle law, but none of the three approaches can differentiate distributions of $Z$ and $m\cdot s(x)$. This barrier is further discussed in \cref{sec:motivation}.

This motivates what a solution $x$ must look like to beat the semicircle law: we must control \cref{eq:prop-1} for $t\ge d^\perp$, where $\mb{E}[q_t(x)]\ne 0$. In particular, in \cref{sec:improvements}, we will sample $x\sim \mb{P}_u$ except we take cut-off $\ell = (\mu+\delta)m >d^\perp$ for some $\delta >0$. Then, in \cref{sec:expansion}, we have analogous to \cref{eq:naive-expansion} that
\begin{equation}\label{eq:better-expansion}
\mb{E}_{\mb{P}_u}[s(x)]
=\frac{1}{2}+\frac{1}{2m}\frac{\sum_{k, k'=0}^\ell u_ku_{k'}N(k,k',1; 0)+\sum_{t=d^\perp}^m \mb{E}[q_t(x)]\sum_{k, k'=0}^\ell u_ku_{k'}N(k,k',1; t)}{\sum_{k, k'=0}^\ell u_ku_{k'}N(k,k'; 0)+\sum_{t=d^\perp}^m \mb{E}[q_t(x)]\sum_{k, k'=0}^\ell u_ku_{k'}N(k,k'; t)}.
\end{equation}
where the first terms of the numerator and denominator correspond to $t=0$ and there are correction terms for $t\ge d^\perp$. If we ignore the correction terms, we see the improvement from $\on{SCL}_{1/2}(\mu)$:
\begin{equation}
\mb{E}_{\mb{P}_u}[s(x)] \approx \on{SCL}_{1/2}(\mu+\delta) > \on{SCL}_{1/2}(\mu).
\end{equation}
Therefore, it suffices to show the correction terms in \cref{eq:better-expansion} are exponentially small for every $t\in [d^\perp, 2\ell]$. In \cref{sec:expansion}, we compute by the definition of $N$ and \cref{eq:EFG} that 
\begin{equation}\label{eq:E-perfect}
\sum_{k, k'=0}^\ell u_ku_{k'}N(k,k'; t)\le e^{mE(\mu, \delta)+o(m)},
\end{equation}
and similarly for $N(k, k', 1;t)$ term in the numerator.  This leaves us with having to control the terms $\mathbb{E}[q_t(x)]$.

\subsubsection{Connection to leakage-resilient secret sharing}
To control $\mb{E}[q_t(x)]$, we observe in \cref{sec:step-3} a surprising connection to \emph{local leakage resilience} of Shamir secret sharing. By \cref{eq:prop-1}, we rewrite \begin{equation}\label{eq:balanced-llr}
\left|\mb{E}[q_t(x)]\right|=  2^m \left|\sum_{y\in C^\perp_t} \prod_{i=1}^{m}\wh{\mbf{1}_{S_i}}(y_i)\right|.
\end{equation}
It turns out that this is exactly the sort of expression that has been controlled in the literature on local leakage resilience in secret sharing.
While the literature typically considers \emph{Shamir sharing}~\cite{shamir}---which corresponds to OPI---here we discuss the more general \emph{Massey secret sharing schemes}~\cite{massey01}---which correspond to MDS Max-LINSAT---to better match the notation in the rest of this section.

To share a secret $s$ among $m$ parties, Massey's scheme (over a finite field $\mathbb{F}_p$, with respect to an MDS code $C \subseteq \mathbb{F}_p^m$ of dimension $n$) chooses a random codeword $c \in C$ so that $c_0 = s$, and gives the $i$'th symbol $c_i$ to party $i$.  Shamir's scheme is the special case when $C$ is a Reed-Solomon code.  

By the MDS property, any $n$ parties can recover the secret, while any $n-1$ learn nothing; this is the desired guarantee in (threshold) secret sharing.  But what happens if more than $n$ parties leak a single bit?  
Formally, we say that a scheme is \emph{one-bit local-leakage resilient} if the following holds.  Suppose that each party can leak a single bit, either $\pm 1$; this results in a \emph{leakage transcript} $L(s) \in \{\pm 1\}^m$, which is a random variable, whose distribution depends on $s$.  Then, for any distinct secrets $s \neq s' \in \mathbb{F}_p$, the total variation distance $d_{\mathrm{TV}}(L(s), L(s'))$ should be small, at most $2^{-\Omega(m)}$.

Over extension fields, it turns out that Shamir's scheme is \emph{not} one-bit leakage resilient~\cite{gw,tyb18}.  However, over prime fields, it turns out that it is, at least for high enough rates $n/m$.  
In more detail, let $L$ denote the leakage transcript under a sharing of a uniformly random secret, and let $L'$ denote the corresponding transcript in the case where all the parties have an independent uniformly random share (that have nothing to do with any secret). It turns out that, to establish one-bit leakage resilience, it is sufficient to show that the total variation distance $d_{\mathrm{TV}}(L,L') \leq 2^{-\Omega(m)}$ is negligibly small.  It is shown in \cite[Lemma 4.14]{bdir} using Fourier analysis that 
\begin{equation} \label{eq:BDIR} d_{\mathrm{TV}}(L,L') = \frac{1}{2}  \sum_{\ell \in \{\pm 1 \}^{m} } \left| \sum_{y \in C^\perp \setminus \{0\} } \prod_{i=1}^{m} \hat{\mathbf{1}}_{S_i^{(\ell_i)}} (y_i)  \right|, \end{equation}
where $S_i^{(\ell_i)}$ is the set of shares for party $i$ consistent with the leaked bit $\ell_i$.
The expression \cref{eq:BDIR} looks like \cref{eq:balanced-llr}, except that it is summed up over all possible transcripts $\ell \in \{\pm 1\}^m$.  That is, 
the Fourier sum in \cref{eq:balanced-llr} is essentially the \emph{per-transcript leakage} of Massey's scheme, and this quantity is bounded in \cite{bdir,mpsw21,mnpw22} in the context of local leakage resilience.

We could use these existing bounds in an off-the-shelf way with the framework described in \cref{sec:overview1}.  Using the best bound on \cref{eq:balanced-llr} from \cite{mnpw22} already yields some improvement on the semicircle law; this off-the-shelf improvement is plotted as the green curve in \cref{fig:balanced}, and is stated as \cref{thm:main-green}.  The resulting bound on the saturation threshold $\mu_1(1/2)$ is exactly the same as the leakage resilience threshold in \cite{mnpw22}.

In this work, we improve on existing techniques to obtain our improved bounds, \cref{thm:main-avg} and \cref{thm:main-best}.  To explain our improvements, we first explain the approach of prior work.

The idea of \cite{bdir,mpsw21,mnpw22} is to bound \cref{eq:balanced-llr} by splitting the coordinates $[m]$ into two groups $L$ and $R$ of size $k\coloneqq \dim(C^\perp)=m-n$ and a remainder set $B$ of size $m-2k$. By the MDS property of $C^\perp$, the coordinates in $L$ (and $R$) for $y\in C^\perp$ is in bijection with $\mb{F}_p^k$.
By Cauchy-Schwarz,
\begin{equation}\label{eq:perfect-bdir}
\begin{aligned}
\left|\sum_{y\in C^\perp_t} \prod_{i=1}^{m}\wh{\mbf{1}_{S_i}}(y_i)\right| & \le \left(\sum_{y\in C^\perp_t} \prod_{i\in L}\left|\wh{\mbf{1}_{S_i}}(y_i)\right|^2\right)^{1/2}\left(\sum_{y\in C^\perp_t} \prod_{i\in R}\left|\wh{\mbf{1}_{S_i}}(y_i)\right|^2\right)^{1/2}\max_{y\in C^\perp_t} \prod_{i\in B}\left|\wh{\mbf{1}_{S_i}}(y_i)\right|
\\  & \le \left(\sum_{y\in C^\perp} \prod_{i\in L}\left|\wh{\mbf{1}_{S_i}}(y_i)\right|^2\right)^{1/2}\left(\sum_{y\in C^\perp} \prod_{i\in R}\left|\wh{\mbf{1}_{S_i}}(y_i)\right|^2\right)^{1/2}\max_{y\in C^\perp_t} \prod_{i\in B}\left|\wh{\mbf{1}_{S_i}}(y_i)\right|
\\ & \le \left(\prod_{i\in L\cup R} \Vert \wh{\mbf{1}_{S_i}}\Vert_2\right)\cdot \max_{y\in C^\perp_t} \prod_{i\in B}\left|\wh{\mbf{1}_{S_i}}(y_i)\right|.
\end{aligned}
\end{equation}
Now, the key observation is that  $\wh{\mbf{1}_{S_i}}(0)=1/2$ and $\left|\wh{\mbf{1}_{S_i}}(y_i)\right|\le 1/\pi+O(1/p)$ if $y_i\ne 0$. Therefore, we gain for every coordinate in $I\coloneqq \left\{ i\in B:y_i\ne 0\right\}$. Since $y$ has Hamming weight $t\ge d^\perp$, then $|I|\ge t-|L|-|R|$. This leads to the bound given by \cite{mnpw22}. 

To further improve on this bound and prove our main theorems in the balanced case, we split the coordinates into $[m]=L\cup R\cup B$ in $J$ many ways (indexed by $j$), so that each $y$ has larger support $I$ in at least one $B_j$ than the union bound $t-|L|-|R|$ above. We split the sum over $y$ in \cref{eq:perfect-bdir} into $J$ many terms corresponding to the splits, and bound in \cref{thm:step-3} similar to \cref{eq:perfect-bdir}.
\begin{itemize}
    \item If $J=m^{O(1)}$, we can guarantee in \cref{lem:avg-buckets} that $|I|\ge t|B|/m$. Then, we bound \cref{eq:balanced-llr} by $e^{mF(\mu)+o(m)}$. Combined with \cref{eq:E-perfect}, we obtain \cref{thm:main-avg}.
    \item If we tolerate $J$ exponential in $m$ to guarantee $|I|\ge \lambda m$ and optimize this trade-off, then we bound \cref{eq:balanced-llr} by $e^{mG(\mu)+o(m)}$ in \cref{lem:best-buckets}. Combined with \cref{eq:E-perfect}, we obtain \cref{thm:main-best}.
\end{itemize}
It is natural to ask whether these improvements can feed back into the leakage-resilient secret sharing literature.  The answer is yes and no: The answer is yes because our bounds do imply improved bounds on the Fourier proxy \cref{eq:BDIR}, which does lead to improvements on the results of \cite{bdir,mpsw21,mnpw22}. 
However, the answer is no because subsequent work~\cite{kk23,k24,n24} has shown how to obtain better bounds on leakage resilience without going through the Fourier proxy \cref{eq:BDIR}, and our techniques do not beat those improvements.  We discuss this more in \cref{sec:llr}. 

It is also natural to ask whether the fact that RS codes are \emph{not} leakage-resilient over extension fields implies negative results for \cref{q:main} over extension fields.  While we can show some weak negative results (see the discussion after \cref{question:perfect-sol} in \cref{sec:discrepancy}), we leave this as an interesting future direction.

To summarize, by exploiting the connection to local leakage resilience, and by further improving techniques from that literature, we show the terms in \cref{eq:better-expansion} with $t\ge d^\perp$ are exponentially small, thereby improving on the semicircle law. 

\subsubsection{Dealing with the details}
Finally, we remark on some subtleties we omitted in the case of a general $\rho$ (the setting of \cref{thm:main-biased}) and the balanced case where $\rho \sim 1/2$ but is not exactly $1/2$. These are spelled out in \cref{sec:framework,sec:expansion}.
The key issue is that the $\pm 1$-valued $f_i$ no longer have $\wh{f}_i(0)=0$, so \cref{eq:prop-1} does not hold. Instead, we work with a linear transformation $g_i$ and define discrepancy $q_k$ based on $g_i$.
   Then, $g_i$ is not $\pm 1$-valued, so \cref{eq:prop-2} no longer holds. Instead, we obtain analogous control to bound $\mb{E}[q_k(x)q_{k'}(x)]$ and $\mb{E}[q_k(x)q_{k'}(x)q_1(x)]$ based on the same idea of considering the symmetric difference, and replacing the quantity $N$ with a weighted version $N_\rho$. 
  
  For the balanced case where $\rho\sim 1/2$, $N_\rho \approx N$ up to lower order terms, and we recover the above analysis up to lower order terms. This is given in \cref{sec:balanced-proofs}.
For the general case $\rho$, the function $E_\rho$ involves itself an optimization, shown in \cref{thm:main-biased}. We adapt the Fourier analysis and the polynomial $J$ control idea adapted to the general case of $|S_i|\sim \rho p$, and obtain \cref{thm:main-biased} in analogy with \cref{thm:main-avg}. This is given in \cref{sec:general-proofs}.

\subsection{Related Work}
\label{sec:related-works}
In this section, we put our results and techniques in the context of related work.  
\subsubsection{DQI and algorithms based on Regev's reduction}\label{sec:dqi-related}
OPI has been studied recently as a target of quantum algorithms, including DQI~\cite{dqi} and related algorithms also based on ideas from Regev's reduction~\cite{regevReduction}, including~\cite{chailloux1,chailloux2,khattar2025verifiable,rosmanis2026nearly}.

First, we mention the relationship between our re-proofs of the semicircle law and existing work.  As discussed above, the analysis of DQI, as well as our re-proofs of \cref{prop:dqi-on-opi} that only use the moments of $s(x)$ of order up to $n$, all encounter the semicircle law barrier.  This is because the proofs boil down to constructing a polynomial of degree at most $n$ to certify that $\max s(x)$ cannot be too small. Applying this polynomial certificate construction to coding theory dates back to \cite{tie} on the covering radius problem, where the bound is also given by a semicircle law. Indeed, it is observed in \cite{dqi-complexity} that DQI makes this approach algorithmic, that sampling a state of the form \cref{eq:dqi-Pu-1} with the best weights $w$ can be thought of as the construction of such a polynomial certificate. In \cite{dqi-complexity}, further connections of the DQI algorithm and \cite{tie} via the MacWilliams identities are discussed and DQI is shown to be simulated in a low level of the polynomial hierarchy, ruling out certain hardness arguments.

Next, we discuss results beyond the semicircle law.  To the best of our knowledge, there were previously no results, algorithmic or otherwise, that yielded an improvement on \emph{worst-case} instances.  However, there are average-case results, both for the case of uniformly random lists~\cite{dqi} and for a more structured random ensemble~\cite{chailloux1,chailloux2} described below.

The idea of choosing $\ell > d^\perp/2$ to gain further performance improvement is already present in~\cite[Section 10]{dqi}. Motivated by empirical performance of belief propagation on LDPC codes, the authors observe over $\mb{F}_2$ that DQI has strong performance guarantees in the \emph{average} case where each singleton input list $S_i$ is a uniform random bit, provided the decoder of $C^\perp$ works beyond $d^\perp/2$ with small failure probabilities. However, as presented in \cite{dqi}, this approach applies only over $\mathbb{F}_2$, and thus not to the OPI problem.

The idea of choosing $\ell > d^\perp/2$ is also present in \cite{chailloux1, chailloux2}.  DQI can be seen as an instantiation of Regev's reduction~\cite{regevReduction}.  The works \cite{chailloux1,chailloux2} combine similar ideas of Regev's reduction with soft-decoders to give quantum algorithms that improve on DQI on average for the following structured ensemble: Lists $\tilde{S}_i \subseteq \mathbb{F}_p$ are chosen in a worst-case way, and the final input lists $S_i$ are given by $S_i = \tilde{S}_i + e_i$, where $e_i \in \mb{F}_p$ are independent uniformly random shifts. This is essentially a hybrid between the worst-case and the (uniform) average-case considered in \cite{dqi}. In this hybrid case and assuming $\rho =1/2$, the strongest algorithm \cite[Algorithm 4]{chailloux2} uses the Koetter-Vardy soft decoder and finds a solution with satisfaction ratio at least $1-o(1)$ whenever the rate is at least $3/4$. As noted in \cref{rem:chailloux}, our analogous threshold in the worst case is $2\mu_1(1/2)\le 0.7496 < 3/4$ from \cref{cor:feature-cor}. Thus, our result gives a small quantitative improvement over the result of \cite{chailloux2} in this parameter regime, even ignoring the difference between worst-case and the hybrid model.  On the other hand, our result is not algorithmic, while that of \cite{chailloux2} is; and moreover in the low-rate regime the results of \cite{chailloux2} are quantitatively stronger than our results.  

However, in some sense the results described above are orthogonal to our work.  We focus on existential guarantees for worst-case input lists $S_i$.  In both the (uniformly) average case and the hybrid model discussed above, a simple first-moment argument shows the existence, in expectation over the inputs, of a perfect solution that satisfies every constraint. From this perspective, the difficulty we tackle is to extend the argument to the worst-case input lists $S_i$. We expand on this point in \cref{question:algorithmic}, where we see the obstacle to making our result algorithmic disappears in the hybrid case setup of \cite{chailloux1,chailloux2} and when \cite[Section 10]{dqi} works over $\mb{F}_2$. 

\subsubsection{Local Leakage Resilience}
\label{sec:llr}
As mentioned above, our key technical insights are inspired by the literature on local leakage resilience in secret sharing.  
We briefly summarize the relevant literature.  As above, we use the language of Massey's secret sharing scheme (rather than Shamir) to better match our notation.

We have already defined one-bit local leakage-resilience above.
The main conjecture in this setting, first stated for Shamir's scheme, is the following.
\begin{conjecture}[{\cite[Conjecture 1.3]{bdir}}]
\label{conj:llr-sss}
The Massey scheme defined with an MDS code $C$ over a prime field $\mb{F}_p$ is one-bit local leakage resilient for any positive rate.
\end{conjecture}
For lower bounds, \cite{ns} shows the statement of \cref{conj:llr-sss} is false without the positive rate assumption, namely Shamir secret sharing is not leakage resilient with rate $O(1/\log n)$. 

For upper bounds, \cite{bdir,mpsw21,mnpw22} proved \cref{conj:llr-sss} holds for rate at least $0.78$, using the approach outlined in \cref{sec:overview}. In more detail, they first bound \cref{eq:balanced-llr}, the term that we also want to bound in the balanced case.  Then, by pairing up the transcripts $\ell$ in the outer sum of \cref{eq:BDIR}, the work \cite{mnpw22} argues that the right side is maximized when the leakage functions are balanced (that is, when the corresponding sets $S_i$ have size $\sim p/2$).  Putting these together proves \cref{conj:llr-sss} for rates at least $0.78$.

We make a few remarks about the relationship between existing work on leakage-resilient secret sharing and our work on OPI.  

First, as discussed in \cref{sec:overview}, existing bounds on \cref{eq:balanced-llr} give a weakened version of our main results in the balanced case (see \cref{thm:main-green}).  However, while it is sufficient for \cite{mnpw22} to work in the balanced case, we would like a proof for general $\rho$, which adds another layer of complexity to our improvements.

Second, as mentioned in \cref{sec:overview}, one might hope that combining the transcript pairing idea with our improved control of \cref{eq:balanced-llr} (\cref{lem:best-buckets}), that we could give improved results for secret sharing.  What we get is the following proposition.
\begin{proposition}\label{prop:our-llr}
    The Massey scheme on MDS code $C$ over a prime field $\mb{F}_p$ is one-bit local leakage resilient for rate at least $0.7496$.
\end{proposition}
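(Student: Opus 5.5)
We will follow the route of \cite{bdir,mpsw21,mnpw22}. The only change is that the per-transcript bound is replaced by our improved bound from \cref{lem:best-buckets}.

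First, by the standard reduction, it suffices to show $d_{\mathrm{TV}}(L,L')\le 2^{-\Omega(m)}$. Here $L$ is the leakage transcript under a sharing of a uniformly random secret, and $L'$ is the transcript on independent uniform shares; this reduction is exactly as in \cite{bdir}. Passing to the punctured or shortened code on the $m$ parties changes $n,m$ by $O(1)$, which is harmless asymptotically. We then invoke the Fourier expression \cref{eq:BDIR} \cite[Lemma 4.14]{bdir}. We split $C^\perp\setminus\{0\}$ by Hamming weight $t\ge d^\perp=n+1$, which gives
\[
2\, d_{\mathrm{TV}}(L,L') \le \sum_{\ell\in\{\pm1\}^m}\sum_{t=d^\perp}^{m}\Bigl|\sum_{y\in C^\perp_t}\prod_{i=1}^m \wh{\mathbf 1}_{S_i^{(\ell_i)}}(y_i)\Bigr| .
\]
For each fixed $\ell$, the leakage functions define sets $S_i^{(\ell_i)}$ of arbitrary sizes, and these partition $\mb F_p$ in pairs.

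Second, we apply the transcript-pairing argument of \cite{mnpw22} to reduce to the balanced case. For each coordinate $i$ we group the two transcripts $\ell_i=\pm1$. The contribution is then governed by $|\wh{\mathbf 1}_{S_i^{+}}(y_i)|+|\wh{\mathbf 1}_{S_i^{-}}(y_i)|$. Since $S_i^-=\mb F_p\setminus S_i^+$, for $y_i\neq 0$ this equals $2|\wh{\mathbf 1}_{S_i^+}(y_i)|$, and for $y_i=0$ it equals $1$. \cite{mnpw22} shows that the resulting quantity is maximized, up to $1+o(1)$ factors per coordinate, when all sets are balanced, i.e.\ $|S_i|\sim p/2$. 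This leaves a sum of the form $2^m\max_{t}\bigl|\sum_{y\in C^\perp_t}\prod_i \wh{\mathbf 1}_{S_i}(y_i)\bigr|$ with balanced $S_i$, which is precisely \cref{eq:balanced-llr}, up to a factor $m$ from the sum over $t$. We must make sure the bucketing proof of \cref{lem:best-buckets} is compatible with this pairing. We expect this to go through, because \cref{lem:best-buckets} bounds the absolute-value products coordinatewise via Cauchy--Schwarz, just as \cref{eq:perfect-bdir} does, and the pairing argument in \cite{mnpw22} only uses such coordinatewise absolute bounds.

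Third, we apply \cref{lem:best-buckets}, which bounds \cref{eq:balanced-llr} by $e^{mG(\mu,\lambda)+o(m)}$ uniformly over $t\ge d^\perp$. Since there is no $E$-term here, this gives $d_{\mathrm{TV}}(L,L')\le m\, e^{m\min_\lambda G(\mu,\lambda)+o(m)}$. The statement thus reduces to checking that $G(\mu,\lambda_\star)<0$ for all $2\mu\ge 0.7496$. We also need $G$ (or its minimum) to be monotone in $\mu$ beyond this point; failing that, we would use the fact that leakage resilience at a rate implies it at higher rates by puncturing/restriction arguments. We expect the main obstacle to be this numerical verification together with reconciling the threshold. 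Notice that $0.7496$ coincides with the saturation threshold in \cref{cor:feature-cor}, where the condition is $E(\mu,1/2-\mu)+G<0$. So we will check that at $\delta=1/2-\mu$ one has $E(\mu,1/2-\mu)=0$: indeed $(1/2)H(2\mu)+(1/2)H(2\mu)-H(2\mu)=0$. Hence the two conditions coincide and the same numerics give $0.7496$.
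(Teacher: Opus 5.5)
Your proposal is correct and follows the route the paper indicates: the transcript-pairing reduction of \cite{mnpw22} combined with the bucketed Cauchy--Schwarz bound of \cref{lem:best-buckets}, with the threshold $0.7496$ equal to the saturation threshold because $E(\mu,1/2-\mu)=0$. The compatibility you flag does hold: after moving the absolute value inside the $y$-sum and pairing, the per-coordinate weights ($1$ at $y_i=0$, and $2|\wh{\mbf{1}_{S_i^+}}(y_i)|$ otherwise) have squared $\ell^2$-mass $1+4\rho_i(1-\rho_i)\le 2$ and nonzero-frequency size at most $2/\pi+O(p^{-2})$, so the argument of \cref{thm:step-3} gives the same bound $J\,2^{m-n}(2/\pi)^{\lambda m}$ as in the balanced case.
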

\cref{prop:our-llr} gives the state-of-the-art bound on the Fourier proxy \cref{eq:BDIR}, and indeed improves on the $0.78$ bound of \cite{mnpw22}.  However, it does \emph{not} beat the state-of-the-art bounds on \cref{conj:llr-sss}, which do not use the Fourier proxy \cref{eq:BDIR}.  
The motivation to break with the Fourier proxy is the observation in \cite{n24} that any approach bounding \cref{eq:BDIR} hits a barrier at rate $1/2$ via an explicit quadratic residue example. Thus, our results on the saturation threshold in the balanced case, as well as \cref{prop:our-llr}, suffer from the same barrier. 

To go beyond this barrier, \cite{n24} employs higher order Fourier analysis using Gowers $U^s$ norms to show \cref{conj:llr-sss} holds with high probability over random leakages. This is analogous to the analyses of average-case $S_i$ (and the more structured hybrid ensemble) discussed in \cref{sec:dqi-related}.  In a breakthrough, \cite{kk23,k24} apply Fourier analysis over the transcripts $\ell\in\{-1, 1\}^{m}$ to instead bound total sum via an $L^4$-Fourier proxy and showed \cref{conj:llr-sss} holds for rate at least $0.668$. 
We make two remarks regarding this improvement.  The first is that, unfortunately, these methods are not applicable to the OPI problem, since we really do want to bound \cref{eq:balanced-llr}, rather than using it as a proxy to bound something else.
The second is that, under the Fourier proxy framework (bounding \cref{eq:BDIR}), it is clear by pairing transcripts that the balanced leakages give the worst case \cite{mnpw22}. However, \cite{kk23,k24} improves exactly by replacing the transcript pairing with some Boolean Fourier analysis, so the balanced leakage $S_i$ case may no longer be the worst case. For example, they consider cases where $S_i$ are not all of the same size.

\subsubsection{Other work on OPI}\label{sec:OPI}
As mentioned at the beginning of the paper, OPI and related problems arise in many domains.  However, to the best of our knowledge, none of these (except the work on DQI and related algorithms already discussed in \cref{sec:dqi-related}) are concerned with \cref{q:main}.  We briefly discuss two related areas below: list-recovery of Reed-Solomon codes, and OPI as a cryptographic assumption.

In the coding theory literature, OPI arises in the context of \emph{list-recovery} of Reed-Solomon (RS) codes.  We say that a code $C \subseteq \mathbb{F}^m$ is \emph{$(\alpha, \ell, L)$-list-recoverable} if, for any $S_1, \ldots, S_m \subseteq \mathbb{F}$ each of size at most $\ell$, there are at most $L$ codewords $c \in C$ so that $c_i \in S_i$ for at least a $\alpha$ fraction of $i \in [m]$.  The algorithmic problem is to efficiently output all at-most-$L$ such codewords.  In the language of OPI, when the code $C$ is a RS code, the list-recovery problem is to return all polynomials $Q$ of degree less than $n$ so that the satisfaction ratio $s(Q)$ is at least $\alpha$. 

List-recovery of RS codes (and related codes) has been extensively studied; for example, the celebrated Guruswami-Sudan algorithm~\cite{gs} list-recovers RS codes up to a limit on $\alpha$ known as the \emph{Johnson bound}.  The work \cite{GR05} showed that, for full-length RS codes over extension fields, the Johnson bound is the correct limit, in the sense that beyond that there will be an exponential number of codewords that agree a lot with the input lists $S_i$ in the worst case.  More recent work has established that RS codes with \emph{random} evaluation points are list-recoverable all the way up to the information-theoretic limit on $\alpha$, beyond the Johnson bound~\cite{BCDZ25}.  

If $L$ is small, an efficient list-recovery algorithm for RS codes solves OPI: simply run the algorithm with the input lists $S_1, \ldots, S_m$, and iterate through the list of at most $L$ options to find the best solution.  However, the OPI problem is most interesting (from the perspective of potential quantum advantage) when the list size $L$ is very large.  In these parameter regimes, list-recovery algorithms cannot be efficient (as it would take too long to output the list).  Moreover, all existing algorithms we are aware of do not ``fail gracefully'' in this parameter regime to return a decent solution to OPI.  For example, the Guruswami-Sudan algorithm, which works by interpolating a polynomial to ``explain'' potential solutions, fails to interpolate an appropriate polynomial beyond the Johnson bound.

OPI has been studied in the cryptography literature under the name \emph{noise polynomial interpolation} (or \emph{noisy polynomial reconstruction}).  For example, \cite{np99} proposed a version of OPI as a hardness assumption; later, \cite{bn00} broke this assumption via a lattice attack.  This setting differs from ours for two main reasons.  The first is that the input lists are taken to be random, not worst-case.  The second is the the parameter regime is the ``planted solution'' regime: the parameters are such that randomly generated lists will typically have no good solutions, and the computational challenge is to distinguish a random instance from one where a good solution is planted.  In contrast, in the OPI problem we are more generally interested in finding a good solution out of potentially many good solutions.

\subsection{Discussion and Open Problems}\label{sec:discussion}
We discuss a number of open problems and future directions.
The immediate open question is to further improve the guarantees on the satisfaction ratio attainable in the worst case. We highlight the saturation threshold $\mu_1(\rho)$ for its connection with the local leakage resilience threshold in \cref{conj:llr-sss}.
\begin{question}
What is the true value of $\mu_1(\rho)$ and in particular $\mu_1(1/2)$?
\end{question}
We know $2\mu_1(1/2)\le 0.7496$. By combining our techniques in \cref{thm:main-biased} for general $\rho$ with our techniques from \cref{thm:main-best} with the best Fourier control for $\rho \sim 1/2$, we believe we could improve on \cref{thm:main-biased} in the general $\rho$ case.  We suspect that further improvements are possible. However, our approach meets 
a barrier at rate $1/2$, as discussed after \cref{prop:our-llr}.
For the general $\rho$ case, an immediate question is to remove the defect for $\rho\ge 1/2$ that causes the non-monotonicity in \cref{fig:phase}, since we know $\mu_1(\rho)$ to be decreasing in $\rho$. This is further discussed in \cref{rem:tildeN}. 


\vspace{.5cm}
The second question focuses on the cases above the $\mu_1(\rho)$ threshold, where the asymptotically perfect solutions exist when $\mu\ge \mu_1(\rho)$, i.e. $x$ where $s(x)\ge 1-o(1)$. We wish to remove the $o(1)$.
\begin{question}\label{question:perfect-sol}
For the Max-LINSAT problem with input lists $|S_i|=\rho p$ over MDS codes of rate $2\mu$, if $\mu\ge \mu_1(\rho)$, is there a solution $x$ that satisfies \emph{every} constraint?
\end{question}
When the field size is a prime power but not prime, we have negative results via Reed-Solomon code repair literature. 
See \cref{rem:p_prime} for a more precise discussion of where our approach (and that of leakage resilience literature) breaks down for extension fields.  We note that in the leakage resilience literature, there is a provable difference between extension fields and prime fields, and so that may be the case here as well.

We present some negative evidence for \cref{question:perfect-sol} over $\mathbb{F}_q$.  Recall a result from~\cite{gw} on the repair of Reed-Solomon codes.  In that work, the authors prove that for the full-length Reed-Solomon code $C$ of rate at most $1/2$ over $\mb{F}_{q}$, where $q=2^t$ is a power of $2$, we can construct subspaces $S_i\subset \mb{F}_q$ of co-dimension one for each $i\in\mb{F}_q^\times$, such that if a codeword $c\in C$ satisfies $c_i\in S_i$ for every $i$, then the value of $c_0$ is determined. By choosing $S_0$ to not include this value, we answer \cref{question:perfect-sol} in the negative for these choices of $C$.   Answering \cref{question:perfect-sol} in the affirmative for prime fields would demonstrate an interesting gap between prime and non-prime fields. 

\vspace{.5cm}
The next question focuses on the connection with local leakage resilience that we have exploited.
\begin{question}\label{question:apply-to-llr}
Can our techniques be used to improve on the threshold rate of local leakage resilience of Massey secret sharing schemes, e.g. \cref{conj:llr-sss}?
\end{question}
Recall that \cref{prop:our-llr} partially answers this problem in the positive, if we use the Fourier proxy in \cite{bdir,mpsw21,mnpw22}. However, our methods are not immediately compatible with the state-of-the-art approaches to \cref{conj:llr-sss}, as previously discussed in \cref{sec:related-works}.

\vspace{.5cm}
Lastly, as DQI achieves the semicircle law efficiently with a quantum algorithm, it is interesting to see if our existential results give any algorithmic gains, similar to \cite{chailloux1,chailloux2} but for worst-case input lists.
\begin{question}\label{question:algorithmic}
Can our improvement from the semicircle law for OPI be realized by an efficient quantum algorithm?
\end{question}
For the remainder of the section, we discuss the difficulty of directly trying to sample from $\mb{P}_u$ defined in \cref{eq:dqi-Pu} via the DQI algorithm. We wish to prepare the following \emph{ideal} quantum state 
\[
|I\rangle\coloneqq \sum_{x\in\mb{F}_p^n} \sqrt{\mb{P}_u(x)}\cdot |x\rangle = \sum_{x\in\mb{F}_p^n}\left(\sum_{k=0}^\ell u_k q_k(x)\right) |x\rangle,
\]
since its measurement guarantees high expected satisfaction by our theorems. The idea of \cite{dqi} is to prepare the Fourier transform  of this state, given by\footnote{Here, we absorb the normalizing constants of $|I\rangle$ and $\left|\wh{I}\right\rangle$ into each $u_k$, unlike \cref{eq:dqi-Pu-1}.}
\begin{equation}
\left|\wh{I}\right\rangle = \sum_{k=0}^\ell u_k \sum_{y\in \mb{F}_p^m:|y|=k} \prod_{i:y_i\ne 0} \wh{g_i}(y_i) \left|B^\top y\right\rangle.
\end{equation}
The key step to do so is to uncompute $y\in \mb{F}_p^m$ from $B^\top y$ in superposition. When $\ell <d^\perp/2$, we can solve syndrome decoding problem of $C^\perp$ perfectly with a classical decoder, e.g. Berlekamp-Massey.

We cannot do so to distance $\ell >d^\perp/2$, as some syndromes $s$ are ambiguous, i.e. there are multiple $y\in\mb{F}_p^m$ with $|y|\le \ell$ and $B^\top y=s$. Following ideas similar to \cite{chailloux1,chailloux2}, we could run a list decoder such as the Guruswami-Sudan algorithm \cite{gs}, give up on the hopefully few bad ambiguous $y$'s via post-selection, and aim to prepare a state $|P\rangle$ sufficiently close to $|I\rangle$.  Known results from list-decoding (e.g.~\cite[Theorem 1]{ru}) imply that there will not be many ambiguous~$y$'s.

To see where this idea runs into trouble, let $\mc{D}$ be the $y\in \mb{F}_p^m$ with $|y|\le \ell$ that we decode successfully, so we write
\begin{equation}
\left|\wh{I}\right\rangle = \sum_{y\in \mc{D}} u_{|y|}\prod_{i:y_i\ne 0} \wh{g_i}(y_i) \left|B^\top y\right\rangle +\sum_{y\in \mc{D}^c} u_{|y|}\prod_{i:y_i\ne 0} \wh{g_i}(y_i) \left|B^\top y\right\rangle.
\end{equation}

The first summand can be seen as the projection by $\Pi$ of $\left|\wh{I}\right\rangle$ onto the subspace spanned by syndromes we decoded successfully, and we get a good solution after post-selecting on successful decoding if and only if it has large overlap with $\left|\wh{I}\right\rangle$. Equivalently, we need to show the summands from the set $\mc{A}$ of ambiguous syndromes has small amplitude, i.e.
\begin{equation}\label{eq:fourier-variation}
\left\langle\wh{I}\right|(I-\Pi)\left|\wh{I}\right\rangle
=\sum_{s\in \mc{A}}\left|\sum_{y:|y|\le \ell, B^\top y =s} u_{|y|}\prod_{i:y_i\ne 0} \wh{g_i}(y_i)\right|^2\ll 1.
\end{equation}
This is plausible since the $y$'s that correspond to ambiguous syndromes $s\in \on{A}$ are exactly $\mc{D}^c$, which consists of a $p^{-\Omega(n)}$-fraction of  the entire Hamming ball of radius $\ell$ around $0$ (see \cite[Theorem 1]{ru}). However, the roadblock is that we need to bound the contribution of the bad terms in $\mc{D}^c$, not just the number of them.  
There are two natural ways to attempt to circumvent this roadblock, which unfortunately do not work in our setting.
\begin{enumerate}
    \item First, a naive union bound over $y$ fails: although $\mc{D}^c$ is small compared to $\mc{D}$, its cardinality is $p^{\Omega(n)}$, which overwhelms naive exponential decay bounds we have on the product of Fourier coefficients.  One can do slightly better by breaking apart \cref{eq:fourier-variation} into a sum over all ambiguous syndromes $s$; for each $s$ we can deal with the corresponding sum over $y$ using our machinery that bounds \cref{eq:balanced-llr}. But there are still too many possible syndromes $s$.
    
    \item Second, one might hope to show that the contribution of the bad terms in $\mc{D}^c$ is small relative to the contribution of the good terms in $\mc{D}$.  If the Fourier coefficients were ``flat'', i.e. all have roughly the same magnitude, this would follow from \cite[Theorem 1]{ru}.  
    
    However, the Fourier coefficients $\wh{g_i}(y_i)=\wh{\mbf{1}_{S_i}}(y_i)/\sqrt{\rho(1-\rho)}$ can be as large as $\Omega(1)$ (see \cref{fact:arc-max}), while on average it should be $O(1/\sqrt{p})$ by an $L^2$-norm computation. This huge variation means the tiny fraction of ambiguous $y\in \mc{D}^c$ could contribute significant amplitude to $\left|\wh{I}\right\rangle$, nullifying \cite[Theorem 1]{ru}.
\end{enumerate}
Indeed, the algorithmic results in \cite{chailloux1,chailloux2,dqi} make attempt (2) work by relaxing the conditions on $S_i\subset \mb{F}_p$ so that the Fourier coefficients and amplitudes are flat across different $y$'s.  In more detail:
\begin{itemize}
    \item \cite[Section 10]{dqi} provides guarantees for imperfect decoders over binary codes only. Over $\mb{F}_2$, $S_i$ are singleton sets and $\wh{\mbf{1}_{S_i}}(y_i)=\pm 1/2$, so the amplitudes in $y$ are flat. 
    \item  \cite{chailloux1,chailloux2}, introduce i.i.d. uniform random shifts $e_i\in\mb{F}_q$ to the adversary input lists $S_i$, which smooths out the variation in Fourier coefficients when we consider $\mb{E}_e\left[\wh{\mbf{1}_{e_i+S_i}}(y_i)\right]$.
\end{itemize}
In our setting, with worst-case input lists $S_i$, the state these works can prepare and the ideal state can be quite far apart.  Thus, our approach is currently stuck at this roadblock.
However, we hope that this will eventually be surmountable---this is a roadblock only to one particular approach---and that our techniques will lead to improved algorithms in future work.

\subsection*{Acknowledgements} 
We thank Noah Shutty for helpful conversations, and we thank Ankur Moitra for helpful conversations and for pointing out Tiet{\"a}v{\"a}inen's bound. YS is funded by the NSF Graduate Research Fellowship and the Stanford Graduate Fellowship. MW is partially funded by NSF grants CCF-2231157 and CNS-2321489.

\section{Preliminaries}
\label{sec:prelim}

\subsection{Coding Theory}
\label{sec:coding}
We start with some background on coding theory,  
though with some non-standard parameter naming conventions to match the Max-LINSAT problem and \cite{dqi}.

A \emph{linear code} $C$ of \emph{length} $m$ and \emph{dimension} $\dim(C) = n$ over $\mathbb{F}_q$ is a linear subspace
$C\subseteq \mathbb{F}_q^m$ and dimension $n$. 
For $c\in C$, the \emph{Hamming weight} $|c|$ is the number of
nonzero coordinates of $c$, and the \emph{minimum distance} $d$ of $C$ is the minimum Hamming weight of a non-zero codeword $c\in C$.
Equivalently, $d$ is the minimum Hamming distance between two distinct codewords. We say that $C$ is an $[m,n,d]_q$ code.
We say $B\in \mathbb{F}_q^{m\times n}$ is a \emph{generator matrix} for $C$ if
$C=\{Bx:x\in\mathbb{F}_q^n\}$. Define
\[
C^\perp \coloneqq \{y\in \mathbb{F}_q^m:\langle x,y\rangle =0\,\forall\, x\in C\}
= \{y\in \mathbb{F}_q^m: B^\top y = 0\},
\]
to be the \emph{dual code} of $C$; it has dimension $m-n$. Let
$
C_t\coloneqq\{c\in C:|c|=t\}$ for integer $t\ge 0$.

The \emph{Singleton bound} for a $[m,n,d]_q$ linear code $C$ states that
$
d\le m-n+1
$. Codes meeting this bound are \emph{maximum distance separable} (MDS). We use two standard facts: $C$ is MDS if and only if every set of $n$ rows of a generator matrix $B$ is linearly independent; if $C$ is MDS, then so is $C^\perp$.

Fix $m$ distinct evaluation points $a_1,\ldots,a_m\in \mathbb{F}_q$ and an integer $1\le n\le m$.
The \emph{Reed-Solomon code} on the evaluation set
$\mathbf{a}=(a_1,\ldots,a_m)$ is
\[
\on{RS}_q(\mathbf{a}, n)\coloneqq \{(f(a_1),\ldots,f(a_m)) : f\in \mathbb{F}_q[x],\ \deg f<n\}\subseteq \mathbb{F}_q^m.
\]
Equivalently, $\on{RS}_q(\mathbf{a}, n)$ is generated by the $m\times n$ Vandermonde matrix
\[
B=
\begin{pmatrix}
1 & a_1 & a_1^2 & \cdots & a_1^{n-1}\\
1 & a_2 & a_2^2 & \cdots & a_2^{n-1}\\
\vdots & \vdots & \vdots & \ddots & \vdots\\
1 & a_m & a_m^2 & \cdots & a_m^{n-1}
\end{pmatrix},
\]
It is standard that $\on{RS}_q(\mathbf{a}, n)$ is an $[m,n,m-n+1]_q$ MDS code. Therefore, $\on{RS}_q(\mathbf{a}, n)^\perp$ is also MDS.

We use the following lemma that shows the alphabet size $q$ must grow linearly as $m, n\to\infty$ with fixed rate $2\mu \in (0, 1)$, as in the setting of our asymptotic regime.\footnote{
This is a weaker statement of the \emph{MDS conjecture}, which states that every nontrivial linear MDS code
over $\mathbb{F}_q$ should have length $m\le q+1$, except the exceptional families with even-$q$ and $n\in\{3,q-1\}$ where $m\le q+2$. Equivalently, (full-length) Reed-Solomon
codes should be extremal.
For prime fields this conjecture was proved by Ball \cite{ball}; over general extension
fields many parameter ranges are known, but the full conjecture is still open; see, for
instance, the survey \cite{ball-survey}.  However, the weaker statement is enough for our purposes.}

\begin{lemma}[{\cite[Lemma 1.2]{ball}}]\label{lem:mds-bound}
Let $C$ be a nontrivial linear $[m,n,d]_q$ MDS code, i.e. assume $2\le n\le m-2$. Then, $q\ge \max\{m-n+1,\;n-1\}$.
Hence, $q\ge m/2$ up to an additive constant.
\end{lemma}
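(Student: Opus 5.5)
We prove the two inequalities $q \ge m-n+1$ and $q \ge n-1$ separately. The second follows from the first by duality, and the additive-constant consequence comes from combining them.

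\textbf{Step 1: $q\ge m-n+1$.} Put $k = m-n \ge 2$. Since $C$ is MDS of dimension $n\ge 2$, after a change of basis we may take a generator matrix in systematic form $B^\top = [\,I_n \mid A\,]$, with $A\in\mathbb{F}_q^{n\times k}$. The MDS property (every $n$ columns of $B^\top$ are linearly independent) is equivalent to every square submatrix of $A$ being nonsingular. In particular every entry of $A$ is nonzero. Using row and column scalings, which preserve the MDS property, we can normalize the first row and the first column of $A$ to be all ones.

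Now look at the $2\times 2$ minors formed by rows $1$ and $2$. Column $j$ contributes the minor $A_{2j}-A_{21}A_{1j}/A_{11}$, which after normalization is $A_{2j}-A_{21}=A_{2j}-1$. These minors are all nonzero. Moreover, the $2\times 2$ minor on columns $j,j'$ in rows $1,2$ is $A_{2j'}-A_{2j}$, which must also be nonzero. So the entries $A_{22},\dots,A_{2k}$ are distinct, all different from $0$ (entries are nonzero), and all different from $1$ (by the first family of minors). This gives $k-1$ distinct elements in $\mathbb{F}_q\setminus\{0,1\}$, so $k-1\le q-2$, that is $q\ge m-n+1$.

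\textbf{Step 2: $q\ge n-1$.} The dual $C^\perp$ is MDS with parameters $[m,m-n]$, and it is nontrivial because $2\le m-n\le m-2$. Applying Step 1 to $C^\perp$ gives $q\ge m-(m-n)+1=n+1\ge n-1$.

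\textbf{Step 3: the consequence.} Adding the two bounds gives $2q\ge (m-n+1)+(n-1)=m$, so $q\ge m/2$, which is what the lemma asserts up to an additive constant.

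The main obstacle is Step 1, specifically making the normalization rigorous. One must check that row and column scalings of $A$ preserve nonsingularity of all square submatrices, and that the $2\times 2$ minor condition is exactly what forces the entries to be distinct and to avoid $0$ and $1$. Beyond that, the argument is a direct count.
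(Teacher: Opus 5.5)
Your proof is correct, and it proves more than the statement asks. Applying Step~1 to $C^\perp$ gives $q\ge n+1$ rather than $q\ge n-1$, so in fact $2q\ge m+2$.

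The paper gives no argument for this lemma; it simply cites Ball. The classical proof of the underlying inequality $m\le q+n-1$ is projective. Any $n-2$ columns of $B^\top$ span a codimension-two subspace $W\subseteq\mathbb{F}_q^n$, and $W$ lies in exactly $q+1$ hyperplanes. By the MDS property, each such hyperplane contains at most one of the remaining $m-n+2$ columns, so $m-n+2\le q+1$.

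Your $2\times 2$-minor computation is this same argument written in coordinates. Take $W=\langle e_3,\dots,e_n\rangle$ and project onto the first two coordinates. Then $e_1$, $e_2$ and the $k$ columns of $A$ must be pairwise non-proportional in $\mathbb{F}_q^2$, i.e.\ they give $k+2$ distinct points of the projective line. After your normalization these points are $(1:0)$, $(0:1)$, $(1:1)$ and $(1:A_{2j})$ for $j\ge 2$. That is exactly why the $A_{2j}$ must avoid $0$, avoid $1$, and be mutually distinct.

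What each version buys: yours is fully elementary and self-contained, while the projective one avoids the systematic form and the normalization.

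The step you flag as the main obstacle is routine. Scaling a row or column of $A$ by a nonzero scalar multiplies every minor meeting it by that scalar, so nonsingularity of all square submatrices is preserved.

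One small correction to the bookkeeping: Step~1 only needs $n\ge 2$ (two rows of $A$), not $k\ge 2$. The hypothesis $n\le m-2$ enters only in Step~2, to guarantee $\dim C^\perp\ge 2$.
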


\subsection{The Moments Problem and Kravchuk Polynomials}\label{sec:moments}
The \emph{(discrete) moment problem} asks whether a finite list of numbers $(m_0, \ldots, m_n)$ can be realized as the moments of a measure $\sigma$ supported on a (finite) set, and how to reconstruct such a measure from those moments. For a standard reference on the moments problem, see for example \cite{akh}.

Given the moment sequence, we can define a bilinear form on polynomials by
$
\langle x^a,x^b\rangle \coloneqq m_{a+b}$, 
and extends bilinearly. When the moments come from a probability measure $\sigma$,
this is simply
\[
\langle P,Q\rangle = \mathbb{E}_{X\sim \sigma}[P(X)Q(X)].
\]
Applying Gram--Schmidt to $1,x,x^2,\ldots$ produces orthogonal polynomials
$p_0,p_1,\ldots$. Crucially, the polynomial $p_j$ depends only on the moments up to
order $2j-1$; equivalently, the family $p_0,\ldots,p_\ell$ is determined by the truncated
moment sequence $m_0,\ldots,m_{2\ell-1}$ for each $\ell$. 
For a discrete distribution $\sigma$ on $\{0,\ldots,m\}$, the relevant inner product is
\begin{equation}\label{eq:orth-rel}
\langle P,Q\rangle_\sigma=\sum_{x=0}^m \sigma(x)P(x)Q(x).
\end{equation}
When $\sigma=\mathrm{Bin}(m,1/2)$, the orthogonal polynomial family is the \emph{Kravchuk}
family. Since they will be key objects for us, we define them explicitly and suggestively to indicate the association.
\begin{definition}
\label{def:kravchuk}
The \emph{degree-$\ell$ Kravchuk polynomial} $K_\ell(x)=K_\ell(x; m, 1/2)$ associated with $\on{Bin}(m, 1/2)$ distribution is defined via its generating function: the Kravchuk polynomials $K_\ell(x)$ satisfy that
\begin{equation}
\label{eq:kravchuk-gen-func}
G(x, z)\coloneqq (1+z)^{m-x}(1-z)^{x} = \sum_{\ell =0}^{m}K_\ell(x)z^\ell.
\end{equation}
\end{definition}
The closed form of $K_\ell$ and its scaled monic version $k_\ell$ is given by
\begin{equation}
\label{eq:kravchuk-def}
K_\ell(x) \coloneqq \sum_{j=0}^\ell (-1)^j \binom{x}{j}\binom{m-x}{\ell-j} \qquad\text{and}\qquad k_\ell(x) \coloneqq \ell!(-2)^{-\ell}K_\ell(x).
\end{equation}
They satisfy orthogonality with respect to the inner product \cref{eq:orth-rel} with $\sigma =\on{Bin}(m, 1/2)$, i.e.
\[
\sum_{x=0}^m \binom{m}{x}K_r(x)K_s(x)
=
2^m\binom{m}{r}\mathbf{1}\{r=s\}.
\]
We use an asymptotic computation of the roots of $K_\ell$. Let $Z_{\max}(Q)$ be the largest root of $Q$.
\begin{fact}[{\cite[Theorem 1.10]{kva}}]
\label{fact:Nki-kravchuk}
Suppose $\ell \sim \mu m$ for fixed $\mu\in [0, 1]$ as $m\to\infty$, then
\begin{equation}
\lim_{m\to\infty} \frac{1}{m}Z_{\max} \left(K_{\ell}\left(x; m, \frac{1}{2}\right)\right) = \on{SCL}_{1/2}(\mu).
\end{equation}    
Moreover, the largest root divided by $m$ of the degree $\ell$  Kravchuk family associated to $\on{Bin}(m, \rho)$ is precisely $\on{SCL}_\rho(\mu)-o(1)$ as $m\to\infty$, where $\rho\in (0, 1)$ is fixed.
\end{fact}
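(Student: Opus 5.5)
We would prove this by realizing the roots of $K_\ell$ as eigenvalues of a tridiagonal (Jacobi) matrix, and then locating the largest eigenvalue. The upper bound comes from Gershgorin's theorem and the lower bound from a Rayleigh quotient with a flat test vector.

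\emph{Step 1: recurrence and Jacobi matrix.} From the generating function \cref{eq:kravchuk-gen-func}, differentiate in $z$. Since $\partial_z G = G\cdot\bigl(\tfrac{m-x}{1+z}-\tfrac{x}{1-z}\bigr)$, multiplying by $1-z^2$ and comparing coefficients of $z^\ell$ gives the three-term recurrence
\[
(\ell+1)K_{\ell+1}(x)=(m-2x)K_\ell(x)-(m-\ell+1)K_{\ell-1}(x).
\]
After normalizing ($K_j\mapsto K_j/\sqrt{\binom{m}{j}}$, consistent with the orthogonality relation), this says $x\,v=J_\ell v$ up to a boundary term proportional to $K_\ell(x)$. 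Here $J_\ell$ is the symmetric $\ell\times\ell$ tridiagonal matrix with diagonal entries $a_j=m/2$ and off-diagonal entries $b_j=\tfrac12\sqrt{j(m-j+1)}$ for $1\le j\le \ell-1$. Hence the roots of $K_\ell$ are exactly the eigenvalues of $J_\ell$.

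For a general $\rho$, the same computation with generating function $(1+(1-\rho)z)^{m-x}(1-\rho z)^{x}$ produces a Jacobi matrix with
\[
a_j=\rho(m-j)+(1-\rho)j,\qquad b_j=\sqrt{\rho(1-\rho)\,j(m-j+1)}.
\]
The balanced case is $\rho=1/2$, up to relabeling $x\mapsto$ the orientation in which the largest root sits near $m$.

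\emph{Step 2: upper bound.} Gershgorin's theorem gives
\[
\lambda_{\max}(J_\ell)\le \max_{j\le \ell}\,\bigl(a_j+b_{j-1}+b_j\bigr).
\]
Write $j=tm$. Dividing by $m$, the right-hand side is
\[
\max_{t\le \mu}\Bigl[\rho(1-t)+(1-\rho)t+2\sqrt{\rho(1-\rho)t(1-t)}\Bigr]+o(1)=\max_{t\le\mu}\Bigl(\sqrt{\rho(1-t)}+\sqrt{t(1-\rho)}\Bigr)^2+o(1).
\]
The function $t\mapsto(\sqrt{\rho(1-t)}+\sqrt{t(1-\rho)})^2$ is increasing on $[0,1-\rho]$ and equals $1$ at $t=1-\rho$. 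So the maximum is $\on{SCL}_\rho(\mu)$, which equals $1$ once $\mu\ge 1-\rho$; for $\rho=1/2$ this is $1/2+\sqrt{\mu(1-\mu)}$. (The trivial bound $\max\le m$ also handles the saturated regime.)

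\emph{Step 3: lower bound.} Take the test vector $v=\mathbf 1\{t^\ast m-\sigma\le j\le t^\ast m\}$, where $t^\ast=\min(\mu,1-\rho)$ and $1\ll\sigma\ll m$. This is exactly the DQI choice of weights $u_k$ in \cref{eq:dqi-Pu-1}. The diagonal entries and $b_j$ vary by only $O(\sigma)$ across the window, so
\[
\frac{v^\top J_\ell v}{v^\top v}= a_{t^\ast m}+2b_{t^\ast m}\Bigl(1-\tfrac{1}{\sigma}\Bigr)+O(\sigma),
\]
which divided by $m$ is $\on{SCL}_\rho(\mu)-o(1)$.

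\emph{Main obstacle.} The main care is in Step 1: deriving the recurrence and symmetrization correctly for general $\rho$, including fixing the orientation of $x$ so that the largest root corresponds to $\max(a_j+2b_j)$ rather than the smallest one. The Gershgorin and Rayleigh estimates are then routine continuity arguments, provided $\sigma\to\infty$ and $\sigma/m\to0$.
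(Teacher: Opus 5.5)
Your proof is correct, and it takes a different route from the paper. The paper does not prove this fact; it imports both parts from \cite[Theorem 1.10]{kva}. You give a self-contained replacement, assembled from ingredients the paper uses elsewhere. In the balanced case your $J_\ell$ is exactly $\frac m2 I_\ell+\frac12 A^{(\ell-1)}$, and \cref{prop:3-term-rec} identifies its characteristic polynomial with the monic Kravchuk polynomial $k_\ell$. Your Rayleigh quotient with the flat window vector is the same computation as \cref{prop:step-0} and \cref{eq:recover-biased-scl}, since flat $w_k$ means $u_k\propto\binom{m}{k}^{-1/2}$ on a window.

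What your route buys is an elementary two-sided estimate: Gershgorin gives the upper bound and the window test vector gives the lower bound. It covers general $\rho$ and the saturated regime $\mu\ge 1-\rho$ at once, and it shows why the DQI window weights are asymptotically optimal. The citation offers finer information, such as the full limiting zero distribution and error terms, which the paper does not need.

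Two small repairs are needed. First, the generating function $(1+(1-\rho)z)^{m-x}(1-\rho z)^x$ you wrote produces the family orthogonal for $\mathrm{Bin}(m,1-\rho)$. Its diagonal entries are $(1-\rho)(m-j)+\rho j$, so following it literally would give $\on{SCL}_{1-\rho}(\mu)$. Use $(1+\rho z)^{m-x}(1-(1-\rho)z)^x$ instead, normalized by $\sqrt{\binom{m}{j}(\rho(1-\rho))^j}$. This yields exactly the entries you state, $a_j=\rho(m-j)+(1-\rho)j$ and $b_j=\sqrt{\rho(1-\rho)\,j(m-j+1)}$, which are the correct Jacobi parameters for $\mathrm{Bin}(m,\rho)$; for instance $a_0=\rho m$ is the mean.

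Second, your claim that $b_j$ varies by only $O(\sigma)$ across the window needs $t^\ast$ bounded away from $0$. For $\mu=0$, simply take $v=e_0$, which gives $\lambda_{\max}\ge a_0=\rho m$.
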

Further discussions on orthogonal polynomials and the Kravchuk family can be found in \cite{orth}.
Finally, circling back to the discrete moments problem, we cite the standard result we will use.

\begin{theorem}[Chebyshev-Markov-Stieltjes, {\cite[Theorem 2.5.4]{akh}}]
\label{thm:cms}
Consider the moments problem $(m_k)_{k=0}^{2\ell-1}$, let $p_0, \ldots, p_\ell$ be the first $\ell+1$ orthogonal polynomials with respect to the moments, let
\begin{equation} q(z)\coloneqq \left(\sum_{k=0}^\ell p_k(z)^2\right)^{-1},
\end{equation}
and let $z_{1}<\ldots <z_\ell$ be roots of $p_\ell$. Then, the random variable $Z$ supported on the $\ell$ roots with mass $q(z_k)$ at $z_k$ is the principal representation of the moments, i.e. it satisfies the following:
\begin{enumerate}
    \item $Z$ solves the moments problem, i.e. $\mb{E}[Z^k]=m_k$ for $0\le k\le 2\ell-1$.
    \item Among solutions of the moment problem, the distribution of $Z$ has the minimum support size. That is, $|\on{supp}X|\ge \ell$ for any other solution $X\sim \sigma$ of the moments problem.
    \item For any other solution $X\sim \sigma$ of the moments problem the distributions must interlace. That is, for every $0\le j\le \ell$,
    \begin{equation} \mb{P}(X\le z_j)\le \mb{P}(Z\le z_j)\coloneqq \sum_{k=1}^j q(z_k) \le \mb{P}(X\le z_{j+1}).\end{equation}
    \item If $\sigma$ has support of size larger than $\ell$, then the inequalities above can be made strict.
\end{enumerate}
\end{theorem}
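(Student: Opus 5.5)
The statement (\cref{thm:cms}) is the classical Chebyshev--Markov--Stieltjes theorem. I would prove it through \emph{Gauss quadrature}. The moments $(m_k)_{k\le 2\ell-1}$ define a bilinear form $\langle P,Q\rangle$ on polynomials of degree $\le \ell-1$. Since $p_0,\ldots,p_\ell$ exist, this form is positive definite, because these moments come from a probability measure with support of size at least $\ell$. I normalize $p_0,\ldots,p_{\ell-1}$ to be orthonormal.

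\textbf{Roots of $p_\ell$.} First I show $p_\ell$ has $\ell$ simple real roots $z_1<\cdots<z_\ell$. Suppose $p_\ell$ changed sign at only $r<\ell$ points $w_1,\ldots,w_r$. Then $p_\ell(x)\prod_{i}(x-w_i)$ has constant sign, and $\langle p_\ell,\prod_i (x-w_i)\rangle$ is nonzero. This contradicts orthogonality of $p_\ell$ to all polynomials of degree $<\ell$. The same pairing, applied to any solution $X$ of the moment problem, handles the positivity issues below.

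\textbf{Item 1 (exactness).} Let $L_k$ be the Lagrange basis polynomials at the nodes $z_1,\ldots,z_\ell$, and set $\lambda_k\coloneqq\langle L_k,1\rangle$. Take $P$ of degree $\le 2\ell-1$ and divide: $P=p_\ell S+R$ with $\deg S,\deg R<\ell$. Orthogonality kills $\langle p_\ell S,1\rangle$. Since $R=\sum_k R(z_k)L_k$, we get $\langle P,1\rangle=\sum_k \lambda_k P(z_k)$, so quadrature is exact up to degree $2\ell-1$. Applying this to $L_k^2$ (degree $2\ell-2$) gives $\lambda_k=\langle L_k^2,1\rangle>0$.

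To identify $\lambda_k=q(z_k)$, I expand $L_k=\sum_{j<\ell} c_j p_j$. Quadrature gives $c_j=\langle L_k,p_j\rangle=\lambda_k p_j(z_k)$. Then
\[ \lambda_k=\|L_k\|^2=\lambda_k^2\sum_{j<\ell}p_j(z_k)^2 .\]
The $j=\ell$ term contributes nothing since $p_\ell(z_k)=0$, so $\lambda_k=q(z_k)$. Thus $Z$, which puts mass $q(z_k)$ at $z_k$, is a probability measure matching all moments up to order $2\ell-1$.

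\textbf{Item 2 (minimal support).} Suppose a solution $X$ had support $\{x_1,\ldots,x_s\}$ with $s<\ell$. Then $\prod_i (x-x_i)$ is a nonzero polynomial of degree $<\ell$ with $\langle \prod_i(x-x_i),\prod_i(x-x_i)\rangle = \mb{E}\prod_i(X-x_i)^2 =0$. This contradicts positive definiteness.

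\textbf{Items 3--4 (interlacing).} This is the Markov--Stieltjes inequality and is the main obstacle. For fixed $j$, I would construct by Hermite interpolation a polynomial $U$ of degree $\le 2\ell-2$ with the following properties:
\begin{itemize}
\item $U(z_k)=1$ for $k\le j$ and $U(z_k)=0$ for $k>j$;
\item $U'(z_k)=0$ for all $k\ne j$ (at $z_j$ itself no derivative condition is imposed).
\end{itemize}
This is $2\ell-1$ conditions, so such a $U$ exists. I then need to show $U\ge \mbf{1}\{x\le z_j\}$ on $\mb{R}$. The approach is a Rolle-type count of the zeros of $U'$: the prescribed critical points, plus one forced zero of $U'$ between each pair of consecutive nodes with equal values, already exhaust the $2\ell-3$ zeros available to $U'$. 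Hence $U$ is monotone between consecutive nodes and lies above the step function. Getting this sign analysis right, including the endpoint intervals, is the delicate part.

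Given $U$, exactness of quadrature yields
\[ \mb{P}(X\le z_j)\le \mb{E}[U(X)]=\sum_k q(z_k)U(z_k)=\sum_{k\le j}q(z_k) .\]
A symmetric construction gives a polynomial $V\le \mbf{1}\{x\le z_{j+1}\}$ with $V(z_k)=1$ for $k\le j$ and $V(z_k)=0$ for $k>j$. This yields the upper inequality $\sum_{k\le j}q(z_k)\le \mb{P}(X\le z_{j+1})$.

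For Item 4, equality in either inequality would force $X$ to be supported on the zero set of $U-\mbf{1}\{x\le z_j\}$ (respectively $V-\mbf{1}\{x\le z_{j+1}\}$). That set is contained in $\{z_1,\ldots,z_\ell\}$, so $|\on{supp}X|\le\ell$. Therefore, when the support of $X$ is larger than $\ell$, both inequalities are strict.
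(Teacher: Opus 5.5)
The paper gives no proof of this statement; it quotes it from Akhiezer. Your route---Gauss--Christoffel quadrature for items 1--2, and Markov's Hermite-interpolation majorant for items 3--4---is the classical proof. Items 1 and 2, the identification $\lambda_k=q(z_k)$, the solvability of the $2\ell-1$ Hermite conditions, and the treatment of the upper inequality by reflection are all fine. Two steps need repair.

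\textbf{The sign analysis.} Your zero count is right. $U'$ has degree at most $2\ell-3$. It vanishes at the $\ell-1$ nodes $z_k$ ($k\neq j$), and by Rolle at some $\xi_k\in(z_k,z_{k+1})$ for each of the $\ell-2$ indices $k\in\{1,\dots,\ell-1\}\setminus\{j\}$. But the conclusion ``$U$ is monotone between consecutive nodes'' is false, since $U$ turns around at every $\xi_k$. What the count actually yields, for $1\le j\le \ell-1$ (so that $U$ is nonconstant), is that these are \emph{all} the zeros of $U'$ and all are simple. Hence $U'$ changes sign at each of them and has no zero in $[z_j,z_{j+1})$. There it is negative, since $U$ drops from $1$ to $0$. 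Propagating outward:
\begin{itemize}
\item every $z_k$ with $k\neq j$ is a local minimum, with value $1$ for $k<j$ and $0$ for $k>j$;
\item every $\xi_k$ is a local maximum;
\item $U'<0$ on $(-\infty,z_1)$ and $U'>0$ on $(z_\ell,\infty)$.
\end{itemize}
This gives $U\ge 1$ on $(-\infty,z_j]$ and $U\ge 0$ on $(z_j,\infty)$, with equality only at the nodes. That is exactly what items 3 and 4 need.

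\textbf{The endpoint case of item 4.} You claim the zero set of $U-\mathbf{1}\{x\le z_j\}$ lies in $\{z_1,\dots,z_\ell\}$. This fails for $j=\ell$: the Hermite data then force $U\equiv 1$, and the zero set is $(-\infty,z_\ell]$. Symmetrically, $V\equiv 0$ when $j=0$.

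The strict inequality $\mathbb{P}(X\le z_\ell)<1$ is still true, but it needs a separate argument. Take $p_\ell$ monic. Orthogonality (total degree $2\ell-1$) gives $\mathbb{E}\bigl[(X-z_\ell)\prod_{k<\ell}(X-z_k)^2\bigr]=\mathbb{E}\bigl[p_\ell(X)\cdot p_\ell(X)/(X-z_\ell)\bigr]=0$. If $X\le z_\ell$ almost surely, the integrand is $\le 0$ and vanishes only on the nodes, which forces $\operatorname{supp}X\subseteq\{z_1,\dots,z_\ell\}$. Likewise $\mathbb{P}(X\le z_1)>0$ follows from $(X-z_1)\prod_{k>1}(X-z_k)^2$. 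The two outermost inequalities, $\mathbb{P}(X\le z_0)\le 0$ and $1\le \mathbb{P}(X\le z_{\ell+1})$, are identities and cannot be made strict at all.

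This endpoint case is not cosmetic. The paper's only use of the theorem, $\mathbb{P}(m\cdot s(x)\ge Z_{\max}(K_\ell))>0$ in the second proof of \cref{prop:dqi-on-opi}, is exactly the statement $\mathbb{P}(X\ge z_\ell)>0$. Interlacing with $j\le \ell-1$ only gives $\mathbb{P}(X>z_{\ell-1})\ge q(z_\ell)$. The identity above, together with item 2, gives $\mathbb{P}(X\ge z_\ell)>0$ directly for every solution $X$.
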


\subsection{Notations and Conventions}
We let $[n]\coloneqq \{1, \ldots, n\}$ with the convention $[0]=\emptyset$.  Let $\binom{S}{k}$ denote the set of $k$-element subsets of $S$. We use ${a \choose b}$ to be the standard binomial coefficient, with the convention that it is zero if $b$ is not an integer. 
Let $\sqcup$ denote disjoint unions, viewed as multisets in the case of repeated elements.
Let $\langle u, v\rangle$ denote the standard dot-product. Let $\mb{E}_{x\in X}$ denote the expectation over $x\sim\on{Unif}(X)$.

We use the following Fourier analysis convention over $\mb{F}_p^n$: let $e_p(t)\coloneqq\exp(2\pi i t/p)$, and 
\begin{equation}
\begin{aligned}
\wh{f}(y) & \coloneqq\mb{E}_{x\in \mb{F}_p^n} \left[f(x)e_p(\langle x, y\rangle)\right],\\
f(x)&\coloneqq\sum_{y\in \mb{F}_p^n} \wh{f}(y)e_p(-\langle x, y\rangle).
\end{aligned}
\end{equation}

We write $f(x)\propto g(x)$ if there exists constant $C$ such that $f(x)=Cg(x)$ for every $x$. We also adopt standard notation from asymptotic analysis: as $x\to\infty$, we write $f(x)\ll g(x)$ or $f(x)=o(g(x))$ if $f(x)/g(x)\to 0$; $f(x) \lesssim g(x)$ or $f(x)=O(g(x))$ if there exists a finite, positive constant $C$ such that $f(x)\le Cg(x)$ for all sufficiently large $x$; and we write $f(x)\asymp g(x)$ or $f(x)=\Theta(g(x))$ if $f(x)\lesssim g(x)$ and $g(x)\lesssim f(x)$. 
We also write $f(x)\sim g(x)$ if $f(x)/g(x)\to 1$.

In this paper, the asymptotics are always as $m, n\to\infty$ with fixed rate $n/m=2\mu$. By \cref{lem:mds-bound}, the field size (which we call $q$ or $p$, depending on whether it is a general prime power or a prime) also goes to infinity at least linearly in $m$ and $n$; this is because we are interested in MDS codes with block length $m$, which by \cref{lem:mds-bound} only exist over sufficiently large fields.

In this paper, all logarithms are natural logs. Recall from \cref{eq:Hx} the binary cross-entropy function $H:[0, 1]\to \mb{R}$ defined by $H(0)=H(1)=0$ by continuity, and for $x\in (0, 1)$ we let
\[
    H(x)\coloneqq -x\log x-(1-x)\log (1-x).
\]
$H$ is undefined outside of $[0, 1]$. When we maximize or minimize an objective containing $H$ over some parameters, we implicitly do so over the parameter space where all arguments of $H$ are between $0$ and $1$.
We use the following fact about $H$ and Stirling's approximation (see, e.g., \cite[Chapter~1]{debruijn}).
\begin{fact}
\label{fact:stirling}
For $x\in (0, 1)$, $H'(x)=\log(1/x-1)$. For constant $\alpha\in (0, 1)$ as $m\to\infty$
\begin{equation}\label{eq:stirling}
\binom{m}{ \lfloor \alpha m\rfloor} \asymp m^{-1/2}\exp \left(mH(\alpha)\right)\quad\text{where}\quad H(x)\coloneqq -x\log x-(1-x)\log (1-x).
\end{equation}
\end{fact}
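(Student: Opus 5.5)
The statement is Stirling's estimate $\binom{m}{\lfloor\alpha m\rfloor}\asymp m^{-1/2}e^{mH(\alpha)}$, together with the derivative $H'(x)=\log(1/x-1)$. I would prove both parts directly.

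For the derivative, differentiate term by term. The derivative of $-x\log x$ is $-\log x-1$, and the derivative of $-(1-x)\log(1-x)$ is $\log(1-x)+1$. Adding these gives $H'(x)=\log((1-x)/x)=\log(1/x-1)$.

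For the binomial estimate, write $k=\lfloor \alpha m\rfloor$ and apply Stirling's formula $N!=\sqrt{2\pi N}\,(N/e)^N(1+O(1/N))$ to each of $m!$, $k!$ and $(m-k)!$. Since $\alpha\in(0,1)$ is fixed, both $k$ and $m-k$ tend to infinity. The exponentials $e^{-m}$, $e^{k}$, $e^{m-k}$ cancel. The square-root prefactors combine to
\[
\sqrt{\frac{m}{2\pi k(m-k)}}\asymp m^{-1/2}\bigl(\alpha(1-\alpha)\bigr)^{-1/2}\asymp m^{-1/2}.
\]
The remaining power terms give
\[
\frac{m^m}{k^k(m-k)^{m-k}}=\exp\bigl(mH(k/m)\bigr),
\]
where the identity follows by writing $k=\beta m$ and taking logarithms: the log equals $-m\beta\log\beta-m(1-\beta)\log(1-\beta)$.

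The one point needing care is replacing $\beta=k/m$ by $\alpha$, since $|\beta-\alpha|\le 1/m$. By the mean value theorem and the derivative formula, $|mH(\beta)-mH(\alpha)|\le m\cdot\frac1m\sup|H'|$. The supremum is taken over a neighbourhood of $\alpha$ bounded away from $0$ and $1$, so this difference is $O(1)$. It therefore changes the estimate only by a bounded multiplicative factor, which the relation $\asymp$ permits.

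This rounding step is the only mild obstacle. It is also why the statement is phrased with $\asymp$ rather than $\sim$.
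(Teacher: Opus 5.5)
Your proof is correct. The paper gives no argument of its own for this fact and simply cites de Bruijn; your derivation (Stirling's formula applied to $m!$, $k!$, $(m-k)!$, then absorbing the floor via the mean value theorem because $H'$ is bounded near $\alpha$) is the standard proof that citation stands for. One small correction to your closing remark: $\asymp$ rather than $\sim$ would be needed even when $\alpha m$ is an integer, because of the constant prefactor $(2\pi\alpha(1-\alpha))^{-1/2}$, not only because of the rounding.
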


\section{Rediscovering the Semicircle Law}\label{sec:re-proof}
In this section, we give several closely related proofs of \cref{prop:dqi-on-opi}
and explain their equivalence. This motivates some definitions useful for later sections that improve \cref{prop:dqi-on-opi}. For ease of exposition, we assume the \emph{perfectly balanced} case where input lists $S_i$ all have size exactly $q/2$. It is possible to give similar re-proofs of \cref{prop:dqi-on-opi} for the case where $|S_i|\sim q/2$ (rather than exactly equal) and more generally when $|S_i|=\rho q$, using machinery developed in \cref{sec:improvements}. We omit such re-proofs, as the point of this section is simplicity and intuition.
\subsection{The Moments Problem Proof}\label{sec:moments-proof}
Using machinery from \cref{sec:moments}, we give a short second proof of \cref{prop:dqi-on-opi}. The ``first'' proof of it is the DQI proof-by-(quantum)-algorithm~\cite{dqi}.

\begin{proof}[Second Proof of \cref{prop:dqi-on-opi}]
Sample $x\in \mb{F}_q^n$ uniformly at random. Then, the collection of $m$ random variables $\{(Bx)_i:i\in [m]\}$ is $n$-wise independent by the MDS property of $C$, so the number $m\cdot s(x)$ of satisfied constraints has the same first $n$ moments as $\on{Bin}(m, 1/2)$.
By \cref{thm:cms}, 
$
\mb{P}\left(m\cdot s(x)\ge Z_{\max}(K_\ell)\right)>0$ where $Z_{\max}(K_\ell)$ is the largest root of the degree $\ell=\lfloor(n+1)/2\rfloor$ Kravchuk polynomial $K_\ell$. Now, \cref{prop:dqi-on-opi} follows \cref{fact:Nki-kravchuk}.
\end{proof}
\begin{remark}\label{rem:momentspf}
We make two remarks.  First, we remark that our application of the MDS property is the generalization of the following fact: for a uniformly random polynomial $Q\in \mb{F}_p[t]$ with $\deg Q\le n$ conditioned on the value of $Q$ at any subset $I\subset \mb{F}_p$ of size at most $n$, $Q(a)$ remains uniformly random for any $a\not\in I$. Indeed, this is the special case when $C$ is the Reed-Solomon code as in OPI.

Second, we remark that the exact same proof using Kravchuk polynomials associated with $\on{Bin}(m, \rho)$ and its largest root work for the general case $|S_i|=\rho q$ as well, and we recover $\on{SCL}_\rho(\mu)$.
\end{remark}

Somewhat shockingly, the above recovers the same semicircle law that DQI achieves. However, in retrospect, this may not be so shocking; in fact, the two proofs are in some sense equivalent.  To explain the coincidence,
 we first define $f_i:\mb{F}_p\to \{-1, 1\} $ for each $i\in [m]$ by
\begin{equation}
\label{eq:fi}
f_i(a) \coloneqq \begin{cases}
1 &\text{if }a\in S_i\\
-1 &\text{otherwise}
\end{cases}.
\end{equation}
We recall from {\cite[Lemma 9.2]{dqi}} how DQI ends up at the semicircle law.
For any $\ell\in\mb{N}$ and weights $w\in \mb{R}_{\ge 0}^{\ell+1}$, assuming efficient syndrome decoding of $C^\perp$ to weight $\ell$, DQI prepares the state
\begin{equation} 
\label{eq:dqi-state}
\ket{P_f} \propto \sum_{k=0}^\ell w_k \ket{P^{(k)}}\quad\text{where}\quad  
\ket{P^{(k)}} \propto \sum_{x\in \mb{F}_p^n}\sum_{S\in \binom{[m]}{k}} \prod_{i\in S}f_i\left(\langle b_i, x\rangle\right) \left|x\right\rangle,
\end{equation}
takes a measurement, and return the resulting solution $x$.
The expected satisfaction ratio of $x$ is
\begin{equation}
\label{eq:dqi-quadratic-form}
\mb{E}s(x) = \frac{1}{2}+\frac{\langle w,  A^{(\ell)}w\rangle }{2m\langle w, w\rangle},
\end{equation}
where $A^{(\ell)}\in \mb{R}^{(\ell+1)\times (\ell+1)}$ is symmetric with $A^{(\ell)}_{k, k-1}=A^{(\ell)}_{k-1, k} =\sqrt{k(m+1-k)}$ for each $k\in [\ell]$  and all other entries zero. The optimal feasible choice of $(\ell, w)$ is given by
$\ell =\lfloor d^\perp/2-1\rfloor$ and $w$ that is the eigenvector associated with the largest eigenvalue $\lambda_{\max}(A^{(\ell)})$. Then, \cref{eq:dqi-quadratic-form} is equal to $\on{SCL}_{1/2}(d^\perp/2m)-o(1)$.

We explain the emergence of the semicircle law by showing that the monic Kravchuk polynomial $k_{\ell+1}$ is the characteristic polynomial of $A^{(\ell)}$ for each $\ell$, up to a constant shift as in \cref{eq:dqi-quadratic-form}. We prove this by showing they obey identical three-term recursions. Then, the number of satisfied constraints $m/2+\lambda_{\max}(A^{(\ell)})/2$ is precisely the largest root of $k_{\ell+1}$, which obeys the semicircle law by \cref{fact:Nki-kravchuk}.

\begin{proposition}\label{prop:3-term-rec}
For every $\ell\in\mb{N}$, let $k_\ell$ be the monic Kravchuk polynomial of degree $\ell$. Then
\begin{equation}
    k_{\ell+1}(x) = \det\left(\left(x-\frac{m}{2}\right)I_{\ell+1}-\frac{1}{2}A^{(\ell)}\right).
\end{equation}
\end{proposition}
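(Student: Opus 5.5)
The plan is to show that both sides of \cref{prop:3-term-rec} satisfy the same three-term recursion in $\ell$ with the same initial conditions. Write $D_\ell(x) \coloneqq \det\left(\left(x-\frac{m}{2}\right)I_{\ell+1}-\frac{1}{2}A^{(\ell)}\right)$ and set $D_{-1}\coloneqq 1$. Since the theorem is an identity of polynomials in $x$, it is enough to check that $D_\ell$ and $k_{\ell+1}$ obey the same recursion and agree at $\ell=-1,0$; induction on $\ell$ then gives $D_\ell = k_{\ell+1}$ for all $\ell$.

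\textbf{Step 1: the Kravchuk recursion.} We use the generating function \cref{eq:kravchuk-gen-func}. Differentiating $G(x,z)=(1+z)^{m-x}(1-z)^x$ in $z$ and multiplying by $1-z^2$ gives
\[
(1-z^2)\,\partial_z G = \big((m-x)(1-z)-x(1+z)\big)G = (m-2x-mz)\,G.
\]
Comparing coefficients of $z^\ell$ on both sides yields
\[
(\ell+1)K_{\ell+1}-(\ell-1)K_{\ell-1}=(m-2x)K_\ell - mK_{\ell-1},
\qquad\text{that is,}\qquad
(\ell+1)K_{\ell+1}=(m-2x)K_\ell-(m-\ell+1)K_{\ell-1}.
\]
Next substitute $K_j=(-2)^j k_j/j!$ from \cref{eq:kravchuk-def} and multiply through by $\ell!/(-2)^{\ell+1}$. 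This gives the monic recursion
\[
k_{\ell+1}(x)=\left(x-\tfrac{m}{2}\right)k_\ell(x)-\tfrac{\ell(m+1-\ell)}{4}\,k_{\ell-1}(x).
\]
The initial values are $k_0=1$ and $k_1=-\tfrac12(m-2x)=x-\tfrac m2$.

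\textbf{Step 2: the determinant recursion.} The matrix $\left(x-\frac m2\right)I_{\ell+1}-\frac12 A^{(\ell)}$ is tridiagonal. Its diagonal entries are all $x-\frac m2$, and its $(\ell,\ell-1)$ and $(\ell-1,\ell)$ entries are both $-\tfrac12\sqrt{\ell(m+1-\ell)}$. Expanding the determinant along the last row (the standard continuant recursion) gives
\[
D_\ell=\left(x-\tfrac m2\right)D_{\ell-1}-\tfrac{\ell(m+1-\ell)}{4}\,D_{\ell-2}.
\]
The initial values are $D_{-1}=1$ and $D_0=x-\tfrac m2$. This recursion and these initial values coincide with those of $k_{\ell+1}$ from Step 1, which closes the induction.

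\textbf{Main obstacle.} The only delicate step is the bookkeeping in Step 1. One has to get the coefficient $m-\ell+1$ exactly right from the generating function, and then check that the normalization $k_\ell=\ell!(-2)^{-\ell}K_\ell$ makes $k_\ell$ monic. That normalization is also what turns the Kravchuk coefficient $(m-\ell+1)$ into the factor $\ell(m+1-\ell)/4$, which is exactly $\left(\tfrac12 A^{(\ell)}_{\ell,\ell-1}\right)^2$. The square roots in $A^{(\ell)}$ cause no trouble: the cofactor expansion only ever uses the product of the two symmetric off-diagonal entries, which removes the square root.
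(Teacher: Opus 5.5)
Your proposal is correct and follows the paper's proof essentially step for step. Both arguments derive $(\ell+1)K_{\ell+1}=(m-2x)K_\ell-(m-\ell+1)K_{\ell-1}$ from $(1-z^2)\partial_z G=(m-mz-2x)G$, rescale to the monic $k_\ell$, and match the result against the continuant recursion from expanding the tridiagonal determinant along its last row, with the same initial values $1$ and $x-\tfrac{m}{2}$; you simply write out the coefficient extraction and rescaling more explicitly than the paper does.
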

\begin{proof}
Let $P_{\ell+1}(x)$ be the polynomial on the right. 
Clearly, $k_{\ell+1}$ and $P_{\ell+1}$ are both monic and have degree $\ell+1$.
Note that $k_0(x)=1=P_0(x)$ and $k_1(x)=x-m/2=P_1(x)$. 
By the recursive formula of the determinant on the right via expanding the last row of the matrix, we obtain that
\begin{equation}\label{eq:three-term-recursion}
    P_{\ell+1}(x) = \left(x-\frac{m}{2}\right)P_{\ell}(x)-\left(-\frac{A^{(\ell)}_{\ell-1, \ell}}{2}\right)^2P_{\ell-1}(x) = \left(x-\frac{m}{2}\right)P_\ell(x)-\frac{\ell(m+1-\ell)}{4}P_{\ell-1}(x).
\end{equation}

It suffices to show that $k_\ell$ obeys the same three-term recursion. This follows from the generating function of $K_\ell(x)$. Recall that $G(x, z)=\sum_{\ell=0}^m K_\ell(x)z^\ell$  from \cref{eq:kravchuk-gen-func}.  Then
\begin{equation}
\frac{\partial G}{\partial z} = \left(\frac{m-x}{1+z}-\frac{x}{1-z}\right)G(x, z)\implies (1-z^2)\frac{\partial G}{\partial z}  = (m-mz-2x)G(x, z).
\end{equation}
We write $G(x, z)=\sum_{\ell=0}^m K_\ell(x)z^\ell$ and extract the $[z^\ell]$-coefficient to obtain three-term recursion
\begin{equation}
(\ell+1)K_{\ell+1}(x)=(m-2x)K_\ell(x)-(m-\ell+1)K_{\ell-1}(x).
\end{equation}
Finally, applying the scaling between $K_\ell$ and $k_\ell$ in \cref{eq:kravchuk-def}, we see that $k_\ell$ obeys \cref{eq:three-term-recursion}.
\end{proof}

The formulation in \cite{dqi} makes it clear that the limitation in the semicircle law comes from the fact that we cannot take $\ell$ larger than $d^{\perp}/2$ and still decode $C^\perp$ without error.  For $\ell > d^{\perp}/2$, we introduce error, analogous to the loss of $k$-wise independence for $k > n$ in our moments-based proof of \cref{prop:dqi-on-opi}.
 We will return to this point in \cref{sec:motivation}, when we improve upon the semicircle law by going beyond this barrier.
\subsection{The Discrepancy Proof}
\label{sec:discrepancy}
Having already explained two proofs of \cref{prop:dqi-on-opi} (the one in \cite{dqi} and the moments-based proof above), we now give a third.  This third proof is a streamlined version of the proof in \cite{dqi} on the performance guarantee of DQI. The reason that we give this third proof is because it will help us set up the analytical framework we need to improve beyond the semicircle law in \cref{sec:improvements}.  In particular, in this proof we black-box out the Fourier analysis in \cite{dqi} into \cref{lem:key-fourier}, which simplifies the derivation of \cref{eq:dqi-quadratic-form} and allows us to bypass the step of controlling eigenvalues of the matrix $A^{(\ell)}$.

Motivated by the state DQI samples from in \cref{eq:dqi-state}, 
we begin by making the following definition. 
\begin{definition}
Recall $f_i$ from \cref{eq:fi}.
Define \emph{$k$-wise discrepancy} $q_k:\mb{F}_q^n\to \mb{Z}$ by
\begin{equation}
    q_k(x):= \sum_{S\in \binom{[m]}{k}} \prod_{i\in S}f_i(\langle b_i, x\rangle).
\end{equation}
\end{definition}

All the Fourier analysis in \cite{dqi} can be black-boxed into one step, which we present in \cref{lem:balanced-fourier} below. We omit the proof as it is a special case of \cref{lem:key-fourier} (proved later) and also implicit in \cite{dqi}.
\begin{lemma}
\label{lem:balanced-fourier}
$\mb{E}[q_0(x)]=1$ and $\mb{E}[q_k(x)] = 0$ for all $1\le k< d^\perp$, where the expectation is over $x\sim\on{Unif}(\mb{F}_q^n)$.
\end{lemma}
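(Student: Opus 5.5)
The plan is to prove \cref{lem:balanced-fourier} by direct Fourier expansion, working in the perfectly balanced setting of this section, where $|S_i|=q/2$ exactly. For $k=0$ the only subset is $\emptyset$ and the empty product is $1$, so $q_0\equiv 1$ and $\mb{E}[q_0(x)]=1$. For $1\le k<d^\perp$ we use linearity of expectation:
\[
\mb{E}[q_k(x)]=\sum_{S\in\binom{[m]}{k}}\mb{E}_{x}\Bigl[\prod_{i\in S}f_i(\langle b_i,x\rangle)\Bigr].
\]
It therefore suffices to show that each summand vanishes.

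The first route is the probabilistic one already used in the moments proof. Since $d^\perp=n+1$ for an MDS code, we have $k\le n$, so any $k$ rows $b_i$ ($i\in S$) are linearly independent. Hence, for uniform $x$, the values $(\langle b_i,x\rangle)_{i\in S}$ are independent and each is uniform on $\mb{F}_q$. The expectation then factors as $\prod_{i\in S}\mb{E}_{a\in\mb{F}_q}[f_i(a)]$. Each factor equals $(|S_i|-(q-|S_i|))/q=0$ because $|S_i|=q/2$. Since $k\ge 1$, the product contains at least one such factor and is therefore zero.

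The second route is the Fourier expansion, which generalizes to \cref{lem:key-fourier}. Write $f_i(a)=\sum_{y_i}\wh{f_i}(y_i)e(-y_ia)$ and expand the product over $i\in S$. Averaging over $x$ kills every term except those with $\sum_{i\in S}y_ib_i=0$, that is, those $y\in C^\perp$ supported in $S$. Because $\wh{f_i}(0)=\mb{E}f_i=0$, only $y$ with full support $S$ survive. Such a $y$ is a nonzero dual codeword of weight $k<d^\perp$, and no such $y$ exists. This yields the formula \cref{eq:prop-1}, giving $\mb{E}[q_t]=\sum_{y\in C^\perp_t}\prod_{i:y_i\neq 0}\wh{f_i}(y_i)$, which also shows why the terms with $t\ge d^\perp$ are the ones that matter later.

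The step needing the most care is the independence claim, i.e.\ the identification $d^\perp-1=n$ together with the fact that $k$ linearly independent linear forms evaluated at a uniform $x\in\mb{F}_q^n$ are jointly uniform. This holds because the map $x\mapsto(\langle b_i,x\rangle)_{i\in S}$ is a surjective linear map, so every fiber has the same size $q^{n-k}$. Over a prime-power field, the Fourier version requires using the additive characters $e(\mathrm{Tr}(\cdot)/p)$ in place of $e_p$, but the argument is otherwise unchanged.
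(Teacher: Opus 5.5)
Your proof is correct, and your second route is the paper's own: the paper omits the proof and calls the lemma a special case of \cref{lem:key-fourier}. That lemma's proof is the character expansion you sketch. The fact that $\widehat{f_i}(0)=\mathbb{E}f_i=0$ removes every $y$ not fully supported on $S$, and there are no nonzero dual codewords of weight below $d^\perp$.

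Your first route is genuinely different and more elementary. It uses only that any $k\le n=d^\perp-1$ rows of $B$ are linearly independent. Hence $(\langle b_i,x\rangle)_{i\in S}$ is uniform on $\mathbb{F}_q^k$, and the expectation factors as $\prod_{i\in S}\mathbb{E}f_i=0$. This is the same $n$-wise independence used in the moments proof of \cref{sec:moments-proof}. It also makes concrete the remark in \cref{sec:motivation} that two facts coincide: $\mathbb{E}[q_k]$ vanishing below $d^\perp$, and $m\cdot s(x)$ matching the first $n$ moments of $\mathrm{Bin}(m,1/2)$.

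Needing no characters is a real convenience here. Exact balance $|S_i|=q/2$ forces $q$ to be even, so the $\mathbb{F}_p$ statement of \cref{lem:key-fourier} does not literally cover this lemma; your trace-character remark patches that. Your first route sidesteps the issue entirely: it works verbatim over any $\mathbb{F}_q$, and even for any linear code, since every $d^\perp-1$ rows of a generator matrix are independent. It also extends to general $\rho$ by replacing $f_i$ with the centered $g_i$.

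What the first route cannot do is say anything about $t\ge d^\perp$, where independence fails. The Fourier identity $\mathbb{E}[q_t]=\sum_{y\in C_t^\perp}\prod_{i:y_i\neq 0}\widehat{g_i}(y_i)$ holds for every $t$. That exact expression is what the paper needs in \cref{sec:step-3} to bound the correction terms once $\ell$ exceeds $d^\perp/2$, via Cauchy--Schwarz, Parseval and \cref{fact:arc-max}. This is why the paper routes everything through the Fourier lemma.
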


Note that, the satisfaction ratio of a solution to the Max-LINSAT problem (\cref{question:max-linsat}) is
\begin{equation}\label{eq:bal-s-into-q1}
s(x)=\sum_{i=1}^m \frac{1+f_i(\langle b_i, x\rangle)}{2m} = \frac{1}{2}+\frac{1}{2m}q_1(x),
\end{equation}
so \cref{prop:dqi-on-opi} reduces to finding a distribution $\mb{P}$ such that as $m, n\to \infty$ with $n\sim 2\mu m$
\begin{equation}
    \mb{E}_{x\sim \mb{P}}\left[q_1(x)\right]\ge 2m\sqrt{\mu(1-\mu)}-o(m).
\end{equation}

For weights $u\in \mb{R}^{\ell+1}$ optimized later, DQI samples solution $x$ to Max-LINSAT with probability
\begin{equation} 
\label{eq:dqi-Pu}
\mb{P}_u(x) \propto \left(\sum_{k=0}^\ell u_k q_k(x)\right)^2,
\end{equation}
where $2\ell+1 < d^\perp = n+1$. Relative to \cref{eq:dqi-state}, we have changed the weights $w$ to weights $u$ that are normalized differently\footnote{We note that our weights $u_k$ are not the same as the weights $u_k$ in \cite{dqi}, due to the difference in Fourier transform conventions.}:  We have
\begin{equation}
\label{eq:u-w-normalization}
w_k = \binom{m}{k}^{1/2}u_k.
\end{equation}
 We henceforth forget about the weights $w$ and use the normalized versions $u$ instead.
Next, we explain how to handle products of $q_k$ necessary when we expand \cref{eq:dqi-Pu}. We start with some notation.
\begin{definition}\label{def:Nkt}
Extending the notion of {symmetric difference} to multiple arguments, we define
$\Delta(T_1, \ldots, T_r)$ to be the set of elements in an odd number of $T_i$'s. For $k_i, t\in [m]$, define 
\begin{equation}\label{eq:def-Nkt}
    N(k_1, \ldots, k_r; t) \coloneqq \#\left\{(T_1, \ldots, T_r): T_i\in \binom{[m]}{k_i}, \Delta (T_1, \ldots, T_r) = [t]\right\}
\end{equation}
\end{definition}
\begin{proposition}
\label{prop:bal-prod}
For any $k_1, \ldots, k_r\in \mb{N}$, we have that
\begin{equation}
\mb{E}_{x\in \mb{F}_p^n} \left[q_{k_1}(x) \cdot \cdots \cdot q_{k_r}(x)\right] = \sum_{t=0}^{m} N(k_1, \ldots, k_r;t) \mb{E}_{x\in \mb{F}_p^n} [q_t(x)].
\end{equation}
\end{proposition}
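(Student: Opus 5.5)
We expand the product and use that each $f_i$ is $\pm 1$-valued, so $f_i(a)^2=1$. Write $F_T(x)\coloneqq\prod_{i\in T}f_i(\langle b_i,x\rangle)$ for $T\subseteq[m]$, so that $q_k(x)=\sum_{T\in\binom{[m]}{k}}F_T(x)$. Multiplying out,
\begin{equation*}
q_{k_1}(x)\cdots q_{k_r}(x)=\sum_{T_1\in\binom{[m]}{k_1}}\cdots\sum_{T_r\in\binom{[m]}{k_r}} \prod_{j=1}^r F_{T_j}(x).
\end{equation*}
Since $f_i^2\equiv 1$, each coordinate $i$ contributes $f_i(\langle b_i,x\rangle)^{c_i}$, where $c_i$ is the number of $j$ with $i\in T_j$. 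This factor equals $f_i(\langle b_i,x\rangle)$ if $c_i$ is odd and $1$ if $c_i$ is even. Hence the identity
\begin{equation*}
\prod_{j=1}^r F_{T_j}(x)=F_{\Delta(T_1,\ldots,T_r)}(x)
\end{equation*}
holds pointwise in $x$.

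Next we group the tuples $(T_1,\ldots,T_r)$ according to the set $U=\Delta(T_1,\ldots,T_r)$. This gives
\begin{equation*}
\prod_{j} q_{k_j}(x)=\sum_{U\subseteq[m]} M(U)\,F_U(x),
\end{equation*}
where $M(U)$ is the number of tuples with $|T_j|=k_j$ and symmetric difference exactly $U$. The key observation is that $M(U)$ depends only on $|U|$. Indeed, any permutation $\pi$ of $[m]$ acts coordinatewise on tuples $(T_1,\ldots,T_r)$. It preserves each size $|T_j|$ and sends $\Delta(T_1,\ldots,T_r)$ to $\pi(\Delta(T_1,\ldots,T_r))$. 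So it gives a bijection between the tuples counted by $M(U)$ and those counted by $M(\pi U)$. Taking $\pi$ with $\pi(U)=[t]$, where $t=|U|$, gives $M(U)=N(k_1,\ldots,k_r;t)$ from \cref{def:Nkt}. Therefore
\begin{equation*}
\prod_j q_{k_j}(x)=\sum_{t=0}^m N(k_1,\ldots,k_r;t)\sum_{U\in\binom{[m]}{t}}F_U(x)=\sum_{t=0}^m N(k_1,\ldots,k_r;t)\,q_t(x),
\end{equation*}
again pointwise in $x$. Taking the expectation over $x$ uniform in $\mathbb{F}_p^n$ and using linearity gives the claim.

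There is no real obstacle. The only points needing care are the parity reduction $f_i^2=1$, which is exactly where the balanced $\pm1$ structure is used and which fails for the general $g_i$ later, and the symmetry argument identifying $M(U)$ with $N(\cdot;|U|)$. We also note that the identity actually holds before taking expectations, which is slightly stronger than what is stated.
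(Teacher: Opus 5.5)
Your proof is correct and follows the same route as the paper: expand the product, use $f_i^2\equiv 1$ to collapse each term to $\prod_{i\in\Delta(T_1,\ldots,T_r)} f_i(\langle b_i,x\rangle)$, and regroup by the symmetric difference. Your permutation argument that the count depends only on $|U|$ makes explicit a step the paper uses without comment. As you note, the identity already holds pointwise before taking expectations.
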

\begin{proof}
Let $\bigsqcup$ denote the disjoint multi-set union.
As $f_i$ are $\pm 1$-valued, we have
\begin{equation}
\begin{aligned}
\mb{E}\left[q_{k_1}(x)\ldots q_{k_r}(x)\right] & =\mb{E}\left[\sum_{S_i\in\binom{[m]}{k_i}} \prod_{i\in S_1\bigsqcup \ldots \bigsqcup S_r}f_i(\langle b_i, x\rangle)\right]
\\ & =\mb{E}\left[\sum_{T\subset[m]}N(k_1, \ldots, k_r;|T|)\prod_{i\in T}f_i(\langle b_i, x\rangle)\right]
\\ & = \sum_{t=0}^{m}\mb{E}\left[\sum_{T\in\binom{[m]}{t}} N(k_1, \ldots, k_r;t)\prod_{i\in T}f_i(\langle b_i, x\rangle) \right]
\\ & = \sum_{t=0}^{m} N(k_1, \ldots, k_r;t) \mb{E} [q_t(x)]
\end{aligned} 
\end{equation}
as desired.
\end{proof}
Now, we have all the ingredients to give our streamlined account of the performance of DQI. As $k+k'+1\le 2\ell+1 < d^\perp$, and recalling the definition of $\mathbb{P}_u$ from \eqref{eq:dqi-Pu}, we compute by \cref{prop:bal-prod} and \cref{lem:balanced-fourier} that
\begin{equation}
\label{eq:bal-quad-form-q1}
\mb{E}_{x\sim \mb{P}_u} [q_1(x)] = \frac{\sum_{k, k'=0}^\ell u_ku_{k'}\mb{E}[q_k(x)q_{k'}(x)q_1(x)]}{\sum_{k, k'=0}^\ell u_ku_{k'}\mb{E}[q_k(x)q_{k'}(x)]}
=\frac{\sum_{k, k'=0}^\ell u_ku_{k'}N(k,k',1; 0)}{\sum_{k, k'=0}^\ell u_ku_{k'}N(k,k'; 0)}.
\end{equation}

At this point, one way to conclude \cref{prop:dqi-on-opi} is as follows: Notice that 
\cref{eq:bal-quad-form-q1} is equivalent to \cref{eq:dqi-quadratic-form} upon expressing both quadratic forms with $w$. We see this calculation later in \cref{prop:step-0}. Then, we can apply \cref{prop:dqi-on-opi} to maximize the quadratic forms and recover $\on{SCL}_{1/2}(\mu)$.

Here, we take a different approach and bypass the analysis of matrix $A^{(\ell)}$ in both the eigenvalue control as in \cite{dqi} and the analogous characteristic polynomial control in \cref{prop:3-term-rec}.
\begin{lemma}\label{lem:Nzero}
If $k_1+\ldots +k_r$ is odd, then $ N(k_1, \ldots, k_r; 0)=0$. Otherwise,
\begin{equation}
 N(k_1, \ldots, k_r; 0) = 2^{-m}\sum_{t=0}^m \binom{m}{t} \prod_{i=1}^r K_{k_i}(t).
\end{equation}
\end{lemma}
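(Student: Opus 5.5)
We count tuples via characters of the Boolean cube $\{0,1\}^m$, which turns the symmetric-difference condition into an orthogonality relation. Identify a subset $T\subseteq[m]$ with its indicator vector in $\mb{F}_2^m$. Then $\Delta(T_1,\ldots,T_r)$ corresponds to the sum $T_1+\cdots+T_r$ in $\mb{F}_2^m$, and the condition $\Delta(T_1,\ldots,T_r)=\emptyset$ says this sum is zero. Write the indicator of that event as the character average
\begin{equation*}
\mbf{1}\{T_1+\cdots+T_r=0\}=2^{-m}\sum_{v\in\mb{F}_2^m}(-1)^{\langle v,T_1+\cdots+T_r\rangle}=2^{-m}\sum_{v\in\mb{F}_2^m}\prod_{i=1}^r(-1)^{|v\cap T_i|}.
\end{equation*}
Summing over all $T_i\in\binom{[m]}{k_i}$ and exchanging the order of summation gives
\begin{equation*}
N(k_1,\ldots,k_r;0)=2^{-m}\sum_{v}\prod_{i=1}^r\Big(\sum_{T\in\binom{[m]}{k_i}}(-1)^{|v\cap T|}\Big).
\end{equation*}

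The key step is to evaluate the inner sum, which should be the Kravchuk polynomial $K_{k_i}(|v|)$. If $|v|=x$, choosing $j$ elements of $T$ inside $v$ and $k_i-j$ outside gives $\sum_j(-1)^j\binom{x}{j}\binom{m-x}{k_i-j}$, which is exactly $K_{k_i}(x)$ by \cref{eq:kravchuk-def}. Equivalently, this is the $[z^{k_i}]$ coefficient of $(1-z)^x(1+z)^{m-x}$, matching \cref{eq:kravchuk-gen-func}. Since the product depends on $v$ only through $t=|v|$, grouping the $\binom{m}{t}$ vectors of each weight yields the claimed formula $2^{-m}\sum_t\binom{m}{t}\prod_i K_{k_i}(t)$.

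For the parity claim, counting argument suffices. Each element appears an even number of times in total when $\Delta=\emptyset$, so $\sum_i|T_i|=\sum_i k_i$ is even, and $N=0$ otherwise. As a consistency check, $K_k(m-t)=(-1)^kK_k(t)$, so the formula is also visibly zero in the odd case via $t\mapsto m-t$.

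There is no real obstacle here. The only care needed is bookkeeping with the sign convention: one must confirm that the $(-1)^j$ in \cref{eq:kravchuk-def} corresponds to the elements of $T$ that lie inside $v$.
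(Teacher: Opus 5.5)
Your proof is correct and is essentially the paper's argument in Fourier language: the paper's per-coordinate factor $\frac12\bigl[\prod_i(1+X_i)+\prod_i(1-X_i)\bigr]$ is exactly your character average over $v_j\in\{0,1\}$, raising it to the $m$-th power and expanding binomially is your grouping of $v$ by weight $t$, and the paper's coefficient extraction $[X_i^{k_i}](1-X_i)^t(1+X_i)^{m-t}=K_{k_i}(t)$ is your inner sum, which you evaluate with the closed form of $K_k$ rather than its generating function. Your extra check via $K_k(m-t)=(-1)^kK_k(t)$ is a nice touch, since it shows that the formula holds unconditionally and itself vanishes when $\sum_i k_i$ is odd.
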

\begin{proof}
The odd sum case is trivial by parity. When the sum of the $k_i$ is even, we compute the generating function for $N(k_1, \ldots, k_r;0)$. Define
$F\in \mb{Z}[X_1, \ldots, X_r]$ by
\begin{equation}\label{eq:F}
F(X_1, \ldots, X_r) \coloneqq \sum_{\varepsilon\in\mc{E}}\prod_{i=1}^r\prod_{j=1}^mX_i^{\varepsilon_{ij}}\quad\text{where}\quad \mathcal{E} \coloneqq \left\{ \varepsilon \in \{0,1\}^{r\times m} \,:\, 2 \, \Big\vert \,\sum_{i=1}^r\varepsilon_{ij} \ \forall j \in [m] \right\}.
\end{equation}
For any $\varepsilon \in \{0,1\}^{r\times m}$, let $T_i = \{ j \in [m] \,:\, \varepsilon_{ij} = 1\}$. Observe that $\varepsilon \in \mathcal{E}$, i.e. $\sum_{i=1}^r \varepsilon_{ij}$ is even for all $j$, if and only if $\Delta(T_1, \ldots, T_r) = \emptyset$.
Thus, upon expanding the right hand side of \cref{eq:F}, the coefficient of $X_1^{k_1}\ldots X_r^{k_r}$ is exactly the number of sets $T_1, \ldots, T_r$ such that $|T_i|=k_i$ and $\Delta(T_1, \ldots, T_r)=\emptyset$, which is exactly 
$N(k_1, \ldots, k_r; 0)$. Therefore, $F$ is the generating function of $N(k_1, \ldots, k_r; 0)$, i.e.
\begin{equation}\label{eq:Nkk-gen-func}
    F(X_1, \ldots, X_r) = \sum_{k_1, \ldots, k_r} N(k_1, \ldots, k_r; 0)X_1^{k_1}\ldots X_r^{k_r}.
\end{equation}
Then, we factor over $j$ and observe the sum is independent of $j$, so 
\begin{equation}
F(X_1, \ldots, X_r) = \prod_{j\in [m]}\left(\sum_{\substack{(\varepsilon_{1j}, \ldots, \varepsilon_{rj})\\2\mid \sum_{i=1}^r\varepsilon_{ij}}}\prod_{i=1}^r X_i^{\varepsilon_{ij}}\right)
=\left[\frac{1}{2}\prod_{i=1}^r (1+X_i)+\frac{1}{2}\prod_{i=1}^r (1-X_i)\right]^m.
\end{equation}
Let $[\prod_i X_i^{k_i}]F$ denote the coefficient of $\prod_i X_i^{k_i}$ in $F$. By \cref{eq:Nkk-gen-func}, we obtain
\begin{equation}
\begin{aligned}
N(k_1, \ldots, k_r; 0) & =\left[\prod_i X_i^{k_i}\right]F(X_1, \ldots, X_r)
\\ & = 2^{-m}\sum_{t=0}^m \binom{m}{t} \prod_{i=1}^r \left[X_i^{k_i}\right](1-X_i)^t(1+X_i)^{m-t}
\\ & = 2^{-m}\sum_{t=0}^m \binom{m}{t} \prod_{i=1}^r K_{k_{i}}(t)
\end{aligned}
\end{equation}
using the generating function of $K_{\ell}(t)$ given in \cref{eq:kravchuk-gen-func}.
\end{proof}

Applying \cref{lem:Nzero} to \eqref{eq:bal-quad-form-q1}, and using that $K_1(t)=m-2t$, we have that 
\begin{equation}\label{eq:pre-KKT}
\mb{E}_{x\sim \mb{P}_u} [q_1(x)] =
\frac{\sum_{t=0}^m \binom{m}{t}(m-2t)X_u(t)^2}{\sum_{t=0}^m \binom{m}{t}X_u(t)^2} \quad\text{where}\quad X_u(t):=\sum_{k=0}^\ell u_k K_k(t).
\end{equation}
We view this as the expectation of $m-2T$ for random variable $T$ with $\mb{P}(T=t)\propto \binom{m}{t}X_u(t)^2$. To maximize satisfaction fraction, we choose $u$ to minimize $\mb{E}T$. We comput this minimum.

\begin{lemma}
\label{lem:KKT}
Over all $u\in\mb{R}^{\ell+1}$, minimum value of $\mb{E}T$ with $\mb{P}(T=t)\propto \binom{m}{t}X_u(t)^2$ is $Z_{\min}(K_{\ell+1})$.
\end{lemma}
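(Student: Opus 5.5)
The proof is a Rayleigh quotient argument for the orthogonal polynomial basis $\{K_k\}$ with respect to the weight $\binom{m}{t}$ on $\{0,\dots,m\}$. Write
\[
\mb{E}T=\frac{\langle t\,X_u,X_u\rangle}{\langle X_u,X_u\rangle},\qquad \langle P,Q\rangle\coloneqq\sum_{t=0}^m\binom{m}{t}P(t)Q(t).
\]
Here $X_u$ ranges over all polynomials of degree at most $\ell$. The reason is that $K_0,\dots,K_\ell$ have exact degrees $0,\dots,\ell$ and so form a basis of that space. Since $\ell+1\le m$, the form is positive definite on it. The task is therefore to minimize the Rayleigh quotient of multiplication by $t$ over polynomials of degree at most $\ell$.

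\textbf{Step 1 (matrix form).} I would express the quotient in the orthonormal basis $\tilde K_k=K_k/\sqrt{2^m\binom{m}{k}}$, using the orthogonality relation stated in \cref{sec:moments}. The three-term recursion from the proof of \cref{prop:3-term-rec},
\[
(k+1)K_{k+1}=(m-2t)K_k-(m-k+1)K_{k-1},
\]
gives $tK_k=\tfrac12\bigl(mK_k-(k+1)K_{k+1}-(m-k+1)K_{k-1}\bigr)$. The Gram matrix of multiplication by $t$, truncated to degrees at most $\ell$, is then the Jacobi matrix $J_\ell=\tfrac m2 I-\tfrac12 D A^{(\ell)}D^{-1}$. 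Here $D$ is a diagonal sign-and-scaling matrix, and the off-diagonal entries symmetrize to $\sqrt{k(m+1-k)}/2$ up to sign. A diagonal $\pm1$ conjugation flips these signs, so the spectrum of $J_\ell$ is that of $\tfrac m2 I-\tfrac12 A^{(\ell)}$.

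\textbf{Step 2 (minimize).} The minimum of the Rayleigh quotient is $\lambda_{\min}(J_\ell)$. By \cref{prop:3-term-rec}, the characteristic polynomial of $\tfrac m2 I+\tfrac12 A^{(\ell)}$ is $k_{\ell+1}$. The spectrum of $A^{(\ell)}$ is symmetric about $0$, since the matrix is bipartite tridiagonal with zero diagonal. Hence the spectrum of $\tfrac m2 I-\tfrac12 A^{(\ell)}$ is also the root set of $k_{\ell+1}$. The minimum is therefore $Z_{\min}(k_{\ell+1})=Z_{\min}(K_{\ell+1})$.

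\textbf{Alternative route.} One can bypass the matrices with the classical Gauss-quadrature argument. Let $z_1<\dots<z_{\ell+1}$ be the roots of $K_{\ell+1}$. Since $\deg (t-z_1)X_u^2\le 2\ell+1$, quadrature at these roots is exact, so
\[
\langle (t-z_1)X_u,X_u\rangle=\sum_j w_j(z_j-z_1)X_u(z_j)^2\ge 0,
\]
because the Christoffel weights $w_j$ are positive. This gives the lower bound. Equality is attained by $X_u=K_{\ell+1}/(t-z_1)$, which has degree $\ell$. I would likely present this version, citing \cref{thm:cms} for the positive weights and exact quadrature, which are encoded in its moment-matching items.

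\textbf{Main obstacle.} The hard part is the bookkeeping of signs and normalizations: checking that the scaling between $K$ and the monic $k$ really matches the recursion coefficients of $A^{(\ell)}$. In the quadrature route, the delicate point is justifying exactness for degree $2\ell+1$ and the positivity of the weights.
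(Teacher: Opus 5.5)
Your proposal is correct, and both of your routes differ from the paper's. The paper argues variationally. It takes a minimizer $u_\star$ and writes the first-order condition $\sum_{t}(t-\gamma)\binom{m}{t}X_{u_\star}(t)K_j(t)=0$ for $0\le j\le \ell$, with $\gamma$ the optimal value. From this it concludes that $(t-\gamma)X_{u_\star}$ is a nonzero multiple of $K_{\ell+1}$, so $\gamma$ is a root of $K_{\ell+1}$ and hence $\gamma\ge Z_{\min}(K_{\ell+1})$. Attainment uses the same explicit choice $X_u=K_{\ell+1}/(t-Z_{\min}(K_{\ell+1}))$ that closes your quadrature route.

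Your Jacobi-matrix route is sound, and simpler than you fear. In the orthonormal basis $\tilde K_k$, the Gram matrix of multiplication by $t$ is exactly $\tfrac m2 I-\tfrac12 A^{(\ell)}$, since the $(k,k-1)$ entry comes out as $-\tfrac12\sqrt{k(m+1-k)}$. So no conjugation $D$ is needed. Moreover, off-diagonal signs enter the determinant recursion only through their squares, so \cref{prop:3-term-rec} gives the characteristic polynomial as $k_{\ell+1}$ directly, without the spectral-symmetry step. However, this route goes through exactly the eigenvalue and characteristic-polynomial analysis of $A^{(\ell)}$ that the paper says this third proof is meant to bypass. What it buys is an explicit identification of \cref{eq:pre-KKT} with the DQI quadratic form \cref{eq:dqi-quadratic-form}.

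Your quadrature route is the most economical of the three. Exactness up to degree $2\ell+1$ and positivity of the weights are just item 1 of \cref{thm:cms} with $\ell+1$ in place of $\ell$; the masses $q(z_k)$ are manifestly positive. It gives $\mb{E}T\ge Z_{\min}(K_{\ell+1})$ for every $u$ at once, without needing a minimizer to exist or differentiating a quotient. It also ties the discrepancy proof directly to the moments-problem machinery of \cref{sec:moments-proof}. The paper's stationarity argument, by contrast, uses nothing beyond orthogonality of the $K_k$.
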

\begin{proof}
Let $u_\star$ be the minimizer of $\mb{E}T$. By stationarity, we have that
\begin{equation}
0 = \frac{\partial[\mb{E}T]}{\partial u_j}\bigg\vert_{u=u_\star}=\sum_{t=0}^m (t-\gamma)\binom{m}{t}X_{u_\star}(t)K_j(t)
\end{equation}
for each $0\le j\le \ell$. Hence $(t-\gamma)X_{u_\star}(t)$ is orthogonal to $K_0, \ldots, K_\ell$ under the measure induced by $\on{Bin}(m, 1/2)$ and it has degree at most $\ell+1$ as $\deg X_u\le \ell$. This implies that $(t-\gamma)X_u(t)$  must be a multiple of the next Kravchuk polynomial $K_{\ell+1}$. Thus, $\mb{E}T\ge Z_{\min}(K_{\ell+1})$ for every $u$.

Conversely, as $K_0, \dots, K_\ell$ is an orthonoormal basis of polynomials of degree at most $\ell$ under the inner product in \cref{eq:orth-rel} for the $\sigma=\on{Bin}(m, 1/2)$ distribution, there exists $u\in\mb{R}^{\ell+1}$ such that
\[
\frac{K_{\ell+1}(t)}{t-Z_{\min}(K_{\ell+1})} = \sum_{k=0}^\ell u_k K_k(t)=X_u(t).
\]
since the left hand side has degree at most $\ell$. Recalling \cref{eq:orth-rel}, we have that
\[
\mb{E}T -Z_{\min}(K_{\ell+1}) =  \frac{\sum_{t=0}^m (t-Z_{\min}(K_{\ell+1}))\binom{m}{t}X_u(t)^2}{\sum_{t=0}^m \binom{m}{t}X_u(t)^2}
=\frac{\langle K_{\ell+1}, X_u\rangle_\sigma}{\langle X_u, X_u\rangle_\sigma}
= \sum_{k=0}^\ell u_k\frac{\langle K_{\ell+1}, K_k\rangle_\sigma}{\langle X_u, X_u\rangle_\sigma}
=0.
\]
Therefore, there exists $u$ such that $\mb{E}T$ is exactly the smallest root of $K_{\ell+1}$.
\end{proof}

Now, we compute this smallest root and \cref{prop:dqi-on-opi} follows.
\begin{proof}[Third Proof of \cref{prop:dqi-on-opi}]
Combining \cref{eq:bal-s-into-q1,eq:pre-KKT}, we see that for any $u\in\mb{R}^{\ell+1}$, and with $T$ distributed according to \cref{lem:KKT} parametrized by $u$, that
\begin{equation}
\label{eq:mET}
\mb{E}_{x\sim \mb{P}_u}[s(x)] = 1-\frac{1}{m}\mb{E} T\le 1-\frac{Z_{\min}(K_{\ell+1})}{m}.
\end{equation}
Now, $K_{\ell+1}(x)=(-1)^{\ell+1}K_{\ell+1}(m-x)$ by \cref{eq:kravchuk-def}, so $Z_{\min}(K_{\ell+1})=m-Z_{\max}(K_{\ell+1})$. Applying \cref{fact:Nki-kravchuk} to \cref{eq:mET}, we see that
\begin{equation}
\label{eq:mETT}
\mb{E}_{x\sim \mb{P}_u}[s(x)] \le \frac{Z_{\max}(K_{\ell+1})}{m} \to \on{SCL}_{1/2}(\mu),
\end{equation}
recovering \cref{prop:dqi-on-opi}.
\end{proof}
We remark that the proof of \cref{prop:dqi-on-opi} uses only one of the directions of \cref{lem:KKT}. The converse direction suggests a barrier at the semi-circle law: Namely, in \cref{eq:bal-s-into-q1}, any polynomial $X_u$ of degree at most $\ell$ can do no better. The bottleneck is essentially the requirement that $\ell < d^\perp/2$ due to \cref{lem:balanced-fourier}. If we could increase $\ell$ while preserving \cref{eq:bal-quad-form-q1}, then choosing $X_u$ to be a higher degree Kravchuk polynomial would yield improvements beyond the semi-circle law.

\subsection{The Semicircle Law Barrier}
\label{sec:motivation}
We discuss why all the proofs encounter the semicircle law barrier, which motivates our approach in \cref{sec:improvements} to improve beyond it.
Observe that \cref{lem:balanced-fourier} is the key fact in the DQI proof in \cite{dqi} and the discrepancy proof; while the moments proof relies on the fact that random variable $X=m\cdot s(x)$, for a uniform random solution $x$,has the same first $n$ moments as $\on{Bin}(m, 1/2)$.

These two facts are equivalent. Although $x\sim \mb{P}_u$ is not uniform in \cref{lem:balanced-fourier}, the symmetric form \cref{eq:dqi-Pu} means $\mb{E}[q_k(x)]$ is a linear combination of the first $k$ moments of $X$. Then, both facts are $n$ equations on the first $n$ moments of $X$ that is satisfied by $\on{Bin}(m, 1/2)$, and must be equivalent.

To beat the semicircle law, we need to sample $x$ not of the form in \cref{eq:dqi-Pu}. One naive idea is to sample $\mb{P}(x)\propto \left(\sum_{k=0}^\ell u_kq_k(x)\right)^4$. Controlling terms in \cref{prop:bal-prod} with \cref{lem:balanced-fourier} requires setting $4\ell < d^\perp$. However, it turns out we observe empirically that even the optimal $u$ cannot improve beyond the semicircle law of DQI. In light of the discussion above, falling short of the semicircle law without going beyond $d^\perp$ is not surprising. Consider the following question.

\begin{question}
Are there sets $S_i$ such that $s(x)$ for $x\sim\on{Unif}(\mb{F}_p^n)$ converges in distribution as $m\to\infty$ to $Z/m$ where $Z$ is the principle representation in \cref{thm:cms} associated to $\on{Bin}(m, 1/2)$?
\end{question}

Intuitively, the answer to this question seems to be no: it seems obvious that there is no way to choose $S_i$ such the number of satisfied constraints take at most $\sim n/2$ values as we vary $x$, but we are not aware of a proof.
Without a negative answer to the question, it could be that the maximum satisfaction is actually the semicircle law. Therefore, any successful improvement needs to distinguish $s(x)$ and $Z/m$ further, beyond looking at the first $n$ moments.

\section{Beyond the Semicircle Law}
\label{sec:improvements}
In this section, we set up how we improve beyond the semicircle law. Our approach is a combination of techniques from works on local leakage resilience of Shamir's secret sharing scheme, combined with our framework in \cref{sec:discrepancy} extended to handle beyond the minimum distance. 
\subsection{The Generalized Framework}
\label{sec:framework}
We begin by generalizing much of the discrepancy proof in \cref{sec:discrepancy} to the case when input lists $S_i\subset\mb{F}_p$ have size $\rho p$ for $\rho \in (0, 1)$ which in particular resolves the balanced case $|S_i|\sim p/2$. We will choose $\ell \coloneqq (\mu + \delta)m$ for some $\delta >0$ and define $\mb{P}_u$ as in \cref{eq:dqi-Pu} with this choice of $\ell$. We aim to show
\begin{equation}
\mb{E}_{x \sim \mb{P}_u} \left[s(x)\right]\ge \on{SCL}_\rho(\mu+\delta)\quad\text{where}\quad \on{SCL}_\rho(\alpha) \coloneqq \begin{cases}
\left(\sqrt{\alpha(1-\rho)}+\sqrt{\rho(1-\alpha)}\right)^2 &\text{if }\alpha+\rho \le 1\\
1 & \text{if }\alpha+\rho \ge 1.
\end{cases}
\end{equation}

For the general $\rho$ case, we instead work with a normalized version $g_i$ of $f_i$, such that $\mb{E}[g_i] = 0$ and $\on{Var}[g_i]=1$. We define $k$-wise discrepancy $q_k$ from it, namely
\begin{equation}\label{eq:def-g}
q_k(x)\coloneqq \sum_{S\in\binom{[m]}{k}} \prod_{i\in S} g_i(\langle b_i, x\rangle)\quad\text{where}\quad g_i(x_i)=\frac{f_i(x_i)-\mb{E}[f_i]}{\sqrt{\on{Var}[f_i]}}=\frac{f_i(x_i)-(2\rho-1)}{2\sqrt{\rho(1-\rho)}}.
\end{equation}
First, as before, with expectation over distribution $x\sim \mb{P}_u$, we can write the objective as 
\begin{equation}\label{eq:s-to-q1}
    \mb{E}s(x) = \frac{1}{m}\sum_{i=1}^m  \mb{E}\left[\frac{f_i(\langle b_i, x\rangle)+1}{2}\right]
= \frac{1}{m}\sum_{i=1}^m  \mb{E}\left[\rho+\sqrt{\rho(1-\rho)}g_i(\langle b_i, x\rangle)\right]
= \rho + \frac{\sqrt{\rho(1-\rho)}}{m} \mb{E}[q_1(x)].
\end{equation}
Note that in the perfectly balanced case where $\rho =1/2$, we have $g_i = f_i$, we recover the same definition of $q_k$, and this agrees with \cref{eq:bal-s-into-q1}. For the general $\rho$ case, defining $q_k$ based on $g_i$ enables the following generalization of \cref{lem:balanced-fourier}.
\begin{lemma}
\label{lem:key-fourier}
Let $C_t^\perp\subset C^\perp$ be codewords $y\in C^\perp$ with Hamming weight $t$. Over uniform $x\in\mb{F}_p^n$
\begin{equation}
\mb{E}[q_t(x)] = \sum_{y\in C_t^\perp}\prod_{i:y_i\ne 0} \wh{g}_i(y_i).
\end{equation}
\end{lemma}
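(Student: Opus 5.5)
The plan is to expand each $g_i$ in its Fourier series over $\mb{F}_p$, multiply out the product over $S$, and average over $x$; the uniform average kills every term except those whose frequencies form a dual codeword. Concretely, by the Fourier inversion convention of the preliminaries (applied with $n=1$),
\[
g_i(a)=\sum_{y_i\in\mb{F}_p}\wh{g}_i(y_i)\,e_p(-a y_i),
\]
and by construction $\wh{g}_i(0)=\mb{E}_{a\in\mb{F}_p}[g_i(a)]=0$. This vanishing mean, which the normalization in \cref{eq:def-g} guarantees, is the essential input.

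For a fixed $S\in\binom{[m]}{k}$, we have
\[
\prod_{i\in S}g_i(\langle b_i,x\rangle)=\sum_{(y_i)_{i\in S}}\prod_{i\in S}\wh{g}_i(y_i)\,e_p\Big(-\Big\langle x,\sum_{i\in S}y_ib_i\Big\rangle\Big).
\]
Since $\wh g_i(0)=0$, we may restrict to $y_i\neq0$ for all $i\in S$. Extending $y$ by zero outside $S$ gives a vector $y\in\mb{F}_p^m$ with support exactly $S$ and $\sum_i y_ib_i=B^\top y$. Averaging over uniform $x\in\mb{F}_p^n$ uses character orthogonality: $\mb{E}_x[e_p(-\langle x,B^\top y\rangle)]=\mbf{1}\{B^\top y=0\}$. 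This leaves exactly the $y\in C^\perp$ with $\on{supp}(y)=S$.

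Summing over $S$ with $|S|=t$ then gives a bijection with $y\in C^\perp_t$, since each such $y$ has a unique support. This yields $\mb{E}[q_t(x)]=\sum_{y\in C^\perp_t}\prod_{i:y_i\ne0}\wh g_i(y_i)$.

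No step here is a real obstacle. The one point requiring care is the bookkeeping that uses $\wh g_i(0)=0$ to convert ``support contained in $S$'' into ``support equal to $S$''. Without it we would overcount; this is exactly why the unnormalized $f_i$ fail when $\rho\neq 1/2$. As a sanity check, $t=0$ gives $1$, and for $0<t<d^\perp$ the set $C^\perp_t$ is empty, which recovers \cref{lem:balanced-fourier}.
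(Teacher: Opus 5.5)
Your proof is correct and follows the paper's argument essentially verbatim. Both expand each $g_i$ by Fourier inversion, multiply out over the support set, and use $\mb{E}_x[e_p(\langle z,x\rangle)]=\mbf{1}\{z=0\}$ to keep only $y\in C^\perp$ with the prescribed support. You also state explicitly that $\wh{g}_i(0)=\mb{E}[g_i]=0$, which justifies passing from $y_i\in\mb{F}_p$ to $\on{supp}(y)=T$; the paper uses this fact only implicitly at that step.
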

\begin{proof}
Temporarily let $\wh{g}(y)=\prod_{i:y_i\ne 0} \wh{g}_i(y_i)$. 
By definition of $q_t$, we compute that
\begin{equation}
\begin{aligned}
\mb{E} \left[q_t(x) \right] & = \sum_{T\in\binom{[m]}{t}} \mb{E}\left[\prod_{i\in T}g_i\left(\langle b_i, x\rangle\right)\right]
\\ & = \sum_{T\in\binom{[m]}{t}} \mb{E} \left[\prod_{i\in T}\left(\sum_{y_i\in \mb{F}_p} \wh{g_i}(y_i) e_p\left(-y_i\langle b_i, x\rangle\right)\right)\right]
\\ & = \sum_{T\in\binom{[m]}{t}} \mb{E} \left[\sum_{y\in \mb{F}_p^m:\on{supp} y =T} \wh{g}(y)e_p\left(-\sum_{i\in T}y_i\langle b_i, x\rangle\right)\right]
\\ & =\sum_{y\in \mb{F}_p^m:|y|=t}\wh{g}(y) \mb{E}\left[e_p\left(-\langle B^\top y, x\rangle\right)\right],
\end{aligned}
\end{equation}
where in the last step, note that if $b_i$ is the $i$-th row of $B$ then
\begin{equation}
\sum_{i:y_i\ne 0}y_i\langle b_i, x\rangle
=\sum_{i=1}^m\sum_{j=1}^n y_i B_{ij}x_j = \langle B^\top y, x\rangle.
\end{equation}
Now, note that $\mb{E}[e_p(\langle z, x\rangle)]=\mbf{1}\{z=0\}$ for any $z \in \mb{F}_p^n$, so we get
\begin{equation}
\mb{E} \left[q_t(x) \right]=\sum_{y\in \mb{F}_p^m:|y|=t}\wh{g}(y) \mbf{1}\{B^Ty=0\}
=\sum_{y\in C_t^\perp}\wh{g}(y),
\end{equation}
by definition of $C^\perp_t$.
\end{proof}
The proof of \cref{prop:bal-prod}, which expresses $\mathbb{E}[ q_{k_1}(x)\cdots q_{k_r}(x)]$ in terms of $N(k_1, \ldots, k_r; t)$ does not hold if we replace the $f_i$ with the $g_i$, as it used the fact that the $f_i$ were $\pm 1$-valued, and hence $f_i(a)^2 = 1$ for all $a$. To recover something similar in spirit, 
the key analogous observation is the following interpolation/degree reduction step using the fact that $g_i$ takes \emph{two} values.
\begin{lemma}
For any $i\in [m]$,
\begin{equation}\label{eq:beta}
g_i(a)^2 = 1+\beta g_i(a)\quad\text{where}\quad\beta \coloneqq \frac{1-2\rho}{\sqrt{\rho(1-\rho)}}.
\end{equation}
\end{lemma}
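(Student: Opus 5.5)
The identity is pointwise and purely algebraic, so the plan is to use only the fact that $g_i$ takes two values. Fix $i$ and $a\in\mb{F}_p$. By \cref{eq:fi}, $f_i(a)\in\{-1,1\}$. By \cref{eq:def-g}, $g_i(a)$ is therefore one of the two numbers
\[
g_+\coloneqq \frac{1-(2\rho-1)}{2\sqrt{\rho(1-\rho)}}=\sqrt{\frac{1-\rho}{\rho}},\qquad g_-\coloneqq \frac{-1-(2\rho-1)}{2\sqrt{\rho(1-\rho)}}=-\sqrt{\frac{\rho}{1-\rho}}.
\]
Any function taking only the two values $g_\pm$ satisfies $(g-g_+)(g-g_-)=0$, that is, $g^2=(g_++g_-)\,g-g_+g_-$. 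So it suffices to compute the sum and the product of $g_+$ and $g_-$.

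For the product, we get directly $g_+g_-=-1$, so the constant term is $1$. For the sum, we compute
\[
g_++g_-=\frac{(1-\rho)-\rho}{\sqrt{\rho(1-\rho)}}=\frac{1-2\rho}{\sqrt{\rho(1-\rho)}}=\beta.
\]
Substituting both values gives $g_i(a)^2=1+\beta g_i(a)$ for every $a$, which is \cref{eq:beta}.

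As a sanity check, take expectations under the uniform measure on $\mb{F}_p$ with $|S_i|=\rho p$. Then $\mb{E}[g_i]=0$ and $\mb{E}[g_i^2]=1$, which agrees with the normalization built into \cref{eq:def-g}. The only potential obstacle is arithmetic in simplifying $g_\pm$. There is no analytic difficulty: the key point is that $\beta$ does not depend on $i$, since all the $S_i$ share the same density $\rho$.
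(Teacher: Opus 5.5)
Your proof is correct and is essentially the paper's argument. The paper also uses only that $g_i$ takes the two values $\sqrt{(1-\rho)/\rho}$ and $-\sqrt{\rho/(1-\rho)}$, verifying $g_i(a)^2=1+\beta g_i(a)$ separately for $a\in S_i$ and $a\notin S_i$; you package the same check as $(g-g_+)(g-g_-)=0$ with $g_+g_-=-1$ and $g_++g_-=\beta$, which has the small advantage of showing where $\beta$ comes from.
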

\begin{proof}
If $a\in S_i$, then $g_i(a) = \sqrt{(1-\rho)/\rho}$, and 
\[
1+\beta g_i(a) = 1+ \frac{1-2\rho}{\sqrt{\rho(1-\rho)}}\cdot \sqrt{\frac{1-\rho}{\rho}}=\frac{1-\rho}{\rho}
=g_i(a)^2.\]
If $a\not\in S_i$, then $g_i(a) = -\sqrt{\rho/(1-\rho)}$, and 
\[
1+\beta g_i(a) = 1- \frac{1-2\rho}{\sqrt{\rho(1-\rho)}}\cdot \sqrt{\frac{\rho}{1-\rho}}=\frac{\rho}{1-\rho}=g_i(a)^2,
\]
as desired.
\end{proof}
This allows us to recover an analog of \cref{prop:bal-prod}.
\begin{proposition}\label{prop:gen-prod}
In expectation over $x\sim\on{Unif}(\mb{F}_p^n)$, we have that
\begin{equation}\label{eq:def-N-rho}
\mb{E}\left[q_k(x)q_{k'}(x)\right] = \sum_{t=0}^{m} N_\rho(k, k';t)\mb{E}[q_t(x)]\quad\text{where}\quad N_\rho(k, k';t) = \sum_{j=0}^t\beta^{j}\binom{t}{j}\binom{t-j}{\frac{t+k-k'-j}{2}}\binom{m-t}{\frac{k+k'-t-j}{2}},
\end{equation}
where we adopt the convention that binomial coefficients are zero if the arguments are not integers.
\end{proposition}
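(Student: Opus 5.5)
We expand the product $q_k(x)q_{k'}(x)$ pointwise in $x$, reduce it with the two-valued identity \cref{eq:beta}, and then take expectations term by term. Write $g_i$ for $g_i(\langle b_i,x\rangle)$. Then
\[
q_k(x)q_{k'}(x)=\sum_{S\in\binom{[m]}{k}}\sum_{S'\in\binom{[m]}{k'}}\prod_{i\in S\triangle S'} g_i\prod_{i\in S\cap S'} g_i^2 .
\]
By \cref{eq:beta}, each factor $g_i^2$ on the overlap equals $1+\beta g_i$. Expanding $\prod_{i\in S\cap S'}(1+\beta g_i)$ gives a sum over subsets $J\subseteq S\cap S'$ of $\beta^{|J|}\prod_{i\in J}g_i$. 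Since $J$ is disjoint from $S\triangle S'$, every resulting monomial is multilinear:
\[
q_kq_{k'}=\sum_{S,S'}\sum_{J\subseteq S\cap S'}\beta^{|J|}\prod_{i\in (S\triangle S')\sqcup J} g_i .
\]
Next we regroup by the support $T=(S\triangle S')\sqcup J$. We will show that the coefficient of $\prod_{i\in T}g_i$ depends only on $|T|=t$; call it $N_\rho(k,k';t)$. Then $q_kq_{k'}=\sum_t N_\rho(k,k';t)\,q_t(x)$ holds pointwise, and taking expectations over uniform $x$ gives the claim. Note that \cref{lem:key-fourier} is not even needed for this step.

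The remaining work is the count, which is routine. Fix $T$ with $|T|=t$ and fix $j=|J|$. A triple $(S,S',J)$ with $S\triangle S'\sqcup J=T$ is determined by the following choices:
\begin{itemize}
\item[(i)] choose $J\subseteq T$ of size $j$, in $\binom{t}{j}$ ways;
\item[(ii)] split $T\setminus J$ into $S\setminus S'$ and $S'\setminus S$;
\item[(iii)] choose the part $(S\cap S')\setminus J$, which must lie outside $T$, hence in $[m]\setminus T$.
\end{itemize}
Let $a=|S\setminus S'|$, $b=|S'\setminus S|$ and $c=|(S\cap S')\setminus J|$. The constraints are
\[
a+b=t-j,\qquad a+j+c=k,\qquad b+j+c=k'.
\]
These give $a-b=k-k'$, so $a=\tfrac{t+k-k'-j}{2}$, and $c=\tfrac{k+k'-t-j}{2}$. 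The split in (ii) contributes $\binom{t-j}{a}$ and the choice in (iii) contributes $\binom{m-t}{c}$. Weighting by $\beta^j$ and summing over $j$ yields exactly the stated formula for $N_\rho$. When $a$ or $c$ is not a nonnegative integer there are no such triples, which matches the convention that those binomials vanish.

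The only point needing care is that the degree reduction really produces multilinear monomials, so that the regrouped coefficient is well defined. This holds because $J$ is always disjoint from $S\triangle S'$. Beyond that there is no real obstacle. As a sanity check, at $\rho=1/2$ we have $\beta=0$, so only the $j=0$ term survives and $N_\rho$ reduces to $N(k,k';t)$ from \cref{prop:bal-prod}.
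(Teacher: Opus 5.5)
Your proposal is correct and takes essentially the same route as the paper. You expand $q_kq_{k'}$, replace $g_i^2$ on $S\cap S'$ by $1+\beta g_i$, regroup by the multilinear support, and count via the choice of $J$, the split of $T\setminus J$, and the remaining overlap outside $T$. The only cosmetic difference is that you prove the identity pointwise in $x$ before averaging, while the paper carries the expectation throughout; the counting is identical.
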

\begin{proof}
We do a similar set-counting argument as in the proof of \cref{prop:bal-prod}.
\begin{equation}\label{eq:counting}
\begin{aligned}
\mb{E}\left[q_k(x)q_{k'}(x)\right] & = \sum_{|T|=k, |T'|=k'} \mb{E}\left[\prod_{i\in T\sqcup T'} g_i(\langle b_i, x\rangle)\right]
\\ & = \sum_{|T|=k, |T'|=k'} \mb{E}\left[\prod_{i\in T\Delta T'} g_i(\langle b_i, x\rangle)\cdot \prod_{i\in T\cap T'} g_i(\langle b_i, x\rangle)^2 \right]
\\ & = \sum_{|T|=k, |T'|=k'} \mb{E}\left[\prod_{i\in T\Delta T'} g_i(\langle b_i, x\rangle)\cdot \prod_{i\in T\cap T'} \left(1+\beta g_i(\langle b_i, x\rangle)\right) \right]
\\ & = \sum_{|T|=k, |T'|=k'} \mb{E}\left[\sum_{J\subset T\cap T'}\beta^{|J|}\prod_{i\in (T\Delta T')\cup J } g_i(\langle b_i, x\rangle) \right]
\\ & = \sum_{U\subset [m]}
\left(\sum_{|T|=k, |T'|=k'} 
\mbf{1}\left\{T\Delta T'\subset U\subset T\cup T'\right\} \beta^{|U|-|T\Delta T'|} \right)\mb{E}\left[\prod_{i\in U}g_i(\langle b_i, x\rangle) \right]
\\ & = \sum_{t=0}^m \mb{E}[q_t(x)] \left(\sum_{j=0}^t \beta^{j} \cdot \#\left\{(T, T')\in\binom{[m]}{k}\times\binom{[m]}{k'}:T\Delta T'\subset [t]\subset T\cup T', |T\Delta T'|=t-j\right\} \right)
\end{aligned}
\end{equation}
where we note that the term in the parenthesis on the penultimate line depends on $U$ only through $|U|=t$, so we let $U=[t]$, and $|T\Delta T'|=t-j$ for some $0\le j\le t$. 
To show \cref{eq:def-N-rho}, we need to show the number of pairs $(T, T')$ that are counted in the final line of \cref{eq:counting} is 
\begin{equation}\label{eq:three-binom}
 \binom{t}{j}\binom{t-j}{\frac{t+k-k'-j}{2}}\binom{m-t}{\frac{k+k'-t-j}{2}}.
\end{equation}
There are $\binom{t}{j}$ ways to choose $T\Delta T'$ of size $t-j$ from $[t]$. We know the other $j$ elements of $[t]$ must lie in $T\cap T'$, which has size $(|T|+|T'|-|T\Delta T'|)/2 = (k+k'-t+j)/2$. We choose the remaining $(k+k'-t-j)/2$ elements of $T\cap T'$ from $[m]\setminus [t]$, which gives the last binomial coefficient in \cref{eq:three-binom}. Finally, for each fixed $T\Delta T'$ of size $t-j$, we split it into $T\setminus T'$ and $T'\setminus T$ of sizes $(t+k-k'-j)/2$ and $(t-k+k'-j)/2$, respectively, giving the middle binomial coefficient in  \cref{eq:three-binom}. 

Note that if any of the set sizes computed above are not integral, then no such $(T, T')$ exists, agreeing with our binomial coefficient convention.
\end{proof}
Note that when $\rho=1/2$, then $\beta = 0$ and only the $j=0$ term contributes to $N_{1/2}(k,k';t)$ and it is exactly $N(k,k';t)$ from \cref{def:Nkt}. Observe that by definition, $N_\rho(k, k';s)=0$ for $s> k+k'$, and $\mb{E}[q_s(x)]=0$ for $0<s\le k+k'<d^\perp$. Therefore, the computation in \cref{prop:gen-prod} simplifies significantly below the $d^\perp$ cutoff.
\begin{corollary}
If $k+k'< d^\perp$ for $k, k'\ge 0$, then
\begin{equation}
\mb{E}[q_kq_k'] = N_\rho(k, k';0)\mb{E}[q_0]=\binom{m}{k}\mbf{1}\{k=k'\}.
\end{equation}
\end{corollary}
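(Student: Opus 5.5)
The plan is to start from \cref{prop:gen-prod}, which expresses $\mb{E}[q_k q_{k'}]$ as $\sum_{t=0}^m N_\rho(k,k';t)\,\mb{E}[q_t(x)]$, and show that only the $t=0$ term survives.

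First, every $t$ with $t>k+k'$ contributes nothing because $N_\rho(k,k';t)=0$. To see this, look at the formula in \cref{prop:gen-prod}. The last binomial coefficient there is $\binom{m-t}{(k+k'-t-j)/2}$. When $t>k+k'$, its lower argument is negative for every $j\ge 0$, so the coefficient vanishes. Combinatorially, $[t]\subseteq T\cup T'$ forces $t\le k+k'$.

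Second, every $t$ with $0<t\le k+k'$ contributes nothing because $\mb{E}[q_t(x)]=0$. In this range $0<t<d^\perp$. By \cref{lem:key-fourier}, $\mb{E}[q_t(x)]$ is a sum over $C^\perp_t$. This set is empty, since any nonzero dual codeword has weight at least $d^\perp$. Hence $\mb{E}[q_t]=0$.

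With both ranges disposed of, only $t=0$ remains. Here $C^\perp_0=\{0\}$, and $\mb{E}[q_0]=1$ because $q_0\equiv 1$ (it is the empty product). So we get $\mb{E}[q_kq_{k'}]=N_\rho(k,k';0)$.

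It remains to evaluate $N_\rho(k,k';0)$. At $t=0$ only $j=0$ appears in the sum, and $\beta^0=1$. The summand becomes
\[
\binom{0}{0}\binom{0}{\frac{k-k'}{2}}\binom{m}{\frac{k+k'}{2}}.
\]
The middle factor equals $1$ exactly when $k=k'$ and equals $0$ otherwise; this also covers the case where $(k-k')/2$ is not an integer. When $k=k'$, the last factor is $\binom{m}{k}$. Directly, this counts pairs with $T\Delta T'=\emptyset$, which means $T=T'$.

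Every step is immediate. The only point needing care is the binomial-coefficient convention: coefficients with negative or non-integer lower argument are taken to be zero, and this is what guarantees the vanishing claims above.
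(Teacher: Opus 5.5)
Your proposal is correct and follows the paper's own argument. The paper likewise starts from \cref{prop:gen-prod}, notes that $N_\rho(k,k';t)=0$ for $t>k+k'$ and that $\mb{E}[q_t]=0$ for $0<t\le k+k'<d^\perp$ by \cref{lem:key-fourier}, and then uses that only the $j=0$ term survives at $t=0$, so $N_\rho(k,k';0)=N(k,k';0)=\binom{m}{k}\mbf{1}\{k=k'\}$; your explicit evaluation of the middle factor $\binom{0}{(k-k')/2}$ is a clean way to verify that last identity.
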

Moreover, we derive three-term recursion analogous to \cref{prop:3-term-rec}, as follows.
\begin{proposition}\label{prop:q-triple-rec}
Recall the definition of $\beta$ from \cref{eq:beta}. For any $0\le k<m$,
\begin{equation}\label{eq:triple-rec}
q_1q_k  = (k+1)q_{k+1}+(m-k+1)q_{k-1}+\beta kq_k.
\end{equation}
\end{proposition}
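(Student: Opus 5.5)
This is a pointwise identity in $x$, so I will prove it by direct expansion, with no expectations involved. Write $g_i$ for $g_i(\langle b_i,x\rangle)$. Then
\[
q_1q_k=\sum_{i\in[m]}\sum_{S\in\binom{[m]}{k}} g_i\prod_{j\in S}g_j .
\]
I will split the double sum according to whether $i\notin S$ or $i\in S$, which gives two groups of terms.

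\emph{Case $i\notin S$.} The product is $\prod_{j\in S\cup\{i\}}g_j$, indexed by the $(k+1)$-set $U=S\cup\{i\}$. Each such $U$ arises exactly $k+1$ times, once for each choice of the distinguished element $i\in U$. These terms therefore contribute $(k+1)q_{k+1}$.

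\emph{Case $i\in S$.} The product is $g_i^2\prod_{j\in S\setminus\{i\}}g_j$. Here I use the degree-reduction identity \cref{eq:beta}, $g_i^2=1+\beta g_i$, to split it into two pieces.
\begin{itemize}
\item The "$1$" piece is $\prod_{j\in S\setminus\{i\}}g_j$, indexed by the $(k-1)$-set $V=S\setminus\{i\}$. A given $V$ arises from each $i\notin V$, so it appears $m-(k-1)=m-k+1$ times. This contributes $(m-k+1)q_{k-1}$.
\item The "$\beta g_i$" piece is $\beta\prod_{j\in S}g_j$. Each $S$ appears once for each of its $k$ elements $i$, so this contributes $\beta k q_k$.
\end{itemize}
Summing the three contributions gives \cref{eq:triple-rec}.

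The boundary cases need a quick check, but they are covered by the convention $q_{-1}=0$ and $\binom{[m]}{-1}=\emptyset$. For $k=0$ the second group is empty, and the formula gives $q_1q_0=q_1$. For $k=m-1$ the count of $(k+1)$-sets is still valid.

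There is no real obstacle here. The only point needing care is the multiplicity bookkeeping, in particular that $m-k+1$ counts the indices outside a $(k-1)$-set. As a sanity check, when $\rho=1/2$ we have $\beta=0$, and the identity matches the Kravchuk recursion in \cref{prop:3-term-rec} after the change of variables.
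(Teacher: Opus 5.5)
Your proof is correct and follows essentially the same route as the paper. You expand $q_1q_k$ as a double sum, split according to whether the singleton index lies in the $k$-set, reduce via $g_i^2=1+\beta g_i$, and count the multiplicities $k+1$, $m-k+1$, and $k$ for the three resulting pieces; your bookkeeping, including the $k=0$ boundary case via $q_{-1}=0$, is sound.
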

\begin{proof}
Recall the definition of $q_1$ and $q_k$ from \cref{eq:def-g}.
In the definition of $q_1$, there is a summation over a singleton set; call that $\{j\}$. Let the $k$-element set summed over in the definition of $q_k$ be $T$. We split the sum into cases depending on if $j\in T$ and apply the fact that $g_j^2 = 1 + \beta g_j$ to obtain

\begin{equation}\label{eq:three-summands}
\begin{aligned}
q_1 q_k & = \sum_{j\in [m]}\sum_{T\in\binom{[m]}{k}}
\prod_{i\in T\sqcup\{j\}}g_i 
\\ & = \sum_{T\in\binom{[m]}{k}}\left(\sum_{j\not\in T} g_j
\prod_{i\in T}g_i + \sum_{j\in T} g_j^2
\prod_{i\in T\setminus\{j\}}g_i \right)
\\ & = \sum_{T\in\binom{[m]}{k}}\left( \sum_{j\not\in T} g_j
\prod_{i\in T\cup\{j\}}g_i+\sum_{j\in T}
\prod_{i\in T\setminus\{j\}}g_i + \beta\sum_{j\in T}\prod_{i\in T}g_i \right).
\end{aligned}
\end{equation}
We examine the three summands in order:
\begin{itemize}
    \item If $j\not\in T$, then $T\cup \{j\}$ is a $k+1$ element subset of $[m]$, and each such set is counted $k+1$ times in the first summand of \cref{eq:three-summands} for choosing~$j$. This gives the first term of \cref{eq:triple-rec}.
    \item If $j\in T$, then $T\setminus \{j\}$ is a $k-1$ element subset of $[m]$, and each such set is counted $m-k+1$ times in the first summand of \cref{eq:three-summands} for choosing~$j$. This gives the second term of \cref{eq:triple-rec}. 
    \item For the last term, the sum over $j\in T$ gives a factor of $k$, so it matches the last term of \cref{eq:triple-rec}.
\end{itemize}
Together, we obtain \cref{eq:triple-rec}.
\end{proof}
Therefore, combining the previous two propositions, we obtain the following corollary.
\begin{corollary}
\label{cor:triple-q}
Let $
N_\rho(k, k', 1;s)\coloneqq (k+1)N_\rho(k+1, k';s)+\beta k N_\rho(k, k';s)+(m-k+1)N_\rho(k-1, k';s)$. Then
\begin{equation}
\mb{E}_{x\in\mb{F}_p^n}\left[q_1(x)q_k(x)q_{k'}(x)\right] = \sum_{s=0}^{m} N_\rho(k, k',1;s)\mb{E}_{x\in\mb{F}_p^n}[q_s(x)].
\end{equation}
\end{corollary}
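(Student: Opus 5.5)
The plan is to derive this directly from \cref{prop:q-triple-rec} and \cref{prop:gen-prod}, using linearity of expectation.

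First, multiply the pointwise identity \cref{eq:triple-rec} by $q_{k'}(x)$. This holds for every $x\in\mb{F}_p^n$, so
\[
q_1(x)q_k(x)q_{k'}(x) = (k+1)q_{k+1}(x)q_{k'}(x)+(m-k+1)q_{k-1}(x)q_{k'}(x)+\beta k\, q_k(x)q_{k'}(x).
\]
Taking expectations over uniform $x\in\mb{F}_p^n$ turns the right side into a linear combination of the three pairwise expectations $\mb{E}[q_{k+1}q_{k'}]$, $\mb{E}[q_{k-1}q_{k'}]$ and $\mb{E}[q_kq_{k'}]$.

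Next, apply \cref{prop:gen-prod} to each pairwise expectation, writing it as $\sum_{s=0}^m N_\rho(\cdot,k';s)\mb{E}[q_s(x)]$. Interchange the finite sums and collect the coefficient of $\mb{E}[q_s(x)]$. That coefficient is $(k+1)N_\rho(k+1,k';s)+\beta kN_\rho(k,k';s)+(m-k+1)N_\rho(k-1,k';s)$, which is exactly the definition of $N_\rho(k,k',1;s)$.

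The only care needed is at the boundary indices, where the statement's range $0\le k\le m$ is wider than the range $0\le k<m$ of \cref{prop:q-triple-rec}.
\begin{itemize}
\item For $k=0$, the term $q_{-1}$ appears with coefficient $m+1$. We adopt the convention $q_{-1}\equiv 0$, which matches the empty sum over $\binom{[m]}{-1}$, and $N_\rho(-1,k';s)=0$, which matches the binomial convention. Then the identity reads $q_1q_0=q_1$, which holds trivially.
\item For $k=m$, we have $q_{m+1}\equiv 0$ and $N_\rho(m+1,k';s)=0$. The recursion \cref{eq:triple-rec} still holds, because its proof never used $k<m$: there is simply no $j\notin T$. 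So the same argument applies.
\end{itemize}

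No step here is a real obstacle. The only point to double-check is that \cref{prop:gen-prod} is valid for all $k,k'$, including the extreme indices, which it is because its proof is a pure counting argument.
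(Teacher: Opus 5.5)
Your proof is correct and matches the paper's argument: the paper obtains this corollary by multiplying the three-term recursion of \cref{prop:q-triple-rec} by $q_{k'}$, taking expectations, and applying \cref{prop:gen-prod} to each of the three pairwise terms. Your treatment of the boundary indices $k=0$ and $k=m$ is accurate; the paper leaves it implicit. That treatment uses $q_{-1}\equiv q_{m+1}\equiv 0$, the vanishing of the corresponding $N_\rho$ terms, and the fact that the recursion's proof also works at $k=m$.
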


\subsection{The Expected Satisfaction Expansion}
\label{sec:expansion}
The idea from our framework in \cref{sec:discrepancy} is to sample $x$ with probability given as the the square of a suitably weighted linear combination of the $k$-wise discrepancy for $0\le k\le \ell$, similar to the quantum state of DQI. As discussed, owing to the easy control of expected discrepancy below $d^\perp$ in \cref{lem:balanced-fourier}, we ensured $\ell\le \mu m$, and this is the bottle-neck that blocks us from improving beyond the semicircle law. Instead, we take 
\begin{equation}\label{eq:def-delta}
    \ell \coloneqq \lfloor (\mu+\delta)m \rfloor \quad\text{where}\quad 0<\mu\le \mu+\delta \le 1-\rho.
\end{equation}
for some $\delta\ge 0$ to be chosen. The cutoff of $1-\rho$ for $\mu+\delta$ is chosen because we hope to find a solution $x$ with $s(x)\ge \on{SCL}_\rho(\mu+\delta)-o(1)$, which is $1-o(1)$ once $\mu+\delta \ge 1-\rho$, so there is no need to take larger $\delta$. 
As in the perfect decoding case, we choose $u$ and $x\sim \mb{P}_u$ exactly as before:
\begin{equation}
\mb{P}_u(x) \propto \left(\sum_{k=0}^\ell u_k q_k(x)\right)^2\quad\text{where}\quad u_k \propto \binom{m}{k}^{-1/2}\mbf{1}\{\ell - \sigma \le k \le \ell\},
\end{equation}
and $\sigma \gg 1$ grows slowly with $\ell$ and so with $m$ and $n$.
For now, we fix
\begin{equation}
\label{eq:def-sigma}
\sigma \coloneqq \lfloor \log\log\ell\rfloor \sim \log\log m.
\end{equation}
In \cite[Lemma 9.3]{dqi}, the weight vector $w$ is chosen to be the leading eigenvector of $A^{(\ell)}$, which turns out to be exactly in this form for $\sigma=\sqrt{\ell}$ and the correct normalization \cref{eq:u-w-normalization}. 
Analogous to \cref{eq:bal-quad-form-q1}, we use shorthand $\mb{E}[q_t]=\mb{E}_{x\in\mb{F}_p^n}[q_t(x)]$ to compute that
\begin{equation}
\label{eq:master-t}
\begin{aligned}
\mb{E}_{\mb{P}_u}[q_1(x)] & = \frac{\sum_{k, k'=0}^\ell u_ku_{k'} \mb{E}[q_1q_kq_{k'}]}{\sum_{k, k'=0}^\ell u_ku_{k'} \mb{E}[q_kq_{k'}]}
\\ & = \frac{\sum_{t=0}^{m} \mb{E}[q_t]  \sum_{k, k'=0}^\ell u_ku_{k'}  N_\rho(k, k', 1;t)}{\sum_{t=0}^{m} \mb{E}[q_t]  \sum_{k, k'=0}^\ell u_ku_{k'} N_\rho(k, k';t)}
\\ & = \frac{\sum_{k, k'=\ell-\sigma}^\ell \binom{m}{k}^{-1/2}\binom{m}{k'}^{-1/2}  N_\rho(k, k', 1;0) + \sum_{t=d^\perp}^{m} \mb{E}[q_t]  \sum_{k, k'=\ell-\sigma}^\ell \binom{m}{k}^{-1/2}\binom{m}{k'}^{-1/2}  N_\rho(k, k', 1;t)}{\sum_{k, k'=\ell-\sigma}^\ell \binom{m}{k}^{-1/2}\binom{m}{k'}^{-1/2}  N_\rho(k, k';0)+\sum_{t=d^\perp}^{m} \mb{E}[q_t]  \sum_{k, k'=\ell-\sigma}^\ell \binom{m}{k}^{-1/2}\binom{m}{k'}^{-1/2} N_\rho(k, k';t)}.
\end{aligned}
\end{equation}
We separate the terms by the weight $t$. 
The $t=0$ terms in the numerator and denominator of \cref{eq:master-t} can be computed similarly to \cref{eq:bal-quad-form-q1}, and their ratio together with \cref{eq:s-to-q1} will essentially recover $\on{SCL}_\rho(\mu+\delta)$, which strictly improves on $\on{SCL}_\rho(\mu)$.
The summands with $0<t<d^\perp$ terms vanish via \cref{lem:key-fourier}, exactly as before. We also have some higher order correction terms for $t\ge d^\perp$, which we will have to control.

In the following proposition we use the last two corollaries to compute the $t=0$ terms.
\begin{proposition}\label{prop:step-0}
It holds that
\begin{equation}
\begin{aligned}
\sum_{k, k'=\ell-\sigma}^\ell \binom{m}{k}^{-1/2}\binom{m}{k'}^{-1/2}  N_\rho(k, k';0)&   = \sigma+1,
\\ \sum_{k, k'=\ell-\sigma}^\ell \binom{m}{k}^{-1/2}\binom{m}{k'}^{-1/2}  N_\rho(k, k', 1;0) & = \left[\beta  (\mu+\delta) + 2\sqrt{(\mu+\delta)(1-\mu-\delta)}\right]\left(\sigma m +O(\sigma^2+m)\right).
\end{aligned}
\end{equation}
\end{proposition}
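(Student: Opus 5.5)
The plan is to compute both sums directly from the closed form of $N_\rho$ at $t=0$ in \cref{prop:gen-prod} together with the definition of $N_\rho(k,k',1;0)$ in \cref{cor:triple-q}. At $t=0$ only the $j=0$ term survives, and the middle binomial $\binom{0}{(k-k')/2}$ forces $k=k'$. Hence
\[
N_\rho(k,k';0)=\binom{m}{k}\mbf{1}\{k=k'\}.
\]
This is the same as the corollary following \cref{prop:gen-prod}, but it now holds unconditionally, with no assumption $k+k'<d^\perp$: it is a purely combinatorial identity.

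For the denominator, the diagonal terms give $\binom{m}{k}^{-1}\binom{m}{k}=1$ for each $k\in[\ell-\sigma,\ell]$. Summing over these $\sigma+1$ values of $k$ gives exactly $\sigma+1$.

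For the numerator, I expand
\[
N_\rho(k,k',1;0)=(k+1)\binom{m}{k+1}\mbf{1}\{k'=k+1\}+\beta k\binom{m}{k}\mbf{1}\{k'=k\}+(m-k+1)\binom{m}{k-1}\mbf{1}\{k'=k-1\}.
\]
After weighting by $\binom{m}{k}^{-1/2}\binom{m}{k'}^{-1/2}$, the terms simplify as follows.
\begin{itemize}
\item The diagonal term contributes $\beta k$.
\item The term with $k'=k+1$ contributes $(k+1)\sqrt{\binom{m}{k+1}/\binom{m}{k}}=\sqrt{(k+1)(m-k)}$.
\item The term with $k'=k-1$ contributes $(m-k+1)\sqrt{\binom{m}{k-1}/\binom{m}{k}}=\sqrt{k(m-k+1)}$.
\end{itemize}
These are exactly the off-diagonal entries of $A^{(\ell)}$, plus the $\beta$-diagonal.

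Both $k$ and $k'$ must lie in the window $[\ell-\sigma,\ell]$. The diagonal part is therefore $\beta\sum_{k=\ell-\sigma}^{\ell}k$. The off-diagonal part is $2\sum_{k=\ell-\sigma+1}^{\ell}\sqrt{k(m-k+1)}$, since the two off-diagonal families pair up into $\sigma$ adjacent pairs.

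Finally, I estimate these sums. Write $\alpha=\mu+\delta$, so that $\ell=\alpha m+O(1)$. For $k\in[\ell-\sigma,\ell]$ we have $k=\alpha m+O(\sigma)$. Then $\beta\sum k=\beta\alpha m(\sigma+1)+O(\sigma^2)$. For the square roots, a Taylor expansion gives $\sqrt{k(m-k+1)}=m\sqrt{\alpha(1-\alpha)}+O(\sigma+1)$. This step needs $\alpha$ bounded away from $0$ and $1$; otherwise one uses the crude bound $|\sqrt{a}-\sqrt{b}|\le\sqrt{|a-b|}$. It yields $2\sigma m\sqrt{\alpha(1-\alpha)}+O(\sigma^2)$.

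Combining these, the numerator equals $[\beta\alpha+2\sqrt{\alpha(1-\alpha)}]\sigma m$ up to an error of $O(\sigma^2+m)$. Here the $O(m)$ absorbs the mismatch between the counts $\sigma+1$ and $\sigma$, namely the extra $\beta\alpha m$ term. This error is of lower order than the main term since $\sigma\to\infty$, which justifies writing the result as a common factor $(\sigma m+O(\sigma^2+m))$.

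The only mildly delicate points are the boundary bookkeeping and making sure the error terms are absorbed as stated. There is no genuine obstacle.
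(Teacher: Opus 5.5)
Your proposal is correct and follows the paper's proof essentially step for step. You reduce $N_\rho(k,k';0)$ to $\binom{m}{k}\mathbf{1}\{k=k'\}$, expand $N_\rho(k,k',1;0)$ via \cref{cor:triple-q} into the diagonal $\beta k$ terms plus the two adjacent off-diagonal families that combine to $2\sum_{k=\ell-\sigma+1}^{\ell}\sqrt{k(m-k+1)}$, and estimate each summand to within $O(\sigma)$ of its value at $\ell/m\approx\mu+\delta$.

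One step is implicit in both your proposal and the paper: pulling out the common factor requires $\beta(\mu+\delta)+2\sqrt{(\mu+\delta)(1-\mu-\delta)}$ to be bounded away from zero. This holds because that quantity equals $(\on{SCL}_\rho(\mu+\delta)-\rho)/\sqrt{\rho(1-\rho)}$, which is positive whenever $0<\mu+\delta\le 1-\rho$.
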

\begin{proof}
As $N_\rho(k, k';0)=N(k, k';0) = \binom{m}{k}\mbf{1}\{k'=k\}$, we get
\[\sum_{k, k'=\ell-\sigma}^\ell \binom{m}{k}^{-1/2}\binom{m}{k'}^{-1/2}  N_\rho(k, k';0)  =\sum_{k, k'=\ell-\sigma}^\ell \binom{m}{k}^{-1/2}\binom{m}{k'}^{-1/2}  \binom{m}{k}\mbf{1}\{k'=k\}  = \sigma+1.\]
Moreover, by \cref{cor:triple-q}, we obtain
\[
N_\rho(k, k', 1;0) =(k+1)\binom{m}{k+1}\mbf 1\{k'=k+1\}+\beta k\binom{m}{k}\mbf{1}\{k'=k\}+(m-k+1)\binom{m}{k-1}\mbf{1}\{k'=k-1\},
\]
so we split into cases of $k'\in \{k-1, k, k+1\}$ to compute
\begin{align*}
& \sum_{k, k'=\ell-\sigma}^\ell \binom{m}{k}^{-1/2}\binom{m}{k'}^{-1/2}  N_\rho(k, k', 1;0) 
\\ & = 
\sum_{k=\ell-\sigma}^\ell 
\beta k
+   \sum_{k=\ell-\sigma}^{\ell-1} (k+1) \binom{m}{k+1}^{1/2}\binom{m}{k}^{-1/2}+\sum_{k=\ell-\sigma+1}^{\ell} (m-k+1) \binom{m}{k-1}^{1/2}\binom{m}{k}^{-1/2}
  \\ & =  \frac{1}{2}\beta (2\ell-\sigma)(\sigma+1)+ \sum_{k=\ell-\sigma}^{\ell-1} \sqrt{(k+1)(m-k)}+\sum_{k=\ell-\sigma+1}^{\ell} \sqrt{k(m-k+1)}
  \\ & = \frac{1}{2}\beta (2\ell-\sigma)(\sigma+1)+  2m\sum_{k=\ell-\sigma+1}^{\ell } \sqrt{\frac{k}{m}\left(1-\frac{k-1}{m}\right)}
\\ & =\beta (\sigma+1) \left(\mu+\delta -O\left(\frac{\sigma}{m}\right)\right)m+ 2\sigma m\left(\sqrt{(\mu+\delta)(1-\mu-\delta)}+O\left(\frac{\sigma}{m}\right)\right)
\\ & = \left[\beta  (\mu+\delta) + 2\sqrt{(\mu+\delta)(1-\mu-\delta)}\right]\left(\sigma m +O(\sigma^2+m)\right).
\end{align*}
Let $g(x)=\sqrt{x(1-x)}$, defined for $x\in [0, 1]$. In the second to last equality, there are $\sigma$-many summands, each of which is $g(\mu+\delta)+O\left({\sigma}/{m}\right)$ since $\ell/m = \mu+\delta+O(1/m)$.  Here, we are using the fact that for fixed $\mu$ and $\sigma/m \to 0$, we have $g((\mu + \delta + O(\sigma/m)) = g(\mu + \delta) + O(\sigma/m)$. 
\end{proof}
Recalling that $\mathbb{E}[q_0] = 1$, \Cref{prop:step-0} tells us that, \emph{if} we were to ignore the $t > 0$ terms in \eqref{eq:master-t}, we \emph{would} have
\begin{equation}\label{eq:recover-biased-scl}
\begin{aligned}
\mb{E}_{\mb{P}_u}[s(x)] &\approx \rho+\sqrt{\rho(1-\rho)}\left[\beta  (\mu+\delta) + 2\sqrt{(\mu+\delta)(1-\mu-\delta)}\right]
\\ & =\rho+(1-2\rho)(\mu+\delta)+2\sqrt{\rho(1-\rho)(\mu+\delta)(1-\mu-\delta)}
\\ & = \rho(1-\mu-\delta)+(1-\rho)(\mu+\delta)+2\sqrt{\rho(1-\rho)(\mu+\delta)(1-\mu-\delta)}
\\ & = \on{SCL}_\rho(\mu+\delta).
\end{aligned}
\end{equation}
To prove this rigorously, we will show that the $t>0$ terms in \cref{eq:master-t} are asymptotically smaller than their $t=0$ counterparts. Via \cref{lem:key-fourier}, we have $\mathbb{E}[q_t(x)] = 0$ for $1 \leq t< d^\perp$, so it suffices to consider 
\begin{equation}\label{eq:int-s}
2\mu m \sim d^\perp \le t\le 2\ell \sim 2(\mu+\delta)m.
\end{equation}
For these values of $t$, we control both the sum of binomial coefficients in \cref{eq:master-t} as well as $\mb{E}[q_t(x)]$, to show that for any $t$ and $k, k'\in [\ell-\sigma, \ell]$ that under conditions of the main theorems, we have
\begin{equation}\label{eq:biased-goal}
\frac{\tilde{N}_\rho(k, k' ;t)}{\sqrt{\binom{m}{k}\binom{m}{k'}}}\cdot |\mb{E}[q_t]|\le e^{-\Omega(m)}\quad\text{where}\quad \tilde{N}_\rho(k, k';t) \coloneqq \sum_{j=0}^t|\beta|^{j}\binom{t}{j}\binom{t-j}{\frac{t+k-k'-j}{2}}\binom{m-t}{\frac{k+k'-t-j}{2}}.
\end{equation}
We outline a three-step plan to do so.
\begin{enumerate}

    \item First, we show that each of the $\sigma^2$ terms in the sum in \cref{eq:master-t} indexed by $(k, k')$ is asymptotically close to the leading one indexed by $k=k'=\ell$ (with possibly minus one from parity).
    \item Then, we compute the leading term of \cref{eq:master-t} where $k, k'= \ell$. For the balanced special case where $\rho\sim 1/2$, this turns out to be $\exp(mE(\mu, \delta)+o(m))$ where $E$ is defined in \cref{thm:main-avg}.
    \item Finally, we borrow and improve techniques from local leakage resilience (see \cref{sec:llr}) to control $|\mb{E}[q_t(x)]|$.  In that context, as discussed in \cref{sec:overview}, this quantity is the per-transcript control of the Fourier proxy in \cref{eq:BDIR}. This step is the content of \cref{sec:step-3}.
\end{enumerate}
For the remainder of this section, we execute the first two steps.
We begin with the first step.

\begin{proposition}\label{prop:step-1}
For any
$\rho\in (0, 1)$, $\mu \in (0, 1/2]$, and $\delta \in [0, 1-\rho-\mu]$, there exists a constant $C=C(\mu, \rho)>0$ such that
for every $t\in [d^\perp, 2\ell]$ and $k, k'\in [\ell-\sigma , \ell]$, 
\begin{equation}\label{eq:biased-Nkk-1}
\frac{N_\rho(k, k'; t)}{\sqrt{\binom{m}{k}\binom{m}{k'}}}\le
 C^{\sigma} \frac{\tilde{N}_\rho(\ell, \ell-\mbf{1}\{2\nmid t\} ;t)}{\binom{m}{\ell}}.
\end{equation}
\end{proposition}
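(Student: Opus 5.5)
The plan is a monotone-comparison argument: first pass from $N_\rho$ to $\tilde N_\rho$, then handle the normalizing binomials, then move $(k,k')$ to the target pair in two kinds of steps. Each step either does not decrease $\tilde N_\rho$ or costs at most a constant factor. We use only three facts: $\ell/m\to\mu+\delta\le 1-\rho<1$; $d^\perp\le t\le 2\ell$; and $|k-\ell|,|k'-\ell|\le\sigma$.

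\emph{Step 0: reductions.} By the triangle inequality applied termwise in \cref{eq:def-N-rho}, $|N_\rho(k,k';t)|\le \tilde N_\rho(k,k';t)$, so it suffices to bound $\tilde N_\rho$. For the normalization, use $\binom{m}{k-1}/\binom{m}{k}=k/(m-k+1)$. Since $k/m$ stays in a compact subinterval of $(0,1)$ for $k\in[\ell-\sigma,\ell]$, this ratio lies in $[c,1/c]$ for some $c=c(\mu,\rho)>0$. Hence
\[
\binom{m}{\ell}\Big/\sqrt{\tbinom{m}{k}\tbinom{m}{k'}}\le c^{-\sigma}.
\]
It therefore remains to show $\tilde N_\rho(k,k';t)\le C^{\sigma}\,\tilde N_\rho(\ell,\ell-\mbf 1\{2\nmid t\};t)$.

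\emph{Step 1: balance the pair.} Fix $j$ and look at the three binomials in the $j$-th summand of $\tilde N_\rho$. If we replace $(k,k')$ by a pair $(\tilde k,\tilde k')$ with the same sum and $|\tilde k-\tilde k'|\le 1$, then:
\begin{itemize}
\item $\binom{t}{j}$ and $\binom{m-t}{(k+k'-t-j)/2}$ are unchanged;
\item the middle factor $\binom{t-j}{(t-j+(k-k'))/2}$ moves to the central binomial coefficient, which is maximal.
\end{itemize}
So this replacement does not decrease $\tilde N_\rho$, and it stays inside $[\ell-\sigma,\ell]^2$.

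\emph{Step 2: raise the pair.} Next increase both entries by one, at most $\sigma$ times, until reaching $(\ell,\ell)$ or $(\ell,\ell-1)$. Each such step keeps the difference, so the middle factor is unchanged. Writing $a=(k+k'-t-j)/2$, the third factor changes by
\[
\binom{m-t}{a+1}\Big/\binom{m-t}{a}=\frac{m-t-a}{a+1}.
\]
Here $m-t-a\ge m-(k+k')/2\ge (1-\mu-\delta)m-O(1)$ and $a+1\le m$, so each step loses at most a constant factor. This is valid as long as $a+1\le m-t$. If instead $a+1>m-t$, the raised term vanishes, so this boundary must be handled separately (see Step 3).

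\emph{Step 3: parity and the main obstacle.} Only terms with $k+k'\equiv t+j \pmod 2$ survive. When $k+k'\equiv t$, Steps 1–2 land exactly on $(\ell,\ell-\mbf 1\{2\nmid t\})$ with $j$ of the same parity, and we are done. The expected obstacle is the opposite-parity case, where only odd $j$ contribute. I would first try reindexing $j\mapsto j\pm1$ and absorbing the factor $|\beta|^{\pm1}$, which is constant for fixed $\rho\ne1/2$; when $\rho\sim 1/2$ the odd-$j$ terms carry a factor $|\beta|=o(1)$ and should be harmless. One must also control the ratio $\binom{t}{j\pm1}/\binom{t}{j}$, which is unbounded only when $j$ is near $0$ or $t$. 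In the same spirit, Step 2 needs the third-factor argument to stay at most $m-t$, which can fail when $t$ is near $2\ell$. In both edge regimes I would argue directly that the edge summands are dominated by a constant times a neighbouring interior summand. If that fails, I would switch the target pair to $(\ell,\ell-1)$ versus $(\ell,\ell)$ as the parity of $k+k'$ dictates.
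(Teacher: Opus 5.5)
Your Steps 0--2 follow the paper's own termwise comparison: bound the denominator ratio by $C_0^\sigma$, centre the middle binomial, and raise the last binomial one unit at a time. You are also more explicit than the paper about parity.

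\textbf{Step 2 relies on a reversed inequality.} With $a=\frac{k+k'-t-j}{2}$, one has $m-t-a=m-\frac{k+k'}{2}-\frac{t-j}{2}$, which is at most, not at least, $m-\frac{k+k'}{2}$; your bound would need $j\ge t$. The paper's ``$\frac{k+k'-t-j}{2}\le \ell-t$ since $j\le t$'' makes the same slip.
\begin{itemize}
\item When $\ell\le m/2$ this is harmless, and the raise is actually free: every index reached satisfies $a+1\le \ell-\frac{t+j}{2}\le \frac{m-t}{2}$, so $\binom{m-t}{a}\le\binom{m-t}{a+1}$.
\item For $\rho<1/2$, however, the statement allows $\mu+\delta$ up to $1-\rho>1/2$. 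Then, for $t>2(m-\ell)$ and small $j$, the per-step ratio $(a+1)/(m-t-a)$ can be of order $m$. Worse, the same-$j$ target summand can vanish (its last index $\ell-\frac{t+j}{2}$ can exceed $m-t$) while the $(k,k')$ summand does not. So no same-$j$ comparison holds there; this is not an edge case you can argue away.
\end{itemize}
When $\ell>m/2$ (so $\rho<1/2$ and $|\beta|$ is a fixed positive constant), what works is a shift in $j$. Compare the $j$-th summand of $(k,k')$ with the $(j+s)$-th summand of the target pair, where $s\ge 0$ is the target's sum minus $k+k'$.
\begin{itemize}
\item This leaves the last binomial unchanged. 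The cost is at most $(2/|\beta|)^{s}\binom{t}{j}/\binom{t}{j+s}\le C^{\sigma}$ whenever $j\le(1-\epsilon)t$.
\item For $j>(1-\epsilon)t$ one has $m-t-a\ge m-\ell-\epsilon t/2=\Omega(m)$ for $\epsilon=\rho$. So your same-$j$ raise works there, with $j\mapsto j-1$ if parity requires.
\end{itemize}

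\textbf{Step 3 cannot be closed, because the inequality as stated is false in the opposite-parity case.} Take $\rho<1/2$ (so $\beta>0$ and $N_\rho=\tilde N_\rho$), $\delta>0$, $\mu+\delta\le 1/2$, $t=2\ell-1$, and $k=k'=\ell$.
\begin{itemize}
\item Only $j=1$ survives on the left, giving $N_\rho(\ell,\ell;2\ell-1)=\beta(2\ell-1)\binom{2\ell-2}{\ell-1}$.
\item The target has only $j=0$, giving $\tilde N_\rho(\ell,\ell-1;2\ell-1)=\binom{2\ell-1}{\ell}$.
\item The ratio of left side to right side is therefore exactly $\beta\ell/C^{\sigma}$. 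This tends to infinity for every fixed $C$, since $C^\sigma\le(\log\ell)^{\log C}$.
\end{itemize}
The summand $j=1$ has no interior neighbour in the target sum, and $j\mapsto j-1$ loses $\binom{t}{1}/\binom{t}{0}=t$.

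Your fallback is the right repair: choose the target pair by the parity of $k+k'$, i.e.\ put $\tilde N_\rho(\ell,\ell;t)+\tilde N_\rho(\ell,\ell-1;t)$ on the right, or allow an extra $\mathrm{poly}(m)$ factor. With it, the shift above always has $s\ge0$ even, so no parity shifts are needed. It is also all that \cref{lem:step-1-2} uses, since there the bound only matters up to $e^{o(m)}$. But it proves a modified statement, not the one given. The paper's proof treats only even $t$ and calls the odd case similar, so it does not address this either.
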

\begin{proof}
We prove the case where $t$ is even. The odd $t$ case follows similarly. Observe that 
\[ 0<\mu\le \frac{\ell}{m}=\mu+\delta \le 1-\rho,\]
so $\ell/m$ is bounded away from $0$ and $1$. Then, for any $k\in [\ell-\sigma, \ell]$, we have that
\begin{equation}\label{eq:binom-bound}
\frac{\binom{m}{\ell}}{\binom{m}{k}} = \prod_{j=0}^{\ell-k-1}\frac{m-\ell+j+1}{\ell-j} \le \left(\frac{(1-\rho)m+\sigma}{\mu m -\sigma}\right)^\sigma \le C_0^\sigma,
\end{equation}
for sufficiently large $m$ since $\sigma\ll m$, where constant $C_0=C_0(\rho, \mu)>0$. This gives a bound on the ratio of denominator of \cref{eq:biased-Nkk-1}, i.e.
\begin{equation}
\binom{m}{k}^{-1/2}\binom{m}{k'}^{-1/2}\le C_0^{\sigma}\binom{m}{\ell}^{-1}.
\end{equation}

We similarly bound the ratio of numerators of \cref{eq:biased-Nkk-1}. Recall from \cref{eq:biased-goal} that
\[
N_\rho(k, k';t) = \sum_{j=0}^t\beta^{j}\binom{t}{j}\binom{t-j}{\frac{t+k-k'-j}{2}}\binom{m-t}{\frac{k+k'-t-j}{2}}.
\]
It suffices to show for every $j$ that the $j$-th summand is at most $O_{\rho,\mu}(1)$ times the corresponding term when $k=k'=\ell$ and $\beta$ replaced by its absolute value. 
That is, it suffices to show that, 
\[ \beta^{j}\binom{t}{j}\binom{t-j}{\frac{t+k-k'-j}{2}}\binom{m-t}{\frac{k+k'-t-j}{2}} \leq O_{\rho, \mu}(1)\cdot \left(|\beta|^{j}\binom{t}{j}\binom{t-j}{\frac{t-j}{2}}\binom{m-t}{\frac{2\ell-t-j}{2}} \right). \]
for all $j$. The middle binomial coefficient ${t -j \choose \frac{t + k - k' - j}{2} }$ is maximized at the center $\binom{t-j}{(t-j)/2}$ with $k=k'$. For the last binomial coefficient ${m - t \choose \frac{k+k'-t-j}{2} }$, we note that, since $j\le t$, the quantity \[\frac{k+k'-t-j}{2}\le \ell-t \le (1-\rho)m-t\]
is bounded away from $m-t$ by $\Omega(m)$.  Thus, we imagine stepping from $\frac{2\ell - t -j}{2} = \ell - \frac{t+j}{2}$ to $\frac{k + k' - t - j}{2} = \frac{k+k'}{2} - \frac{t+j}{2}$ by iteratively reducing $\ell$ by one until we reach $\frac{k+k'}{2}$.  Each time we reduce by one, since the binomial coefficient will always be of the form ${m-t \choose m-t - \Omega(m)}$, we gain at most a $O_{\rho,\mu}(1)$ factor. This is similar to the argument in \cref{eq:binom-bound}. We gain at most $\sigma$ such factors, which combining with $C_0^\sigma$ from the denominator gives the desired bound. 
\end{proof}
\begin{remark}\label{rem:tildeN}
We remark that using $\tilde{N}$ instead of $N$ is crucial to prevent any possible cancellations in the sum in $N$. However, for $\rho\ge 1/2$, this degrades our bound. In fact, our bound on $\mu_1$ \emph{decreases} for $\rho$ right after $1/2$.  As $\mu_1(\rho)$ is monotonic in $\rho$, the bound can be improved by replacing the red ``bump'' in \cref{fig:phase} with the dashed segment.   The absolute value around $\beta$ is responsible for this behavior in our bound.
\end{remark}
We observe that for the last binomial factor, if $\mu+\delta \le 1/2$ (e.g. for $\rho\ge 1/2$), then
$\ell - (t+j)/2\le \ell-t/2\le (m-t)/2$, so the left side is maximized when $k+k'$ is maximized, i.e. at $k=k'=\ell$.

We move onto step (2) and control \cref{eq:master-t} by considering the worst-case $s$ and $j$ for a fixed $\mu, \delta, \rho$ to obtain some numerical optimization program. Before giving a general bound with $E_\rho(\mu, \delta,\tau)$ for $\rho\in (0, 1)$, we study the balanced case where $\rho \sim 1/2$ where a further simplification is possible from $E_\rho(\mu, \delta, \tau)$ to $E(\mu, \delta)$ in \cref{thm:main-avg}.
\begin{lemma}\label{lem:step-2}
With $\rho\sim 1/2$, $\delta \in [0, 1/2-\mu]$, and $\ell = (\mu+\delta)m\le m/2$, for any $t\in [d^\perp, 2\ell]$ 
\begin{equation}
\frac{\tilde{N}_\rho(\ell,\ell-\mbf{1}\{2\nmid t\};t)}{\binom{m}{\ell}}\le e^{\left[E(\mu, \delta)+o(1)\right]m}.
\end{equation}
\end{lemma}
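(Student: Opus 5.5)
The plan is to reduce to the $j=0$ term of $\tilde N_\rho$, rewrite that term as a hypergeometric ratio, show the ratio decreases in $t$, and evaluate it at the smallest weight $t=d^\perp\sim 2\mu m$, which gives exactly $E(\mu,\delta)$. Since the sum defining $\tilde N_\rho$ in \cref{eq:biased-goal} has only $t+1\le m+1$ terms, it suffices to bound each summand by $e^{[E(\mu,\delta)+o(1)]m}\binom{m}{\ell}$. I treat even $t$ with $k=k'=\ell$ first. For odd $t$, replacing $k'=\ell$ by $\ell-1$ changes each binomial coefficient by at most an $O_{\rho,\mu}(1)$ factor, as in \cref{prop:step-1}, so it does not affect the exponent.

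\textbf{Step 1 (discard $j\ge \varepsilon m$).} Because $\rho\sim 1/2$, we have $\beta=o(1)$, so $\log|\beta|\to-\infty$. Each summand is at most $|\beta|^j\cdot 2^t\cdot 2^{t}\cdot 2^{m}\le |\beta|^j 8^m$, while $\binom{m}{\ell}^{-1}\le 1$. Hence for $j\ge\varepsilon m$ the normalized summand is at most $\exp(m[\varepsilon\log|\beta|+3\log 2])$, and this is eventually smaller than $e^{-Km}$ for any fixed $K$. In particular it is far below $e^{E(\mu,\delta)m}$, since $E$ is bounded below on the compact parameter range.

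For $j\le\varepsilon m$, drop the factor $|\beta|^j\le 1$ and compare with the $j=0$ term. The factor $\binom{t}{j}\le e^{H(\varepsilon)m}$, and by continuity of $H$ the other two binomial coefficients change their exponents by $O_\varepsilon(1)\cdot m$ with $O_\varepsilon(1)\to 0$. Letting $\varepsilon\to0$ slowly reduces everything to bounding the $j=0$ term,
\[
\frac{\binom{t}{t/2}\binom{m-t}{\ell-t/2}}{\binom{m}{\ell}} .
\]

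\textbf{Step 2 (hypergeometric identity and monotonicity).} Count pairs $(T,T')$ with $|T|=|T'|=\ell$ and $|T\Delta T'|=t$ in two ways. Fixing $T\Delta T'=[t]$ and multiplying by $\binom{m}{t}$ gives $\binom{m}{t}\binom{t}{t/2}\binom{m-t}{\ell-t/2}$. Fixing $T$ first gives $\binom{m}{\ell}\binom{\ell}{t/2}\binom{m-\ell}{t/2}$. Thus, with $s=t/2$, the $j=0$ term equals
\[
R(s)\coloneqq\frac{\binom{\ell}{s}\binom{m-\ell}{s}}{\binom{m}{2s}}.
\]
The ratio of consecutive values is
\[
\frac{R(s+1)}{R(s)}=\frac{(\ell-s)^2(m-\ell-s)^2\,(2s+1)(2s+2)}{(s+1)^2(m-2s)^2(m-2s-1)^2}\cdot\frac{1}{(\ell-s)(m-\ell-s)}\cdot(\ell-s)(m-\ell-s),
\]
which, up to $1+O(1/m)$ factors, equals $\dfrac{4(\ell-s)(m-\ell-s)}{(m-2s)^2}$. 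By AM--GM this is at most $1$, since $(\ell-s)+(m-\ell-s)=m-2s$. So, up to subexponential factors, $R(s)$ is nonincreasing in $s$ over $s\in[d^\perp/2,\ell]$, and the maximum is at $s=d^\perp/2=\mu m+O(1)$.

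\textbf{Step 3 (evaluation).} At $s=\mu m$, \cref{fact:stirling} gives
\[
R(\mu m)=\exp\!\Big(m\Big[(\mu+\delta)H\big(\tfrac{\mu}{\mu+\delta}\big)+(1-\mu-\delta)H\big(\tfrac{\mu}{1-\mu-\delta}\big)-H(2\mu)\Big]+o(m)\Big)=e^{[E(\mu,\delta)+o(1)]m}.
\]

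I expect the main obstacle to be making Step 1 uniform. The bound must hold simultaneously for all $t\in[d^\perp,2\ell]$, with the $o(1)$ independent of $t$. This needs uniform continuity of the entropy exponents on the compact range of $(t/m,\ j/m)$, together with the fact that $\beta\to 0$ at an unspecified rate, which forces the cutoff $\varepsilon=\varepsilon(m)\to0$ to be chosen carefully against $\log|\beta|$.
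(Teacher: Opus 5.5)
Your proposal is correct and follows the paper's proof. Both reduce $\tilde{N}_\rho$ to its $j=0$ term using $|\beta|=o(1)$, rewrite that term by double counting as $\binom{\ell}{t/2}\binom{m-\ell}{t/2}/\binom{m}{t}$, show it is essentially nonincreasing in $t$ via $4(\ell-s)(m-\ell-s)\le(m-2s)^2$, and evaluate at $t=d^\perp$. For the monotonicity step, the paper differentiates the Stirling exponent in $\tau$, while you take consecutive ratios; your displayed ratio has spurious squared factors, and the correct one is $\frac{(\ell-s)(m-\ell-s)(2s+1)(2s+2)}{(s+1)^2(m-2s)(m-2s-1)}$, which gives the conclusion you state.

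The obstacle you flag in Step 1 does not arise in the paper's version. Because $\ell\le m/2$ gives $\ell-(t+j)/2\le(m-t)/2$, both $\binom{t-j}{(t-j)/2}$ and $\binom{m-t}{\ell-(t+j)/2}$ are nonincreasing in $j$. Hence $\tilde{N}_\rho(\ell,\ell;t)\le N(\ell,\ell;t)\sum_j|\beta|^j\binom{t}{j}\le e^{|\beta|t}N(\ell,\ell;t)=e^{o(m)}N(\ell,\ell;t)$, with no $\varepsilon(m)$ cutoff to tune.
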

\begin{proof}
We prove the case where $t$ is even. The odd $t$ case follows similarly.
We claim that $
 \tilde{N}_\rho(\ell, \ell;t)\le  e^{o(m)} N(\ell,\ell;t)
$. 
To prove this, observe $|\beta| =o(1)$ and expand
\[\tilde{N}_\rho(\ell, \ell;t) = \sum_{j=0}^t|\beta|^{j}\binom{t}{j}\binom{t-j}{\frac{t-j}{2}}\binom{m-t}{\ell-\frac{t+j}{2}}.\]
Note that $N(\ell, \ell;t)$ is exactly the $j=0$ term.
The middle binomial coefficient is clearly decreasing in $j$, and so is the last binomial coefficient as $\ell - (t+j)/2\le \ell-t/2\le (m-t)/2$ as $\ell\le m/2$. 
Now, we apply standard binomial coefficient bounds and $x\le e^x$ to obtain
\[ |\beta|^j\binom{t}{j}\le \left(\frac{e|\beta|t}{j}\right)^j\le e^{e|\beta|t} = e^{o(m)},\]
where the last step holds since $t\lesssim m$ and $|\beta|\ll 1$ since $\rho \sim 1/2$. 
This implies that $\tilde{N}_\rho(\ell, \ell;t) \le e^{o(m)}N(\ell, \ell;t)$.
Now, we show that $N(\ell, \ell;t)$ is maximized for $t\ge d^\perp$ at $t=d^\perp$, and the maximum value is asymptotically $e^{mE(\mu, \delta)}$. First, we compute that
\begin{equation}\label{eq:Nkk-comp-1}
\frac{N(\ell, \ell;t)}{\binom{m}{\ell}}=\frac{\binom{t}{t/2}\binom{m-t}{\ell-t/2}}{\binom{m}{\ell}}=\frac{\binom{\ell}{t/2}\binom{m-\ell}{t/2}}{\binom{m}{t}}.
\end{equation}
where the second equality holds upon cross-multiplying and observing that both sides count the number ways to partition $[m]$ into sets of sizes $t/2, t/2, \ell-t/2, m-\ell-t/2$. 

Now, we apply \cref{fact:stirling} to \cref{eq:Nkk-comp-1} and compute for
$t \coloneqq 2(\mu+\tau)m$ where $0\le \tau \le \delta \le {1}/{2}-\mu $ that
 \begin{equation}\label{eq:Nkk-comp-2}
\frac{N(\ell, \ell;t)}{\binom{m}{\ell}}= m^{O(1)}\exp \left(m \left[(\mu+\delta)H\left(\frac{\mu+\tau}{\mu+\delta}\right)+(1-\mu-\delta)H\left(\frac{\mu+\tau}{1-\mu-\delta}\right)-H(2\mu +2\tau )\right]\right).
\end{equation}
If we differentiate the expression in the square-brackets with respect to $\tau$: as $H'(x)=\log (1/x-1)$,
\begin{align*}
    & \log \left(\frac{\mu+\delta}{\mu+t}-1\right)+\log \left(\frac{1-\mu-\delta}{\mu+t}-1\right)-2\log \left(\frac{1}{2\mu+2t}-1\right) 
    \\ & = \log \left[\left(\frac{\delta-\tau}{\mu+\tau}\right)\left( \frac{1-2\mu-\delta-\tau}{\mu+\tau}\right)\left(\frac{1-2\mu-2\tau}{2(\mu+\tau)}\right)^{-2}\right]
    \\ & = \log \left(\frac{4(\delta-\tau)(1-2\mu-\delta-\tau)}{(1-2\mu-2\tau)^2}\right)
    \\ & \le 0,
\end{align*}
where we use $4xy\le (x+y)^2$ with $x=\delta-\tau$ and $y=1-2\mu-\delta-\tau$. Therefore, \cref{eq:Nkk-comp-2} is maximized at $\tau=0$ which recovers the exponent $E(\mu, \delta)$ in \cref{eq:EFH}. The lemma follows.
\end{proof}
In general, for $\rho \in (0, 1)$ and $\delta >0$, the sum over $j$ in $\tilde{N}_\rho$ is not maximized at $j=0$ (or $j\ll m$) even up to $e^{o(m)}$ factors. It is a convex optimization problem to maximize over $j$ for the rate in the exponent in terms of $\rho, \mu, \delta$ and $s\coloneqq 2(\mu+\tau)m$ for $0\le \tau\le \min(\delta, 1/2-\mu)$, i.e.
\begin{equation}\label{eq:def-E-rho}
E_\rho(\mu, \delta, \tau)\coloneqq \lim_{m\to\infty}\frac{1}{m}\log\left(\frac{\tilde{N}_\rho(\ell,\ell-\mbf{1}\{2\nmid s\};2(\mu+\tau)m)}{\binom{m}{\ell}}\right).
\end{equation}
Since $\tilde{N}_\rho$ is a sum over $O(m)$ many terms indexed by $j$, we let $\gamma = j/m$ and extract the rate as
\begin{equation}
\begin{aligned}
E_\rho(\mu, \delta, \tau) & = 2(\mu+\tau)\log 2-H(\mu+\delta)
\\ & \quad +
\max_{\gamma} \left\{\gamma\log\left(\frac{|1-2\rho|}{2\sqrt{\rho(1-\rho)}}\right)+2(\mu+\tau)H\left(\frac{\gamma}{2(\mu+\tau)}\right)+(1-2\mu-2\tau) H\left(\frac{\delta-\tau-\gamma/2}{1-2\mu-2\tau}\right)\right\}
\end{aligned}
\end{equation}
maximized over $\gamma $ so that the binomial coefficients/entropy functions are well-defined. Note that $|\beta|=|1-2\rho|/\sqrt{\rho(1-\rho)}$ and we have $(2\mu+2\tau-\gamma)\log 2$ from $\binom{s-j}{(s-j)/2}$. 

In summary, combining \cref{eq:recover-biased-scl} and the first two steps allows us to reduce to the following.
\begin{lemma}\label{lem:step-1-2}
For any
$\rho\in (0, 1)$, $\mu \in (0, 1/2]$, and $\delta \in [0, 1-\rho-\mu]$, we have $\mb{E}_{\mb{P}_u}[s(x)]\ge \on{SCL}_\rho(\mu+\delta)$ provided for $t=2(\mu+\tau)m$ and sufficiently large $m$ that 
\begin{equation}
\sup_{\tau \in [0, \delta]} \left\{E_\rho (\mu, \delta, \tau)+\frac{1}{m}\log\left|\mb{E}[q_t(x)]\right|\right\} < 0
\end{equation}
In the balanced case where $\rho \sim 1/2$, we replace $\delta \in [0, 1/2-\mu]$ and $E(\mu, \delta)$ for $E_\rho(\mu, \delta, \tau)$.
\end{lemma}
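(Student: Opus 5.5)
The plan is to start from the exact expansion \cref{eq:master-t} of $\mb{E}_{\mb{P}_u}[q_1(x)]$ and show that every $t\neq 0$ correction term, in both the numerator and the denominator, is negligible next to the $t=0$ term computed in \cref{prop:step-0}. Once that is done, \cref{eq:s-to-q1} and the computation \cref{eq:recover-biased-scl} give $\mb{E}_{\mb{P}_u}[s(x)]\ge \on{SCL}_\rho(\mu+\delta)-o(1)$. Since $\mb{P}_u$ is a probability distribution on $\mb{F}_p^n$, some $x$ attains at least this expectation, which is the existential statement we want.

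First, dispose of the easy ranges of $t$. By \cref{lem:key-fourier}, together with the MDS property $d^\perp=n+1$, we have $\mb{E}[q_t]=0$ for $0<t<d^\perp$. By definition, $N_\rho(k,k';t)=0$ for $t>k+k'$. The recursion in \cref{cor:triple-q} shows that $N_\rho(k,k',1;t)$ is a combination of at most three terms $N_\rho(k\pm1,k';t)$ and $N_\rho(k,k';t)$, with coefficients at most $O(m)$ in absolute value. So the only surviving $t$ lie in $[d^\perp,2\ell+1]$, which are $O(m)$ values, and for each of them we have at most $(\sigma+1)^2$ pairs $(k,k')$.

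Second, bound each surviving term. Use $|N_\rho|\le \tilde N_\rho$ and \cref{prop:step-1} to get
\[
\frac{|N_\rho(k,k';t)|}{\sqrt{\binom{m}{k}\binom{m}{k'}}}\le C^\sigma\,\frac{\tilde N_\rho(\ell,\ell-\mbf{1}\{2\nmid t\};t)}{\binom{m}{\ell}} = e^{m E_\rho(\mu,\delta,\tau)+o(m)},
\]
where $t=2(\mu+\tau)m$. Here $C^\sigma=e^{O(\log\log m)}=e^{o(m)}$, and the identification with $E_\rho$ is the definition \cref{eq:def-E-rho}. In the balanced case, \cref{lem:step-2} gives the bound $e^{m(E(\mu,\delta)+o(1))}$ instead. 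Multiplying by $|\mb{E}[q_t]|$ and invoking the hypothesis
\[
\sup_\tau\{E_\rho+\tfrac1m\log|\mb{E}[q_t]|\}<0
\]
shows that each correction term is at most $e^{-cm}$ for some fixed $c>0$. Summing over the $O(m\sigma^2)$ surviving terms, including the extra $O(m)$ factor from the $q_1$ recursion, keeps the total correction at $e^{-cm+o(m)}$ in both the numerator and the denominator.

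Third, compare with the main terms. \cref{prop:step-0} shows that the denominator's main term is $\sigma+1\ge 1$ and the numerator's main term is $\Theta(\sigma m)$. Exponentially small additive errors therefore change the ratio by a factor $1+o(1)$ and an additive $o(1)$, and we conclude via \cref{eq:recover-biased-scl}.

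The main obstacle is the uniformity required in the second step. The hypothesis is stated as a supremum over a continuous $\tau$, but we must apply it at each of the discrete values $t$, including odd $t$. The $o(1)$ in the exponent, coming from Stirling and from the finitely many $j$-terms via $\gamma=j/m$, must also be uniform in $t$ and $\gamma$. I would handle this by noting that $E_\rho(\mu,\delta,\tau)$ is continuous in $\tau$ on the compact interval $[0,\delta]$ and that the Stirling estimates of \cref{fact:stirling} hold uniformly in the interior, with a $\mathrm{poly}(m)$ loss at the boundary. Strict negativity of the supremum then yields a uniform margin $c>0$ for all large $m$. The parity shift $\ell-\mbf{1}\{2\nmid t\}$ is handled the same way, at the cost of another $O(1)$ factor.
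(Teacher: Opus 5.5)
Your proposal is correct and follows the paper's proof of this lemma essentially step for step. You discard $0<t<d^\perp$ via \cref{lem:key-fourier}, bound each surviving $(k,k',t)$ term through \cref{prop:step-1} together with the definition of $E_\rho$ (or \cref{lem:step-2} when $\rho\sim 1/2$), pass to $N_\rho(k,k',1;t)$ via \cref{cor:triple-q} at polynomial cost, and compare the resulting $m^{O(1)}e^{-cm}$ errors with the main terms from \cref{prop:step-0}. Your treatment of uniformity in $t$ is more explicit than the paper's, but, like the paper, you are reading the hypothesis as giving a margin $c>0$ independent of $m$; continuity of $E_\rho$ in $\tau$ cannot supply this for the $\frac1m\log|\mb{E}[q_t]|$ term, and it is exactly what the main theorems provide.
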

\begin{proof}
By \cref{lem:step-2} and \cref{eq:def-E-rho}, under assumptions in the lemma, we know that
\begin{equation}
	\frac{\tilde{N}_\rho(\ell, \ell-\mbf{1}\{2\nmid t\};t)}{\binom{m}{\ell}}\cdot |\mb{E}[q_t]|\le e^{-\Omega(m)}.
\end{equation}
Then, by \cref{prop:step-1} we know that there exists $c=c(\mu, \delta)$ such that for any $\tau \in [0, \delta]$ with $t=2(\mu+\tau)m$ and $k, k'\in [\ell-\sigma, \ell]$
\begin{equation}
	\frac{|N_\rho (k, k';t)|}{\sqrt{\binom{m}{k}\binom{m}{k'}}}\cdot |\mb{E}[q_t]|\lesssim m e^{-cm}
\end{equation} 
and by \cref{cor:triple-q} the same holds replacing $N_\rho (k, k';t)$ by $N_\rho (k, k',1;t)$.
Recall \cref{eq:s-to-q1,eq:master-t}. Plugging in \cref{eq:biased-goal,prop:step-0}, we have
\begin{equation}
\mb{E}[s(x)]\ge \rho+\frac{\sqrt{\rho(1-\rho)}}{m}\cdot \frac{\left[\beta  (\mu+\delta) + 2\sqrt{(\mu+\delta)(1-\mu-\delta)}\right](\sigma m +O(\sigma^2+m))+m^{O(1)}e^{-cm}}{(\sigma+1)+m^{O(1)}e^{-cm}}
\end{equation}
This establishes the first step of \cref{eq:recover-biased-scl} rigorously. The other steps give $\mb{E}[s(x)]\ge \on{SCL}_\rho(\mu+\delta)$.
\end{proof}
\subsection{Controlling Expected Discrepancy}
\label{sec:step-3}
In this section, we bound $|\mb{E}[q_t]|$ when the rate $2\mu$ is large, using techniques from \cite{bdir,mpsw21,mnpw22}. Recall $\mb{E}[q_t(x)] = \sum_{y\in C_t^\perp} \prod_{i:y_i\ne 0}\wh{g}_i(y_i)$ from \cref{lem:key-fourier}. The starting point is to compute that
\begin{equation}\label{eq:biased-indicator-comp}
\mb{E}[q_t(x)] = \sum_{y\in C_t^\perp} \prod_{i:y_i\ne 0}\wh{g}_i(y_i)
 =\sum_{y\in C_t^\perp} \prod_{i:y_i\ne 0}\frac{\wh{\mbf{1}_{S_i}}(y_i)}{\sqrt{\rho(1-\rho)}}\prod_{i:y_i= 0}\frac{\wh{\mbf{1}_{S_i}}(y_i)}{\rho}
 = \rho^{t/2-m}(1-\rho)^{-t/2} \sum_{y\in C_t^\perp} \prod_{i=1}^m\wh{\mbf{1}_{S_i}}(y_i)
\end{equation}
where we recall the scaling of $g_i$ from \cref{eq:def-g}. In particular, in the balanced case where $\rho\sim 1/2$, we recover \cref{eq:balanced-llr} up to the $o(1)$ error, i.e.
\begin{equation}\label{eq:balanced-indicator-comp}
\left|\mb{E}[q_t(x)]\right|=  (2+o(1))^m \left|\sum_{y\in C^\perp_t} \prod_{i=1}^{m}\wh{\mbf{1}_{S_i}}(y_i)\right|.
\end{equation}

The key tool to control these Fourier coefficients is the following standard fact used by works on local leakage resilience~\cite{bdir,mpsw21,mnpw22}. It is essentially saying the maximizer $T$ is an interval of length $\rho p$.  We include a proof for completeness.
\begin{fact}[\cite{bdir}]
\label{fact:arc-max}
 For any subset $S\subset\mb{F}_p$ of size $|S|=\rho p$, $\left|\wh{\mbf{1}_S}(0)\right|=\rho$ and for $z\ne 0$, as $p\to\infty$
\begin{equation}
\left|\wh{\mbf{1}_S}(z)\right|\le \frac{|\sin(\rho\pi)|}{\pi}+O\left(\frac{1}{p^2}\right).
\end{equation}
\end{fact}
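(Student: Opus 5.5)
We compute the Fourier coefficient directly and reduce the problem to a rearrangement inequality on the circle. With the paper's convention, $\wh{\mbf{1}_S}(z)=\frac1p\sum_{a\in S}e_p(az)$. For $z=0$ this is $|S|/p=\rho$ immediately.

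For $z\neq 0$, multiplication by $z$ is a bijection of $\mb{F}_p$ because $p$ is prime. Hence $\wh{\mbf{1}_S}(z)=\frac1p\sum_{b\in zS}e_p(b)$, where $zS$ is again a set of size $\rho p$. So it suffices to bound
\[
M(T)\coloneqq\Big|\sum_{b\in T}e_p(b)\Big|
\]
over all sets $T\subseteq\mb{F}_p$ with $|T|=\rho p$.

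To handle the absolute value, write $M(T)=\max_{\theta}\operatorname{Re}\big(e^{-i\theta}\sum_{b\in T}e_p(b)\big)=\max_\theta\sum_{b\in T}\cos(2\pi b/p-\theta)$. For fixed $\theta$, the sum of $\cos(2\pi b/p-\theta)$ over $|T|=\rho p$ points among the $p$-th roots of unity is maximized by taking the $\rho p$ points whose angles are closest to $\theta$. These points form a set of consecutive residues, i.e.\ an arc or interval. Therefore $M(T)\le\max_{\text{intervals }I,\,|I|=\rho p}M(I)$, and by rotation invariance we may take $I=\{0,1,\dots,\rho p-1\}$.

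For the interval we have the exact geometric sum
\[
\Big|\sum_{b=0}^{\rho p-1}e_p(b)\Big|=\frac{|\sin(\pi\rho)|}{|\sin(\pi/p)|}.
\]
Dividing by $p$ and using $\sin(\pi/p)=\frac{\pi}{p}\big(1-O(p^{-2})\big)$ gives
\[
\Big|\wh{\mbf{1}_S}(z)\Big|\le \frac{|\sin(\rho\pi)|}{p\sin(\pi/p)}=\frac{|\sin(\rho\pi)|}{\pi}\big(1+O(p^{-2})\big)=\frac{|\sin(\rho\pi)|}{\pi}+O\!\left(\frac1{p^2}\right).
\]

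The step requiring the most care is the rearrangement claim. For a fixed direction $\theta$, the function $b\mapsto\cos(2\pi b/p-\theta)$ on the $p$ equally spaced points is unimodal around the circle. Its superlevel sets are therefore arcs of consecutive points, and choosing the top $\rho p$ values (breaking ties arbitrarily) gives an arc. Everything else is exact computation, with the primality of $p$ used only to make $a\mapsto za$ a permutation.
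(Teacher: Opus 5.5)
Your proof is correct and follows the same route as the paper: use primality so that $a\mapsto za$ permutes $\mathbb{F}_p$, write the modulus as a maximum of cosine sums so that for each direction the optimal set is an arc of $\rho p$ consecutive residues, and then evaluate the geometric sum using $p\sin(\pi/p)=\pi+O(p^{-2})$. Your explicit $\max_\theta$ formulation is slightly cleaner than the paper's ``for some argument $\theta$'', since that $\theta$ implicitly depends on $S$, but the argument is otherwise identical.
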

\begin{proof}
Let $s=\rho p$ and $\omega = \exp(2\pi i /p)$. The $z=0$ case is clear. Multiplying by $z\ne 0$ permutes $\mb{F}_p$, so
\begin{equation}
   p\max_{|S|=\rho p} \left|\wh{\mbf{1}_S}(z)\right| =    \max_{|S|=\rho p}\left|\sum_{x\in S}\exp\left(\frac{2\pi i xz}{p}\right)\right|= \max_{|S|=\rho p}\left|\sum_{x\in S}\omega^x\right|=\max_{|S|=\rho p} \sum_{x\in S}\cos\left(\frac{2\pi i x}{p}-\theta\right),
\end{equation}
for some argument $\theta$. Since $\cos$ is decreasing as the angular distance from $\theta$ increases on $[0,\pi]$, it is maximized by choosing the $\rho p$ points among the $p$-th roots of unity whose arguments are closest to $\theta$, i.e. $\rho p$ consecutive roots. By rotational invariance, we may therefore take $\{0, 1, \ldots, s-1\}$, so
\[
 \left|\wh{\mbf{1}_S}(z)\right| \le \frac{1}{p}
\left|\sum_{j=0}^{s-1}\omega^j\right|
=
\frac{|1-\omega^s|}{p|1-\omega|}
=
\frac{2|\sin(\pi s/p)|}{2p\sin(\pi/p)}
=
\frac{|\sin(\rho\pi)|}{p\sin(\pi/p)}
\le \frac{|\sin(\rho\pi)|}{\pi}+O\left(\frac{1}{p^2}\right),
\]
for $z\ne 0$, since
$
\sin(\pi/p)={\pi}/{p}+O\left({1}/{p^3}\right)$ as $p\to\infty$.
\end{proof}
\begin{remark}\label{rem:p_prime}
We remark that it is crucial that $p$ is prime: otherwise there may be other Fourier coefficients as large as $|T|/p$. This is the reason we cannot directly work over extension fields. The same is true for the methods of \cite{bdir,mnpw22,mpsw21} to bound the Fourier proxy.
\end{remark}
Now, we build on techniques of local leakage resilience works to obtain the following theorem.
\begin{theorem}
\label{thm:step-3}
Fix any $\mu, \delta$ satisfying \cref{eq:def-delta}. Suppose there exists sets $B_1, \ldots, B_J\in \binom{[m]}{2n-m}$ and $\lambda \in [0, 1]$ such that for every $t\in [d^\perp, 2\ell]$ and $D\in\binom{[m]}{t}$, there exists $j\in [J]$ with $|D\cap B_j|\ge \lambda m$. Then
\begin{equation}
\left|\mb{E}[q_t(x)]\right| \lesssim J\rho^{n-m}\left(\frac{\rho}{1-\rho}\right)^{t/2}\left|\frac{\sin(\rho\pi)}{\rho\pi}\right|^{\lambda m}.
\end{equation}
\end{theorem}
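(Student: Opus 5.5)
We start from \cref{eq:biased-indicator-comp}, which gives
\[
|\mb{E}[q_t(x)]| = \rho^{t/2-m}(1-\rho)^{-t/2}\Big|\sum_{y\in C_t^\perp}\prod_{i=1}^m \wh{\mbf{1}_{S_i}}(y_i)\Big|.
\]
The prefactor is $\rho^{-m}(\rho/(1-\rho))^{t/2}$. It therefore suffices to show
\[
\Big|\sum_{y\in C_t^\perp}\prod_{i}\wh{\mbf{1}_{S_i}}(y_i)\Big|\lesssim J\rho^{n}\,|\sin(\rho\pi)/(\rho\pi)|^{\lambda m}.
\]

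\textbf{Step 1: partition the sum.} For each $y\in C^\perp_t$ its support $D=\on{supp}(y)$ has size $t\in[d^\perp,2\ell]$. By hypothesis there is some $j$ with $|D\cap B_j|\ge\lambda m$. Assign each $y$ to the first such $j$, giving a partition $C_t^\perp=\bigsqcup_j Y_j$. By the triangle inequality it suffices to bound each $\sum_{y\in Y_j}\prod_i|\wh{\mbf{1}_{S_i}}(y_i)|$ by $\rho^n|\sin(\rho\pi)/(\rho\pi)|^{\lambda m}$ up to constants. Taking absolute values inside is harmless, and it lets us enlarge the summation range freely.

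\textbf{Step 2: Cauchy--Schwarz on one block.} Fix $j$ and split $[m]\setminus B_j$ into two sets $L,R$, each of size $m-n=\dim C^\perp$; this works since $|B_j|=2n-m$. By the MDS property of $C^\perp$, the restriction map $y\mapsto y|_L$ is a bijection $C^\perp\to\mb{F}_p^{m-n}$, and likewise for $R$. For $y\in Y_j$ we bound the $B_j$-factor pointwise. Coordinates $i\in B_j$ with $y_i\ne0$ number at least $\lambda m$, and each contributes at most $|\sin(\rho\pi)|/\pi+O(p^{-2})$ by \cref{fact:arc-max}. The remaining coordinates of $B_j$ contribute $\rho$ each. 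Since $\sin(\rho\pi)/\pi\le\rho$ (i.e.\ $\sin x\le x$), we get
\[
\prod_{i\in B_j}|\wh{\mbf{1}_{S_i}}(y_i)|\le \rho^{2n-m}\Big(\tfrac{|\sin(\rho\pi)|}{\rho\pi}+O(p^{-2})\Big)^{\lambda m}.
\]
The $O(p^{-2})$ error raised to the power $O(m)$ is $1+o(1)$, because $p\gtrsim m$ by \cref{lem:mds-bound}. Then, exactly as in \cref{eq:perfect-bdir}, extend the sum to all of $C^\perp$ and apply Cauchy--Schwarz:
\[
\sum_{y\in C^\perp}\prod_{L}|\cdot|\prod_R|\cdot|\le\Big(\sum_{y}\prod_{i\in L}|\wh{\mbf{1}_{S_i}}(y_i)|^2\Big)^{1/2}\Big(\sum_y\prod_{i\in R}|\wh{\mbf{1}_{S_i}}(y_i)|^2\Big)^{1/2}.
\]
Each factor equals $\prod_{i\in L}\|\wh{\mbf{1}_{S_i}}\|_2$ (respectively over $R$), by the bijection with $\mb{F}_p^{m-n}$. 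Parseval gives $\|\wh{\mbf{1}_{S_i}}\|_2^2=\mb{E}[\mbf{1}_{S_i}]=\rho$. Hence the $L\cup R$ part contributes at most $\rho^{(2(m-n))/2}=\rho^{m-n}$.

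\textbf{Step 3: assemble.} Multiplying the pieces gives $\rho^{m-n}\cdot\rho^{2n-m}=\rho^n$ times the decay factor $|\sin(\rho\pi)/(\rho\pi)|^{\lambda m}$. Summing over $j\in[J]$ and multiplying by the prefactor yields
\[
J\rho^{n-m}(\rho/(1-\rho))^{t/2}|\sin(\rho\pi)/(\rho\pi)|^{\lambda m},
\]
as claimed.

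The only delicate point is that the partition $Y_j$ is not a structured subset of $C^\perp$, so the Cauchy--Schwarz step cannot be applied to $Y_j$ directly. This is resolved by first bounding the $B_j$-factor pointwise on $Y_j$, and only then dropping the restriction and extending to all of $C^\perp$, using that all summands are nonnegative. The remaining details are routine: checking that $|L|=|R|=m-n$ fits, and absorbing the $O(p^{-2})$ errors into the $\lesssim$.
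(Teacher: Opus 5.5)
Your proof is correct and follows essentially the same argument as the paper. You partition $C_t^\perp$ into the $Y_j$, bound the $B_j$-factor using \cref{fact:arc-max} together with $|\wh{\mbf{1}_{S_i}}(0)|=\rho$, and control the $L_j\cup R_j$ part by Cauchy--Schwarz, the MDS bijection onto $\mb{F}_p^{m-n}$, and Parseval, which yields $\rho^{m-n}\cdot\rho^{2n-m}=\rho^n$ times the decay factor. The only difference is the order of steps: the paper applies Cauchy--Schwarz over $Y_j$ after pulling out the maximum of the $B_j$-factor, and only then enlarges to $C^\perp$. That is equivalent to your order, so Cauchy--Schwarz can in fact be applied to $Y_j$ directly, contrary to your closing remark.
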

\begin{proof}
By assumption, we know that for every $y\in C^\perp$ with $|y|\ge d^\perp$, there exists $j\in [J]$ such that $|\on{supp}y\cap B_j|\ge \lambda m$. Then, for any $t\in [d^\perp, 2\ell]$, we can partition $C_t^\perp = \bigcup_{j=1}^J Y_j$ where $Y_j$ are pairwise disjoint and
$|\on{supp}y\cap B_j|\ge \lambda m$ for every $y\in Y_j$.
Then, we split the sum over $y$ into these $Y_j$: for each $j$, we pick out coordinates indexed by $B_j$ (of size $m-2k=2n-m$), and split the other coordinates in $[m]$ arbitrarily into two sets $L_j$ and $R_j$ of size $k=m-n$ each. By Cauchy-Schwarz, 
\begin{equation}
\begin{aligned}
\left|\sum_{y\in C^\perp_t} \prod_{i=1}^{m}\wh{\mbf{1}_{S_i}}(y_i)\right|  & \le \sum_{j=1}^J \left|\sum_{y\in Y_j} \prod_{i=1}^{m}\wh{\mbf{1}_{S_i}}(y_i)\right| 
\\ & \le \sum_{j=1}^J  \left(\sum_{y\in Y_j} \prod_{i\in L_j}\left|\wh{\mbf{1}_{S_i}}(y_i)\right|^2\right)^{1/2}\left(\sum_{y\in Y_j} \prod_{i\in R_j}\left|\wh{\mbf{1}_{S_i}}(y_i)\right|^2\right)^{1/2}\max_{y\in Y_j} \prod_{i\in B_j}\left|\wh{\mbf{1}_{S_i}}(y_i)\right|
\\ & \le \sum_{j=1}^J  \left(\sum_{y\in C^\perp} \prod_{i\in L_j}\left|\wh{\mbf{1}_{S_i}}(y_i)\right|^2\right)^{1/2}\left(\sum_{y\in C^\perp} \prod_{i\in R_j}\left|\wh{\mbf{1}_{S_i}}(y_i)\right|^2\right)^{1/2}\max_{y\in Y_j} \prod_{i\in B_j}\left|\wh{\mbf{1}_{S_i}}(y_i)\right|
\\ & = \sum_{j=1}^J\left(\prod_{i\in L_j}\left[\sum_{y_i\in \mb{F}_p}\left|\wh{\mbf{1}_{S_i}}(y_i)\right|^2\right]\right)^{1/2}\left(\prod_{i\in R_j}\left[\sum_{y_i\in \mb{F}_p}\left|\wh{\mbf{1}_{S_i}}(y_i)\right|^2\right]\right)^{1/2}\max_{y\in Y_j} \prod_{i\in B_j}\left|\wh{\mbf{1}_{S_i}}(y_i)\right|
\\ & = \sum_{j=1}^J\left(\prod_{i\in [m]\setminus B_j} \Vert \wh{\mbf{1}_{S_i}}\Vert_2\right)\cdot \max_{y\in Y_j} \prod_{i\in B_j}\left|\wh{\mbf{1}_{S_i}}(y_i)\right|
\\ & \le \rho^{k} \sum_{j=1}^J \max_{y\in Y_j}\left|\frac{\sin(\rho\pi)}{\pi}+O\left(\frac{1}{p}\right)\right|^{|\on{supp} y \cap B_j|}\rho^{|B_j\setminus \on{supp} y |}
\\ & \lesssim \rho^{k} \sum_{j=1}^J \max_{y\in Y_j}\left|\frac{\sin(\rho\pi)}{\rho\pi}\right|^{|\on{supp} y \cap B_j|}\rho^{|B_j\setminus \on{supp} y |+|B_j\cap \on{supp} y |}
\\ &  = \rho^{k+(m-2k)} \sum_{j=1}^J\max_{y\in Y_j}  \left|\frac{\sin(\rho\pi)}{\rho\pi}\right|^{|\on{supp} y \cap B_j|} 
\\ & \le \rho^{n} J\left|\frac{\sin(\rho\pi)}{\rho\pi}\right|^{\lambda m},
\end{aligned}
\end{equation}
where in the fifth line we note that the projection maps are bijections from $C^\perp$ to $\mb{F}_p^{L_j}$ and $\mb{F}_p^{R_j}$ by the MDS property of $C^\perp$, and recall Parseval's identity and that
\begin{equation}
\Vert \wh{\mbf{1}_{S_i}}\Vert_2^2  =\Vert \mbf{1}_{S_i}\Vert_2^2 = \mb{E}_{a\in \mb{F}_p}\left[\mbf{1}_{S_i}(a)^2\right] = \rho.
\end{equation}
In the seventh line, we have used the fact that
\[ \left| \frac{ \sin(\rho \pi)}{\pi} + O\left(\frac{1}{p}\right) \right|^{|\mathrm{supp} y \cap B_j|} =\left| \frac{ \sin(\rho \pi)}{\pi}\right|^{|\mathrm{supp} y \cap B_j|} \left(1 + O\left( \frac{1}{p}\right) \right)^{|\mathrm{supp}(y) \cap B_j|} \lesssim \left| \frac{ \sin(\rho \pi)}{\pi}\right|^{|\mathrm{supp}(y) \cap B_j|}, \]
using the fact that $\rho$ is a constant bounded away from $1$, and that, by \cref{lem:mds-bound}, we have $p \gtrsim m \geq |\mathrm{supp}(y) \cap B_j|$
Now, the theorem follows \cref{eq:biased-indicator-comp}.
\end{proof}
A first application of this theorem is obtained by choosing $J=1$ and $B_1$ is any set. This is essentially the approach of \cite{bdir,mpsw21,mnpw22}.
Combined with \cref{lem:step-1-2}, we obtain an analog of \cref{thm:main-avg,thm:main-best} with worse rates: it is given as the green curve in \cref{fig:balanced}. 
\begin{theorem}[Off-the-shelf improvement in the balanced case]\label{thm:main-green}
Let $2\mu \in [0,1]$, and let $n,m$ be sufficiently large so that $n/m = 2\mu$.  Let $p$ be prime.  Then for any input sets $S_i$ of size $|S_i|/p \sim 1/2$, the MDS Max-LINSAT problem admits a solution $x$ with satisfaction ratio
\[
 s(x)\ge \on{SCL}_{1/2}(\mu+\delta)-o(1)
\]
for any $\delta \in [0, 1/2-\mu]$ such that $E(\mu, \delta)+F(\mu)< 0$ where
\begin{align*}
E(\mu, \delta) & \coloneqq(\mu+\delta)H\left(\frac{\mu}{\mu+\delta}\right)+(1-\mu-\delta)H\left(\frac{\mu}{1-\mu-\delta}\right)-H(2\mu)\\
F(\mu) & \coloneqq (1-2\mu)\log 2+(6\mu-2)\log\left(\frac{2}{\pi}\right)
\end{align*}   
\end{theorem}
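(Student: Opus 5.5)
We combine \cref{lem:step-1-2} in the balanced case with \cref{thm:step-3} applied with $J=1$. Fix $\rho\sim 1/2$ and $\delta\in[0,1/2-\mu]$ with $E(\mu,\delta)+F(\mu)<0$, and sample $x\sim\mb{P}_u$ with $\ell=\lfloor(\mu+\delta)m\rfloor$ as in \cref{sec:expansion}. By \cref{lem:step-1-2} (balanced version), it suffices to show
\[
\sup_{t\in[d^\perp,2\ell]}\frac1m\log|\mb{E}[q_t(x)]| \le F(\mu)+o(1).
\]
Then $E(\mu,\delta)+\frac1m\log|\mb{E}[q_t]|<0$ uniformly in $t$. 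Hence $\mb{E}_{\mb{P}_u}[s(x)]\ge \on{SCL}_{1/2}(\mu+\delta)-o(1)$, and some $x$ in the support attains at least this value.

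To bound $\mb{E}[q_t]$, take $J=1$ and let $B_1$ be an arbitrary set in $\binom{[m]}{2n-m}$; this requires $2\mu\ge 1/2$, which we assume, since otherwise $F$ is positive and the statement is vacuous. For $D\in\binom{[m]}{t}$ we have
\[
|D\cap B_1|\ge |D|-|[m]\setminus B_1| = t-2(m-n)\ge d^\perp-2(m-n)=(4\mu-1)m+1 .
\]
So the hypothesis of \cref{thm:step-3} holds with $\lambda=4\mu-1$, which gives
\[
|\mb{E}[q_t]|\lesssim \rho^{n-m}\left(\frac{\rho}{1-\rho}\right)^{t/2}\left|\frac{\sin\rho\pi}{\rho\pi}\right|^{(4\mu-1)m}.
\]
With $\rho=1/2+o(1)$, this becomes $2^{(1-2\mu)m}(2/\pi)^{(4\mu-1)m}e^{o(m)}$.

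The remaining step is bookkeeping of exponents against the stated $F(\mu)=(1-2\mu)\log 2+(6\mu-2)\log(2/\pi)$. The naive $\lambda$ above yields the coefficient $(4\mu-1)$ on $\log(2/\pi)$, whereas $F$ has $(6\mu-2)=(4\mu-1)+(2\mu-1)$. This extra $(2\mu-1)\log(2/\pi)\ge0$ is a loss: it makes $F$ larger and hence the resulting theorem weaker. I would therefore first check whether the intended bound is the MNPW-style normalization. There, one measures the gain relative to $\binom{m}{\ell}$ differently, or uses $|B|=m-2k$ with $k=m-n$ and a gain per coordinate of $\operatorname{supp}y\cap B$ of size $t-2k$, so that the exponent comes out as $(2\mu-\ldots)$.

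Either way, the proof only needs $|\mb{E}[q_t]|\le e^{mF(\mu)+o(m)}$. Because $F$ dominates the bound I derive, the inequality follows by monotonicity: $(2/\pi)^{(2\mu-1)m}\ge 1$, so my exponent is at most $F(\mu)$. The main obstacle is therefore not analytic but making sure the constants in the chain \cref{eq:balanced-indicator-comp} and \cref{thm:step-3} (the factors $\rho^{n-m}$ and $(\rho/(1-\rho))^{t/2}$ at $\rho\to1/2$) are uniform in $t\in[d^\perp,2\ell]$. They are, since $|\beta|=o(1)$ and all $o(1)$ errors enter only as $e^{o(m)}$.
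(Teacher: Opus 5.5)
Your route is the paper's: $J=1$, an arbitrary $B_1\in\binom{[m]}{2n-m}$, the union bound on $|D\cap B_1|$, then \cref{thm:step-3} and \cref{lem:step-1-2}. However, the one computation that carries content is wrong. The complement of $B_1$ has $2(m-n)$ elements, so your own inequality gives $t-2(m-n)\ge d^\perp-2(m-n)=3n-2m+1=(6\mu-2)m+1$, not $(4\mu-1)m+1$. Your value equals $2n-m+1$, i.e.\ you subtracted $m-n$ only once.

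The claim $\lambda=4\mu-1$ is in fact false, not merely unproved. A set $D$ of size $n+1$ can place $\min\{n+1,2(m-n)\}$ of its elements outside $B_1$. Its overlap with $B_1$ can therefore be as small as $\max\{0,3n-2m+1\}<2n-m=(4\mu-1)m$. So the hypothesis of \cref{thm:step-3} fails for your $\lambda$, and the bound with $(2/\pi)^{(4\mu-1)m}$ is not established. Your ``monotonicity'' step thus deduces $|\mb{E}[q_t]|\le e^{mF(\mu)+o(m)}$ from an unjustified, stronger premise.

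The repair is immediate and makes the speculation in your last paragraph unnecessary. The alternative you mention in passing, an overlap of $t-2k$ with $k=m-n$, is exactly the correct count. It gives $\lambda=6\mu-2$; the paper writes $6\mu-2+2\tau$ for $t=2(\mu+\tau)m$, worst at $\tau=0$. Then \cref{thm:step-3} with $\rho\to 1/2$ yields $\frac1m\log|\mb{E}[q_t]|\le (1-2\mu)\log 2+(6\mu-2)\log(2/\pi)+o(1)=F(\mu)+o(1)$ exactly. The coefficient $6\mu-2$ in $F$ is just the union-bound overlap, not a lossy normalization, and there is no slack to absorb by monotonicity.

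A sanity check would have caught the slip. With $\lambda=4\mu-1$ and $\delta=1/2-\mu$ (where $E=0$), the trivial $J=1$ choice would already give saturation for $2\mu>\log\pi/(2\log\pi-\log 2)\approx 0.717$. That would beat both the $0.78$ of \cite{mnpw22} and the $0.7496$ the paper gets from the far more elaborate \cref{thm:main-best}. The correct $\lambda$ gives the threshold $\approx 0.78$ recorded in the corollary right after the theorem.

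A minor point: for $2\mu<1/2$, vacuity requires $E+F\ge 0$, not just $F>0$. It does hold. $E$ is increasing in $\delta$, so $E(\mu,\delta)\ge 2\mu\log 2-H(\mu)\ge -\log\frac54$, while $F(\mu)>\frac12\log\pi$ for $\mu<1/4$.
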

\begin{proof}
In \cref{thm:step-3}, let $J=1$ and $B_1$ be any set of size $2n-m$. Then, for any $D\in\binom{[m]}{t}$
\[ |D\cap B_1|\ge |D|+|B_1|-m = t +(2n-m)-m\]
by the union bound, so $\lambda = 6\mu-2+2\tau$. Then, by \cref{thm:step-3}, for any $\tau\in [0, \delta]$ and $t=2(\mu+\tau)m$ 
\[
\frac{1}{m}\log |\mb{E}[q_t(x)]|\le F(\mu)+o(1)
\]
where the $o(1)$ comes from $\rho = 1/2+o(1)$, so upon exponentiation of $\rho$ and $\rho/(1-\rho)$ to $\Theta(m)$, we have errors of $(1+o(1))^{\Theta(m)} =e^{o(m)}$. Now, the conclusion follows \cref{lem:step-1-2} and that $\on{SCL}_\rho(\mu+\delta) = \on{SCL}_{1/2}(\mu+\delta)+o(1)$ for $\rho \sim 1/2$ by continuity of the semicircle law function in $\rho$.
\end{proof}
\begin{corollary}
Let $2\mu \geq 0.78$, and let $p$ be prime.  Then for any sets $S_i \subseteq \mathbb{F}_p$ of size $|S_i|/p \sim 1/2$, there exists an asymptotically perfect solution to the MDS Max-LINSAT problem with inputs $S_i$.
\end{corollary}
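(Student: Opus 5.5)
The plan is to apply \cref{thm:main-green} at the saturation endpoint $\delta = 1/2-\mu$, where $\on{SCL}_{1/2}(\mu+\delta)=\on{SCL}_{1/2}(1/2)=1$. An asymptotically perfect solution then exists whenever
\[
E(\mu,1/2-\mu)+F(\mu)<0 .
\]
So it suffices to check this inequality for every $2\mu\ge 0.78$.

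First simplify $E(\mu,1/2-\mu)$. Setting $\mu+\delta=1/2$ gives
\[
E(\mu,1/2-\mu)=\tfrac12 H(2\mu)+\tfrac12 H(2\mu)-H(2\mu)=0 .
\]
This is consistent with \cref{lem:step-2}: for $\ell=m/2$ and $t=2\mu m$, the ratio $\binom{m/2}{t/2}^2/\binom{m}{t}$ is subexponential. The condition therefore reduces to $F(\mu)<0$, that is,
\[
(1-2\mu)\log 2+(6\mu-2)\log(2/\pi)<0 .
\]
Write $r=2\mu$. The left side is $(1-r)\log 2+(3r-2)\log(2/\pi)$, which is affine in $r$. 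Its slope is $-\log 2+3\log(2/\pi)<0$, so it is decreasing in $r$. Its root is
\[
r^\ast=\frac{\log 2-2\log(2/\pi)}{\log 2-3\log(2/\pi)} .
\]
Numerically $\log(2/\pi)\approx-0.4516$, so $r^\ast\approx(0.6931+0.9032)/(0.6931+1.3548)\approx 1.5963/2.0479\approx 0.7795<0.78$. Hence $F(\mu)<0$ for all $2\mu\ge 0.78$.

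Two small points remain. First, $\lambda=6\mu-2+2\tau$ should lie in $[0,1]$; since $6\mu-2\ge 0.34>0$, this holds. Second, if $2\mu$ is close to $1$, the range $\delta\in[0,1/2-\mu]$ is tiny or trivial. At $\mu=1/2$ the semicircle law already equals $1$ by \cref{prop:dqi-on-opi}, so nothing needs to be proved there. The main obstacle is only this numerical verification that the root lies below $0.78$. It is tight (about $0.7795$), so I would state the constants with enough precision to make the strict inequality evident.
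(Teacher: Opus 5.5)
Your proposal is correct and follows the paper's proof exactly: set $\delta=1/2-\mu$ in \cref{thm:main-green}, note that $E(\mu,1/2-\mu)=0$, and reduce to checking $F(\mu)<0$. Your explicit root $r^\ast\approx 0.7795$ makes precise the numerical check that the paper states only as ``observe $F(\mu)<0$ whenever $\mu>0.39$,'' and it also confirms that the endpoint $2\mu=0.78$ is covered.
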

\begin{proof}
Let $\delta =1/2-\mu$. We see that $E(\mu, 1/2-\mu)=0$ and observe $F(\mu)<0$ whenever $\mu>0.39$.
\end{proof}

Note that $0.78$ is exactly the rate \cite{bdir,mpsw21,mnpw22} obtain on local leakage resilience using the Fourier proxy \cref{eq:BDIR}. Indeed, for $\delta=1/2-\mu$ so $E(\mu, \delta)=0$, we are essentially controlling \cref{eq:BDIR}.
\section{Proof of Main Results}\label{sec:main-proofs}
In this section, we prove our main results, first for the balanced case when $\rho\sim 1/2$, then the general case for any $\rho\in (0, 1)$. The main idea to improve from \cref{thm:main-green} and \cite{bdir,mpsw21,mnpw22} is to optimize the application of \cref{thm:step-3}.
\subsection{The Balanced Case}
\label{sec:balanced-proofs}
We focus on the balanced case where $\rho\sim 1/2$.
To obtain \cref{thm:main-avg}, we choose the buckets $B_j$ so that we can guarantee that the intersection has density at least that of $D$, i.e. $t/m$. This extra exponential decay overwhelms polynomial cost $J$.
\begin{lemma}\label{lem:avg-buckets}
For any $\mu, \delta$ satisfying \cref{eq:def-delta}, and $t=2(\mu+\tau)m$ with $\tau\in [0, \delta]$
\begin{equation}
\left|\mb{E}[q_t(x)]\right| \lesssim m\rho^{n-m}\left(\frac{\rho}{1-\rho}\right)^{t/2}\left|\frac{\sin(\rho\pi)}{\rho\pi}\right|^{2(\mu+\tau)(4\mu-1)m}.
\end{equation}
\end{lemma}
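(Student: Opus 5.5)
The lemma should follow from \cref{thm:step-3}, applied with $J=m$ buckets and $\lambda = 2(\mu+\tau)(4\mu-1)$. The whole task is to exhibit sets $B_1,\ldots,B_J\in\binom{[m]}{2n-m}$ such that every $D\in\binom{[m]}{t}$ meets some $B_j$ in at least $|D|\cdot|B_j|/m = t(2n-m)/m$ elements. Since $2n-m = (4\mu-1)m$ and $t = 2(\mu+\tau)m$, this threshold is exactly $\lambda m$. Once such buckets exist, \cref{thm:step-3} gives the bound with prefactor $J=m$.

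To build the buckets, identify $[m]$ with $\mathbb{Z}/m\mathbb{Z}$, set $b \coloneqq 2n-m$, and let $B_j \coloneqq \{j, j+1, \ldots, j+b-1\} \pmod m$ for $j\in\{0,\ldots,m-1\}$, the $m$ cyclic intervals of length $b$. Fix $D$ with $|D|=t$. Each element of $[m]$ lies in exactly $b$ of these intervals, so double counting gives
\[
\sum_{j=0}^{m-1}|D\cap B_j| = b\,|D| = bt.
\]
The average of $|D\cap B_j|$ over $j$ is therefore $bt/m$, and some $j$ must satisfy $|D\cap B_j|\ge bt/m = 2(\mu+\tau)(4\mu-1)m$.

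The hypothesis of \cref{thm:step-3} requires a single $\lambda$ that works for every $t\in[d^\perp,2\ell]$, but we only need the bound at the fixed $t$ in the statement. So I would either apply the proof of \cref{thm:step-3} at that one $t$, or observe that the proof uses the covering property only at the given weight $t$ (the partition $C^\perp_t=\bigcup_j Y_j$ is built for that $t$ alone). Taking $\lambda = t(2n-m)/m^2$ then yields
\[
|\mathbb{E}[q_t(x)]| \lesssim m\,\rho^{n-m}\left(\frac{\rho}{1-\rho}\right)^{t/2}\left|\frac{\sin(\rho\pi)}{\rho\pi}\right|^{2(\mu+\tau)(4\mu-1)m}.
\]

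The main points needing care are small. First, when $4\mu<1$ we have $b<0$ and the construction is vacuous; the regime of interest has $2\mu>1/2$, which should be assumed, or the bound treated as trivial otherwise. Second, when $bt/m$ is not an integer the ceiling only helps. Third, integrality of $2n-m$ is automatic. I expect no real obstacle beyond checking that the per-$t$ use of \cref{thm:step-3} is legitimate.
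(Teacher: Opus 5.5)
Your proposal is correct and is essentially the paper's proof. Both use $J=m$ cyclic windows $B_j$ of length $2n-m=(4\mu-1)m$ and double count to get $\frac{1}{m}\sum_j |D\cap B_j| = t(2n-m)/m = 2(\mu+\tau)(4\mu-1)m$, so some $B_j$ meets $D$ in at least that many coordinates, and then apply \cref{thm:step-3} with $J=m$. Your two side remarks correctly make explicit points the paper leaves implicit: the proof of \cref{thm:step-3} only uses the covering property at the single weight $t$, so a $t$-dependent $\lambda$ is legitimate; and the construction needs $4\mu\ge 1$, though that case is better excluded outright than called trivial, since the right-hand side is not obviously an upper bound when $4\mu<1$.
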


\begin{proof}
We apply \cref{thm:step-3} with $J=m$ and $B_j=\{j, j+1, \ldots, j+(2n-m)-1 \}$ for $j\in [m]$. As the union of $B_j$ covers each element of $[m]$ exactly $2n-m$ times, we have that for any $D\in\binom{[m]}{t}$
\begin{equation}
\mb{E}_{j\in [J]} \left|D\cap B_j\right|\ge \frac{|D|(2n-m)}{m} = 2(\mu+\tau)(4\mu-1)m,
\end{equation}
so there exists some $j$ satisfying the condition of \cref{thm:step-3}, and the lemma follows.
\end{proof}
Now, we can prove \cref{thm:main-avg}.  We restate the theorem for the reader's convenience.
\thmA*

\begin{proof}[Proof of \cref{thm:main-avg}]
By \cref{lem:avg-buckets}, for any $\tau\in [0, \delta]$ and $t=2(\mu+\tau)m$ 
\[
\frac{1}{m}\log |\mb{E}[q_t(x)]|\le F(\mu)+o(1),
\]
where the $o(1)$ comes from $\rho = 1/2+o(1)$, so upon exponentiation of $\rho$ and $\rho/(1-\rho)$ to $\Theta(m)$, we have errors of $(1+o(1))^{\Theta(m)} =e^{o(m)}$. Now, the conclusion follows \cref{lem:step-1-2} and that $\on{SCL}_\rho(\mu+\delta) = \on{SCL}_{1/2}(\mu+\delta)+o(1)$ for $\rho \sim 1/2$ by continuity of the semicircle law function in $\rho$.
\end{proof}

We further improve \cref{thm:main-best} with exponentially many buckets $B_j$ to get intersection size of $\lambda m$. \cref{lem:avg-buckets} corresponds to choosing $\lambda = 2\mu(4\mu-1)$ and $J=m$ for worst-case $t=d^\perp$, and we now balance the tradeoff of increasing $\lambda$ at the cost of exponentially large $J$. Recall from \cref{eq:EFG} that
\[
G(\mu, \lambda ) \coloneqq \left(1-2\mu\right)\log 2 +H\left(2\mu \right) -\left(4\mu -1\right)H\left(\frac{\lambda }{4\mu -1}\right)-\left(2-4\mu \right)H\left(\frac{2\mu -\lambda }{2-4\mu }\right)+\lambda \log\left(\frac{2}{\pi}\right).
\]
\begin{lemma}\label{lem:best-buckets}
For any $\mu, \delta$ satisfying \cref{eq:def-delta}, balanced $\rho\sim 1/2$, $\varepsilon = \varepsilon(\mu)>0$, and $t\in [d^\perp, 2\ell]$
\begin{equation}
    \left|\mb{E}_{x\in \mb{F}_p^n} \left[q_t(x) \right]\right|\le e^{m[G(\mu, \lambda)+\varepsilon+o(1)]}.
\end{equation}
\end{lemma}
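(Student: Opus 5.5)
The plan is to apply \cref{thm:step-3} with an exponentially large family of buckets. We choose the family by a probabilistic covering argument: it should be just large enough that every support set $D$ meets some bucket in at least $\lambda m$ coordinates. The three exponential factors in \cref{thm:step-3} should then account for the terms of $G(\mu,\lambda)$ as follows:
\begin{itemize}
\item $\rho^{n-m}(\rho/(1-\rho))^{t/2}$ gives $(1-2\mu)\log 2$;
\item $|\sin(\rho\pi)/(\rho\pi)|^{\lambda m}$ gives $\lambda\log(2/\pi)$;
\item $\frac1m\log J$ gives the remaining hypergeometric entropy terms.
\end{itemize}

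\textbf{Step 1: reduce to the smallest weight.} It suffices to guarantee the intersection condition for $|D| = d^\perp = n+1 \approx 2\mu m$. Indeed, any $D$ with $|D|=t\ge d^\perp$ contains a subset $D'$ of size $d^\perp$. If $|D'\cap B_j|\ge \lambda m$, then $|D\cap B_j|\ge\lambda m$ as well. So a single family $B_1,\dots,B_J$ works simultaneously for all $t\in[d^\perp,2\ell]$.

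\textbf{Step 2: random covering.} Write $b\coloneqq 2n-m=(4\mu-1)m$ for the bucket size. Draw $B_1,\dots,B_J$ independently and uniformly from $\binom{[m]}{b}$. For a fixed $D$ of size $t_0=d^\perp$, the quantity $|D\cap B_j|$ is hypergeometric. Hence
\[
P\coloneqq \mb{P}\bigl(|D\cap B_j|\ge \lambda m\bigr)\ \ge\ \frac{\binom{b}{\lambda m}\binom{m-b}{t_0-\lambda m}}{\binom{m}{t_0}}.
\]
(Counting from the bucket's side: choose which $\lambda m$ of the $b$ bucket elements and which $t_0-\lambda m$ of the remaining $m-b$ elements form $D$.) By \cref{fact:stirling}, this is
\[
P \ge \exp\Bigl(m\Bigl[(4\mu-1)H\bigl(\tfrac{\lambda}{4\mu-1}\bigr)+(2-4\mu)H\bigl(\tfrac{2\mu-\lambda}{2-4\mu}\bigr)-H(2\mu)\Bigr]-O(\log m)\Bigr).
\]
The probability that no bucket works for $D$ is $(1-P)^J\le e^{-JP}$. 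There are at most $\binom{m}{t_0}\le 2^m$ sets $D$, so the union bound succeeds once $J = \lceil 2m/P\rceil$. Thus some deterministic family exists with
\[
\tfrac1m\log J \le H(2\mu)-(4\mu-1)H\bigl(\tfrac{\lambda}{4\mu-1}\bigr)-(2-4\mu)H\bigl(\tfrac{2\mu-\lambda}{2-4\mu}\bigr)+o(1).
\]
If $\lambda$ is below the mean intersection $2\mu(4\mu-1)$, then $P=\Omega(1/m)$ and $J$ is polynomial, which is consistent with \cref{lem:avg-buckets}. The slack $\varepsilon$ in the statement is there to absorb rounding $\lambda m$, $b$, $t_0$ to integers and the discrepancy $n+1$ versus $2\mu m$, which shift the entropy expressions by an amount that is continuous in the parameters.

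\textbf{Step 3: assemble.} Plug this family into \cref{thm:step-3}. Since $\rho=1/2+o(1)$, we have $\rho^{n-m}(\rho/(1-\rho))^{t/2}=2^{(1-2\mu)m}e^{o(m)}$ uniformly in $t\le 2\ell$, and $|\sin(\rho\pi)/(\rho\pi)|^{\lambda m}=(2/\pi)^{\lambda m}e^{o(m)}$. Multiplying these by $J$ gives exactly $e^{m[G(\mu,\lambda)+\varepsilon+o(1)]}$.

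The main obstacle is the Stirling bookkeeping in Step 2. One has to confirm that the hypergeometric lower bound, taken with the correct orientation (bucket-side counting), produces exactly the three entropy terms in $G$. One also has to check that the admissible range of $\lambda$ (all entropy arguments in $[0,1]$) matches the domain implicit in $G$. The union-bound and monotonicity steps are routine by comparison.
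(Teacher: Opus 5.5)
Your proposal is correct and follows essentially the same route as the paper's proof. Both reduce to supports of size exactly $d^\perp$, draw $B_1,\dots,B_J$ uniformly from $\binom{[m]}{2n-m}$, lower-bound the hit probability by the single hypergeometric term at $\lambda m$ via Stirling, union-bound over $D$, and plug the resulting family into \cref{thm:step-3}. The one difference is minor: the paper takes $J$ to be $e^{\varepsilon m}$ times the reciprocal of the entropy bound, whereas your choice $J=\lceil 2m/P\rceil$ already makes the union bound succeed, so the $\varepsilon$ slack is not actually needed.
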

\begin{proof}
We apply \cref{thm:step-3}, and show the existence of the appropriate buckets $B_j$ from which the lemma would follow. To check the condition in \cref{thm:step-3}, it suffices to check it for sets $D$ of size exactly $d^\perp$.

For any $B_j$ of size $2n-m$, the number $N(\lambda)$ of sets $D\in\binom{[m]}{d^\perp}$ such that $|B_j\cap D|\ge \lambda m$ is 
\begin{equation}\label{eq:N-lambda}
N(\lambda) = \sum_{k\ge \lambda m} \binom{2n-m}{k}\binom{2(m-n)}{n+1-k},
\end{equation}
as $d^\perp=n+1$. Then, for uniformly random $2n-m$ element subset $B_j$ of $[m]$ and any $D$,
\begin{equation}
\gamma\coloneqq \mb{P}\left(\left|D\cap B_j\right|\ge \lambda m\right)=\frac{N(\lambda)}{\binom{m}{2\mu m}} \gtrsim m^{-1/2}\exp\left(m\left[ (4\mu-1) H\left(\frac{\lambda}{4\mu-1}\right)+(2-4\mu)H\left(\frac{2\mu - \lambda}{2-4\mu}\right)-H(2\mu)\right]\right),
\end{equation}
where we pick out the $k=\lambda m$ summand only from \cref{eq:N-lambda}. Now, we choose $B_1, \ldots, B_J$ independently uniformly at random, so by the union bound, there exists $D$ whose intersection with every $B_j$ is at most $\lambda m$ with probability at most
\begin{equation}
\binom{m}{d^\perp}\left(1-\gamma\right)^J < \binom{m}{n+1}e^{-\gamma J}\lesssim m^{O(1)} \exp\left(mH(2\mu)-\gamma J\right) \ll 1,
\end{equation}
provided we choose any $\varepsilon =\varepsilon(\mu)>0$ and choose $J$ to be 
\begin{equation}
J= \exp\left(m\left[ -(4\mu-1) H\left(\frac{\lambda}{4\mu-1}\right)-(2-4\mu)H\left(\frac{2\mu - \lambda}{2-4\mu}\right)+H(2\mu)+\varepsilon\right]\right)
\end{equation}
With this choice of $J$, with positive probability
every $D$ has intersection at least $\lambda m$ with some $B_j$.  In particular, buckets $B_j$ satisfying this property exist.  Then, by \cref{thm:step-3} and \cref{eq:balanced-indicator-comp}
\begin{equation}
    \left|\mb{E}_{x\in \mb{F}_p^n} \left[q_t(x) \right]\right|\le  (2+o(1))^m \left|\sum_{y\in C^\perp_t} \prod_{i=1}^{m}\wh{\mbf{1}_{S_i}}(y_i)\right| \lesssim e^{o(m)}2^{m-n}\left(\frac{2}{\pi}\right)^{\lambda m}J
    \le e^{m[G(\mu, \lambda)+o(1)+\varepsilon]}.
\end{equation}
where $G$ is defined in \cref{eq:EFG}.
\end{proof}

Finally, we can prove \cref{thm:main-best}.  We restate the theorem for the reader's convenience.
\thmB*
\begin{proof}[Proof of \cref{thm:main-best}]
Fix any $\delta,\mu,\lambda$ such that $E(\mu, \delta)+G(\mu, \lambda)<0$ under assumptions of \cref{thm:main-best}, we choose $\varepsilon \coloneqq-(E(\mu, \delta)+G(\mu, \lambda ))/2$. By \cref{lem:best-buckets}, for any $\tau\in [0, \delta]$ and $t=2(\mu+\tau)m$ 
\[
\frac{1}{m}\log |\mb{E}[q_t(x)]|\le G(\mu, \lambda)+\varepsilon+o(1),
\]
where the $o(1)$ comes from $\rho = 1/2+o(1)$, so upon exponentiation of $\rho$ and $\rho/(1-\rho)$ to $\Theta(m)$, we have errors of $(1+o(1))^{\Theta(m)} =e^{o(m)}$. Now, the conclusion follows from \cref{lem:step-1-2} as
\begin{equation}
E(\mu, \delta)+\frac{1}{m}\log |\mb{E}[q_t(x)]|\le E(\mu, \delta)+G(\mu, \lambda)+\varepsilon+o(1)\le -\varepsilon+o(1).
\end{equation}
Finally, note that $\on{SCL}_\rho(\mu+\delta) = \on{SCL}_{1/2}(\mu+\delta)+o(1)$ for $\rho \sim 1/2$ by continuity of the semicircle law function in $\rho$. 
Now, we minimize $G(\mu, \lambda)$ in $\lambda$. Recall $H'(x)=\log(1/x-1)$ from \cref{fact:stirling} to obtain
\begin{equation}
0=\frac{\partial G}{\partial\lambda}(\mu, \lambda) = -\log \left(\frac{4\mu-1}{\lambda}-1\right)+\log \left(\frac{1-(4\mu-1)}{2\mu-\lambda}-1\right)+\log\left(\frac{2}{\pi}\right).
\end{equation}
Rearranging, we obtain the following quadratic equation in $\lambda$:
\begin{equation}
1=\frac{2}{\pi}\cdot\frac{\lambda}{4\mu-1-\lambda}\cdot\frac{1-(4\mu-1)-(2\mu-\lambda)}{2\mu-\lambda}.
\end{equation}
We solve for the larger root $\lambda\ge 2\mu(4\mu-1)$ to get
\begin{equation}
\lambda_\star\coloneqq \frac{A-\sqrt{A^2-8(1-2/\pi)\mu(4\mu-1)}}{2(1-2/\pi)}\quad\text{where}\quad A\coloneqq \frac{2}{\pi}+\left(1-\frac{2}{\pi}\right)(6\mu-1).
\end{equation}
This proves \cref{thm:main-best}.
\end{proof}

\subsection{The General Case}\label{sec:general-proofs}
In this section, we prove \cref{thm:main-biased} for arbitrary $\rho\in (0, 1)$.  Our proof is in analogy with \cref{thm:main-avg} using \cref{lem:avg-buckets}; for simplicity, we do not use the stronger control \cref{lem:best-buckets} and \cref{thm:main-best}. 
We restate the theorem for the reader's convenience.
\thmC*

Towards proving \cref{thm:main-biased}, define \begin{equation}
F_\rho \left(\mu, \delta, \tau\right)\coloneqq (2\mu-1)\log\rho +(\mu+\tau)\log \left(\frac{\rho}{1-\rho}\right)+2(\mu+\tau)(4\mu-1)\log \left(\frac{|\sin(\rho \pi)|}{\rho \pi}\right).
\end{equation}
Then, \cref{lem:avg-buckets} implies that for $t=2(\mu+\tau)m$ for $\tau\in [0, \delta]$
\[ \frac{1}{m}\log |\mb{E}[q_t(x)]|\le F_\rho(\mu, \delta, \tau)+o(1).\]
Together with \cref{lem:step-1-2}, we deduce the following theorem, which will lead to \cref{thm:main-biased}.
\begin{theorem}\label{thm:unset-tau-biased}
Let $2\mu \in [0,1]$, and let $n,m$ be sufficiently large, with $n/m = 2\mu$.  Let $\rho \in (0,1)$. Let $p$ be prime, and fix any MDS generator matrix $B \in \mathbb{F}_p^{m \times n}$.  Then for any input sets $S_i$ with size $|S_i| = \rho p$, the MDS Max-LINSAT problem (with respect to $B$) admits a solution $x \in \mathbb{F}_p^n$ with satisfaction ratio
\begin{equation}
 s(x)\ge \on{SCL}_\rho(\mu+\delta)-o(1),
\end{equation}
for any $\delta \in [0, 1-\rho-\mu]$ such that
\begin{equation}
\sup_{0\le\tau\le \delta} \left\{E_\rho(\mu, \delta, \tau)+F_\rho(\mu, \delta, \tau)\right\} <0,
\end{equation}
where for binary entropy $H$ with natural logarithm base and
\begin{equation}
\begin{aligned}
E_\rho(\mu, \delta, \tau) & \coloneqq 2(\mu+\tau)\log 2-H(\mu+\delta)
\\ & \quad +
\max_{\gamma} \left\{\gamma\log\left(\frac{|1-2\rho|}{2\sqrt{\rho(1-\rho)}}\right)+2(\mu+\tau)H\left(\frac{\gamma}{2(\mu+\tau)}\right)+(1-2\mu-2\tau) H\left(\frac{\delta-\tau-\gamma/2}{1-2\mu-2\tau}\right)\right\}
\\ F_\rho \left(\mu, \delta, \tau\right) & \coloneqq  (2\mu-1)\log\rho +(\mu+\tau)\log \left(\frac{\rho}{1-\rho}\right)+2(\mu+\tau)(4\mu-1)\log \left(\frac{|\sin(\rho \pi)|}{\rho \pi}\right).
\end{aligned}
\end{equation}
\end{theorem}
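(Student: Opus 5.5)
The plan is to feed the per-weight bound on $|\mb{E}[q_t(x)]|$ from \cref{lem:avg-buckets} into the reduction \cref{lem:step-1-2}. First we turn \cref{lem:avg-buckets} into an exponent. For $t=2(\mu+\tau)m$ with $\tau\in[0,\delta]$, the lemma gives
\[
|\mb{E}[q_t(x)]|\lesssim m\,\rho^{n-m}\left(\frac{\rho}{1-\rho}\right)^{(\mu+\tau)m}\left|\frac{\sin(\rho\pi)}{\rho\pi}\right|^{2(\mu+\tau)(4\mu-1)m}.
\]
Taking $\frac1m\log$ and using $n=2\mu m$, the right side becomes $(2\mu-1)\log\rho+(\mu+\tau)\log\frac{\rho}{1-\rho}+2(\mu+\tau)(4\mu-1)\log\left|\frac{\sin(\rho\pi)}{\rho\pi}\right|+O(\log m/m)$. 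This is exactly $F_\rho(\mu,\delta,\tau)+o(1)$.

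Before relying on this, we should check that \cref{lem:avg-buckets} really holds for general $\rho$. Its proof goes through \cref{thm:step-3}, which is stated for arbitrary $\rho$. The averaging argument over the cyclic windows $B_j$ needs only $2n-m\ge 0$; in the regime $\mu<1/4$ the exponent $4\mu-1$ is negative, and then we should either restrict to $\mu\ge1/4$ or note that the statement is vacuous there. We must also check that $t\le 2\ell$ forces $\tau\le\delta$, which is immediate from $\ell=\lfloor(\mu+\delta)m\rfloor$. Weights $t$ in $[d^\perp,2\ell]$ that are not exactly of the form $2(\mu+\tau)m$ are handled by writing $\tau=t/(2m)-\mu$, which lies in $[0,\delta]$ up to $O(1/m)$.

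Next we apply \cref{lem:step-1-2}. It requires $\sup_{\tau\in[0,\delta]}\{E_\rho(\mu,\delta,\tau)+\frac1m\log|\mb{E}[q_t]|\}<0$ for large $m$. By the previous step this supremum is at most $\sup_\tau\{E_\rho+F_\rho\}+o(1)$, and this is negative by hypothesis. Here we need uniformity: the $o(1)$ terms, both from Stirling in the definition \cref{eq:def-E-rho} of $E_\rho$ as a limit and from the bound above, must be uniform in $\tau$. We obtain this by noting that the finite-$m$ expressions converge uniformly on the compact $\tau$-interval, since entropy is uniformly continuous and there are only $O(m)$ terms in $j$. The supremum of a continuous function on $[0,\delta]$ being negative then gives a uniform margin $-\varepsilon$. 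With that margin, \cref{lem:step-1-2} yields $\mb{E}_{\mb{P}_u}[s(x)]\ge\on{SCL}_\rho(\mu+\delta)-o(1)$, so some $x$ attains at least this value.

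The main obstacle is the uniformity and limit-exchange issue: $E_\rho(\mu,\delta,\tau)$ is defined as a limit, while we need a bound valid for all large $m$ simultaneously over all $t$. We resolve this by bounding $\tilde N_\rho$ directly by $(m+1)\max_j(\text{summand})$ and applying the uniform Stirling estimate of \cref{fact:stirling}, which gives the variational formula with an $O(\log m/m)$ error uniform in $(\tau,\gamma)$.
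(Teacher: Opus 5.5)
Your proposal is correct and follows the paper's proof: the paper notes that \cref{lem:avg-buckets} gives $\frac{1}{m}\log|\mb{E}[q_t(x)]|\le F_\rho(\mu,\delta,\tau)+o(1)$ for $t=2(\mu+\tau)m$, and then invokes \cref{lem:step-1-2}. Your extra remarks on uniformity in $\tau$ and on the regime $\mu<1/4$ cover points the paper leaves implicit and do not change the route; note that for $\mu<1/4$ one has $2n-m<0$, so only your \emph{vacuous} option would cover the statement as written, not a restriction to $\mu\ge 1/4$.
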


\cref{thm:unset-tau-biased} has many parameters, and one must solve an optimization problem in order to understand its conclusions. For intuition, 
in \cref{fig:biased}, we include  two examples of optimization results, for $\rho\in\{0.4, 0.6\}$, showing both the optimal satisfaction ratio $s(x)$, and also the optimal choice of the parameters $\delta, \tau$, and $\gamma$, which we refer to as $\delta_{\star}, \tau_{\star}$, and $\gamma_{\star}$. In both cases, we notice that $\tau_\star=0$ so $s=d^\perp$ but $\gamma_\star$ is nonzero. Indeed, $\gamma_{\star}(\mu, \delta_\star, 0)$ increases until the threshold $\delta_\star= 1-\mu-\rho$.

\begin{figure}[H]
    \centering
 \includegraphics[width=\textwidth]{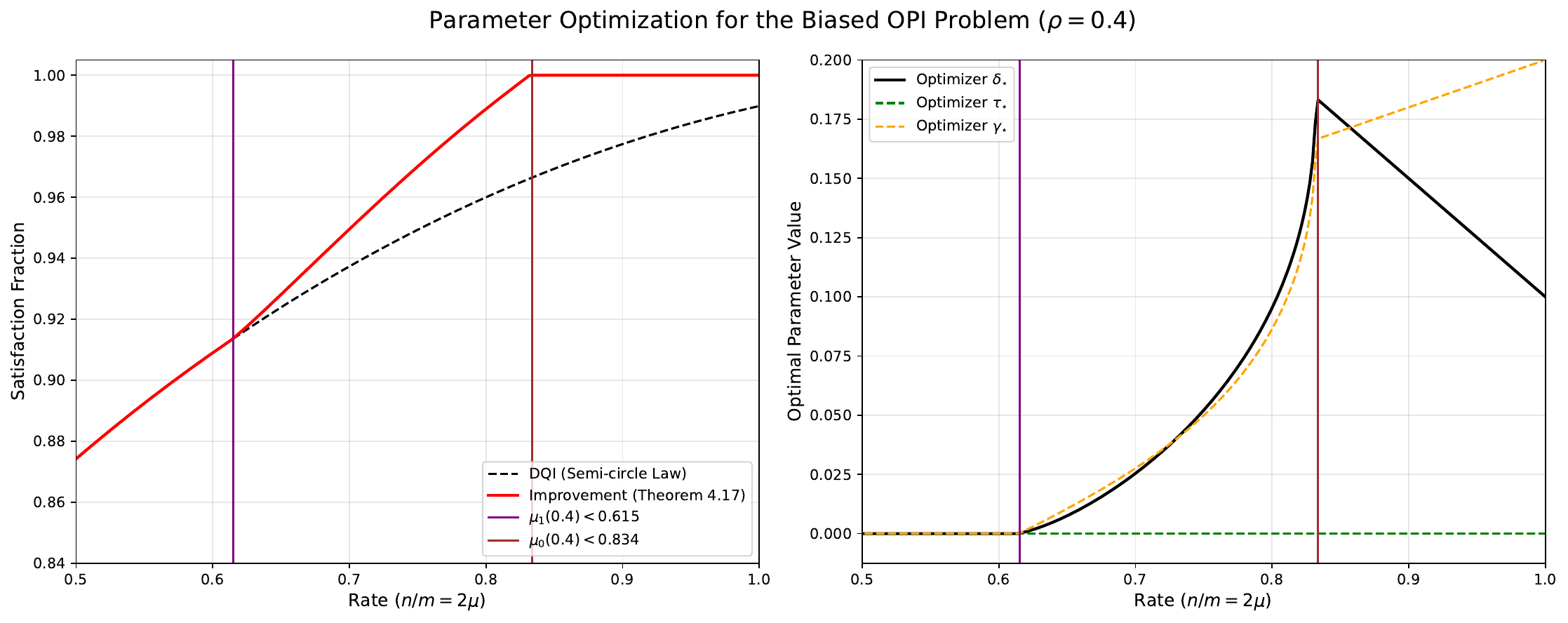}
 \includegraphics[width=\textwidth]{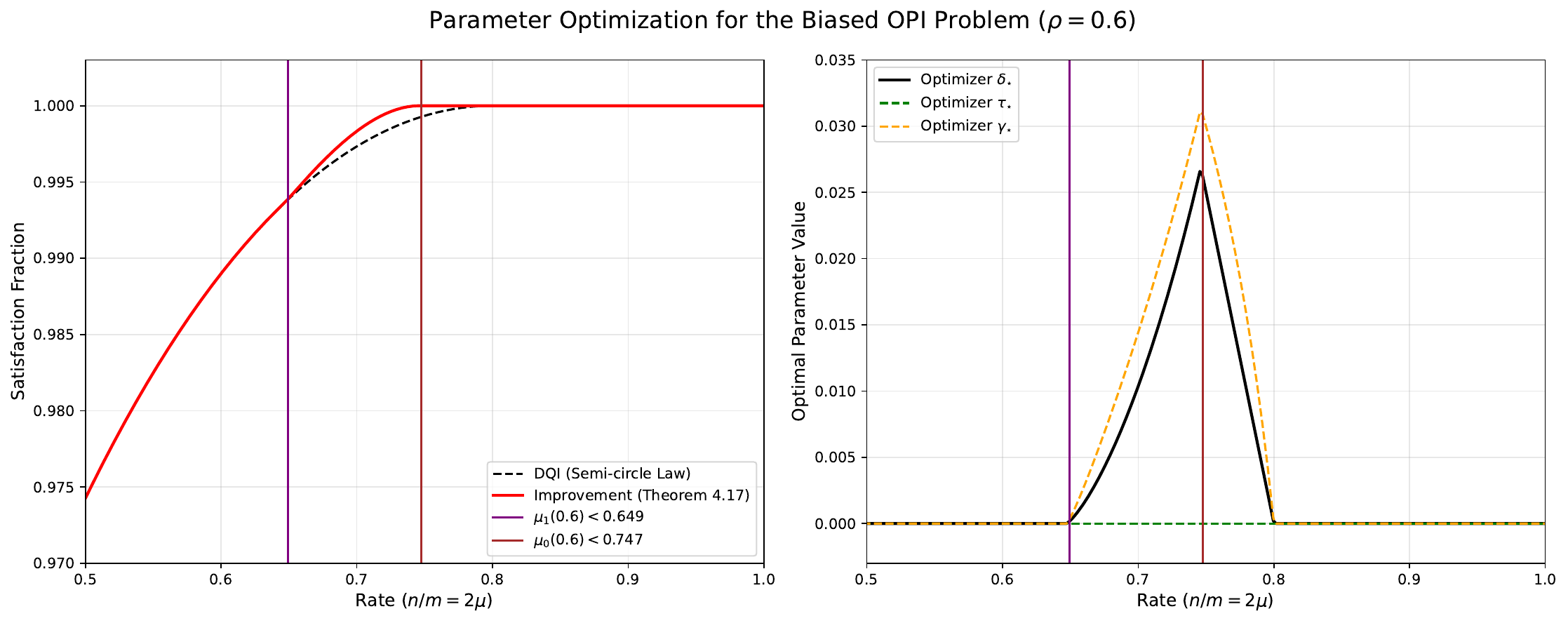}
    \caption{For two example values $\rho\in\{0.4, 0.6\}$, we plot values of optimizers $\tau_\star, \delta_\star, \gamma_\star$ from \cref{thm:unset-tau-biased}, and satisfaction fraction improvement from $\on{SCL}_\rho(\mu)$ as functions of rate $n/m = 2\mu\in [0,1]$. The left plot is analogous to \cref{fig:balanced}; in the right plot we observe $\tau_\star=0$ for all $\mu$.}
    \label{fig:biased}
\end{figure}

For the remainder of this section, we prove that the maximizing $\tau \in [0, \delta]$ of $E_\rho+F_\rho $ is given by $\tau_\star = 0$. Then, \cref{thm:unset-tau-biased} recovers \cref{thm:main-biased}. This simplification is why we use the quantitatively weaker analog of \cref{thm:main-avg}, rather than \cref{thm:main-best}: If we replace $F_\rho $ with the analog of the best bucket control \cref{lem:best-buckets} for general $\rho$, it is not clear whether $\tau=0$ is the maximizer, and the resulting theorem would become even more complicated.

\begin{proof}[Proof of \cref{thm:main-biased}]  
Fix any $\mu, \delta$ satisfying \cref{eq:def-delta}. Denote the arguments of $H$ in $E_\rho$ by
\begin{equation}\label{eq:def-x-y}
x = x(\tau, \gamma) \coloneqq \frac{\delta-\tau-\gamma/2}{1-2(\mu+\tau)}\qquad\text{and}\qquad y = y(\tau, \gamma) \coloneqq \frac{\gamma}{2(\mu+\tau)}.
\end{equation}
Define $\mc{E}(\tau, \gamma)$ as follows for $ \beta = {(1-2\rho)}/{\sqrt{\rho(1-\rho)}}$, so that $E_\rho(\mu, \delta, \tau) = \max_{\gamma}\mc{E}(\tau, \gamma)$:
\[
\mc{E}(\tau, \gamma)\coloneqq 2(\mu+\tau)\log 2-H(\mu+\delta)
+\gamma\log\left(\frac{|\beta|}{2}\right)+2(\mu+\tau)H\left(y\right)+(1-2\mu-2\tau) H\left(x\right).
\]

Then, the maximizer $\gamma_\star(\tau)$ in the definition of $E_\rho$ is defined as the root of
\begin{align*}
0 & = \frac{d\mc{E}}{d\gamma}(\tau, \gamma_\star(\tau))
\\ & = \frac{d}{d\gamma}\left[\gamma\log\left(\frac{|\beta|}{2}\right)+2(\mu+\tau)H\left(\frac{\gamma}{2(\mu+\tau)}\right)+(1-2\mu-2\tau) H\left(\frac{\delta-\tau-\gamma/2}{1-2\mu-2\tau}\right)\right]
\\ & = \log\left(\frac{|\beta|}{2}\right)+\log\left(\frac{1}{y}-1\right)-\frac{1}{2}\log\left(\frac{1}{x}-1\right)
\end{align*}
by \cref{fact:stirling}. Hence, at $\gamma=\gamma_\star(\tau)$, we have
\begin{equation}\label{eq:x-y-relation}
1=\frac{|\beta|}{2}\cdot \frac{1-y}{y}\sqrt{\frac{x}{1-x}}.
\end{equation}
Moreover, we compute
\begin{equation}
\frac{dF_\rho}{d\tau}= \log\left(\frac{\rho}{1-\rho}\right) +2(4\mu-1)\log\left(\frac{|\sin(\rho\pi)|}{\rho\pi}\right),
\end{equation}
and by optimality of $\gamma$ we have that
\begin{equation}
\frac{d\mc{E}}{d\gamma}\bigg\vert_{\gamma=\gamma_\star(\tau)}=0\implies 
\frac{d}{d\tau}\left[\mc{E}(\tau, \gamma_\star(\tau)\right] = \left[\frac{d\mc{E}}{d\tau} + \frac{d\mc{E}}{d\gamma}\frac{d\gamma}{d\tau}\right]_{\gamma=\gamma_\star(\tau)}  = \frac{d\mc{E}}{d\tau}\bigg\vert_{\gamma=\gamma_\star(\tau)}.
\end{equation}
Now, we compute via \cref{fact:stirling} that
\begin{align*}
\frac{d}{d\tau} [2(\mu+\tau)H(y)] & = 2H(y)+2(\mu+\tau)H'(y)\frac{dy}{d\tau}
\\ & = 2H(y)+2(\mu+\tau)\log\left(\frac{1-y}{y}\right)\cdot \frac{-2\gamma}{(2\mu+2\tau)^2}
\\ & = 2H(y)-2y\log\left(\frac{1-y}{y}\right)
\\ & = -2\log(1-y).
\end{align*}
Similarly, we have
\begin{align*}
\frac{d}{d\tau} [(1-2\mu-2\tau)H(x)] & = -2H(x)+(1-2\mu-2\tau)H'(x)\frac{dx}{d\tau}
\\ & = -2H(x)+(1-2\mu-2\tau)\log\left(\frac{1-x}{x}\right)\cdot \frac{2(-1/2+\mu+\delta-\gamma/2)}{(1-2\mu-2\tau)^2}
\\ & = -2H(x)+(2x-1)\log\left(\frac{1-x}{x}\right)
\\ & = \log \left(x(1-x)\right).
\end{align*}
Together, solving for $y$ using \cref{eq:x-y-relation}, we obtain
\begin{align*}
\frac{d[\mc{E}+F_\rho]}{d\tau}\bigg\vert_{\gamma=\gamma_\star(\tau)}
& = -2\log(1-y)+\log 4+\log (x(1-x))+\log\left(\frac{\rho}{1-\rho}\right)+2(4\mu-1)\log\left|\frac{\sin(\rho\pi)}{\rho\pi}\right|
\\ & = \log\left[\left(|\beta|\sqrt{\frac{x}{1-x}}+2\right)^2\left|\frac{\sin(\rho\pi)}{\rho\pi}\right|^{2(4\mu-1)}\frac{\rho}{1-\rho}x(1-x)\right].
\end{align*}
In other words, recalling the definition of $x=x(\tau, \gamma_\star(\tau))$ and $\beta$, we have that
\begin{equation}\label{eq:tau-derivative}
\exp\left(\frac{d\left[E_\rho +F_\rho\right]}{d\tau}\right) = \left|\frac{\sin(\rho\pi)}{\rho\pi}\right|^{2(4\mu-1)} f_\rho(x)\quad\text{where}\quad f_\rho(x)\coloneqq \left(\frac{|1-2\rho|}{\sqrt{\rho(1-\rho)}}\sqrt{\frac{x}{1-x}}+2\right)^2 \frac{\rho}{1-\rho}x(1-x).
\end{equation}
For intuition, we plot $f_\rho(x)$ for various values of $\rho$ in \cref{fig:tau}.
Formally, we break the analysis up into two cases, depending on how $\rho$ compares with $1/2$. 
\\

\textbf{Case 1: }
If $\rho\le 1/2$, then a direct computation yields
\[
f_\rho'(x)
=
2(1-x)
\left(\sqrt{\frac{\rho}{1-\rho}}+\sqrt{\frac{x}{1-x}}\right)
\left(\frac{1-2\rho}{1-\rho}\sqrt{\frac{x}{1-x}}+2\sqrt{\frac{\rho}{1-\rho}}\right)\left(1-\sqrt{\frac{\rho x}{(1-\rho)(1-x)}}\right).
\]
All factors are positive except possibly the last one, whose sign is the sign of $(1-\rho)-x$. Hence, $f_\rho$ is uniquely maximized at $x=1-\rho$ and we check that $f_\rho(1-\rho)=1$.
As $|\sin(\rho\pi)|\le \rho \pi$, \cref{eq:tau-derivative} is is less than one, meaning $E_\rho +F_\rho$
is decreasing in $\tau$, so the maximizing $\tau$ is $\tau_\star =0$.
\\

\textbf{Case 2: }
If $\rho> 1/2$, then we similarly compute
\[
f_\rho'(x)
=
2(1-x)
\left(\sqrt{\frac{\rho x}{(1-\rho)(1-x)}}+1\right)
\left(\frac{2\rho-1}{1-\rho}\sqrt{\frac{x}{1-x}}+2\sqrt{\frac{\rho}{1-\rho}}\right)\left(\sqrt{\frac{\rho}{1-\rho}}-\sqrt{\frac{x}{1-x}}\right).
\]
All factors are positive except possibly
the last one, whose sign is the sign of $\rho-x$.
Hence, $f_\rho$ is uniquely maximized at $x=\rho$ and increasing on $[0, \rho]$.
We see from the green line in \cref{fig:phase} that in the regime where \cref{thm:main-biased} is nontrivial, we must have the following numerical inequalities
\begin{equation}\label{eq:numerics}
\frac{1}{2}\le \rho \le 0.67 \quad\text{and}\quad
\mu\ge \bar{\mu}(\rho)\coloneq 0.31+\frac{\rho-0.5}{12},
\end{equation}
since otherwise there is no improvement from $\on{SCL}_\rho(\mu)$. Thus, we can without loss of generality assume these bounds.
Recall from \cref{eq:def-x-y} that
\[x(\tau, \gamma_\star(\tau))\le x(\tau, 0)=\frac{\delta-\tau}{1-2(\mu+\tau)}.
\]
Now, we use the facts that $\mu+\delta\le 1-\rho$ and $\rho > 1/2$ to bound
\[\frac{d}{d\tau}[x(\tau, 0)] = \frac{2\delta+2\mu-1}{(1-2\mu-2\tau)^2}\le \frac{2(1-\rho-\mu)+2\mu-1}{(1-2\mu-2\tau)^2}=\frac{1-2\rho}{(1-2\mu-2\tau)^2} < 0.\]
Therefore, by monotonicity and the fact that $\mu+\delta\le 1-\rho$, we bound
\[0\le x(\tau, \gamma_\star(\tau))\le x(\tau, 0) \leq x(0, 0)=\frac{\delta}{1-2\mu}\le \frac{1-\rho-\mu}{1-2\mu} = \frac{(1-\mu)(1-2\rho)}{1-2\mu} + \rho \coloneqq g_\rho(\mu).\]
For $\mu\le 1/2 < \rho$, note that $g_\rho(\mu) < \rho$, so $f_\rho(x)$ is increasing for $x\in [0, g_\rho(\mu)]$ for every $\mu$.
We also have that $g_\rho(\mu)$ is decreasing in $\mu$ when $\rho > 1/2$, since
$
g'_\rho(\mu)={(1-2\rho)}{(1-2\mu)^{-2}}< 0$.
Together, by monotonicity, we have for $x=x(\tau, \gamma_\star(\tau))$ and $\mu$ assumed to satisfy \cref{eq:numerics},
\begin{equation}
f_\rho(x)\le f_\rho(g_\rho(\mu)) \le f_\rho(g_\rho(\bar{\mu}(\rho))).
\end{equation}
Combining with $|\sin(\rho \pi)|\le \rho\pi$ and \cref{eq:numerics}, we can bound \cref{eq:tau-derivative} above via
\begin{equation}\label{eq:tau-bound-two}
\exp\left(\frac{d[E_\rho+F_\rho]}{d\tau}\right)\le \left|\frac{\sin(\rho\pi)}{\rho\pi}\right|^{2(4\bar{\mu}(\rho)-1)} f_\rho(g_\rho(\bar{\mu}(\rho))) \coloneqq \bar{f}(\rho).
\end{equation}
Observe from \cref{fig:tau} that $\bar{f}(\rho)$ is maximized for $\rho\in [1/2, 0.67]$ at $\rho_\star= 0.56\ldots $ and $\bar{f}(\rho)\le \bar{f}(\rho_\star)=0.9927\ldots <1$. Hence, $E_\rho +F_\rho$ is decreasing in $\tau$, so the maximizing $\tau$ is $\tau_\star =0$.
\begin{figure}
    \centering
 \includegraphics[width=\textwidth]{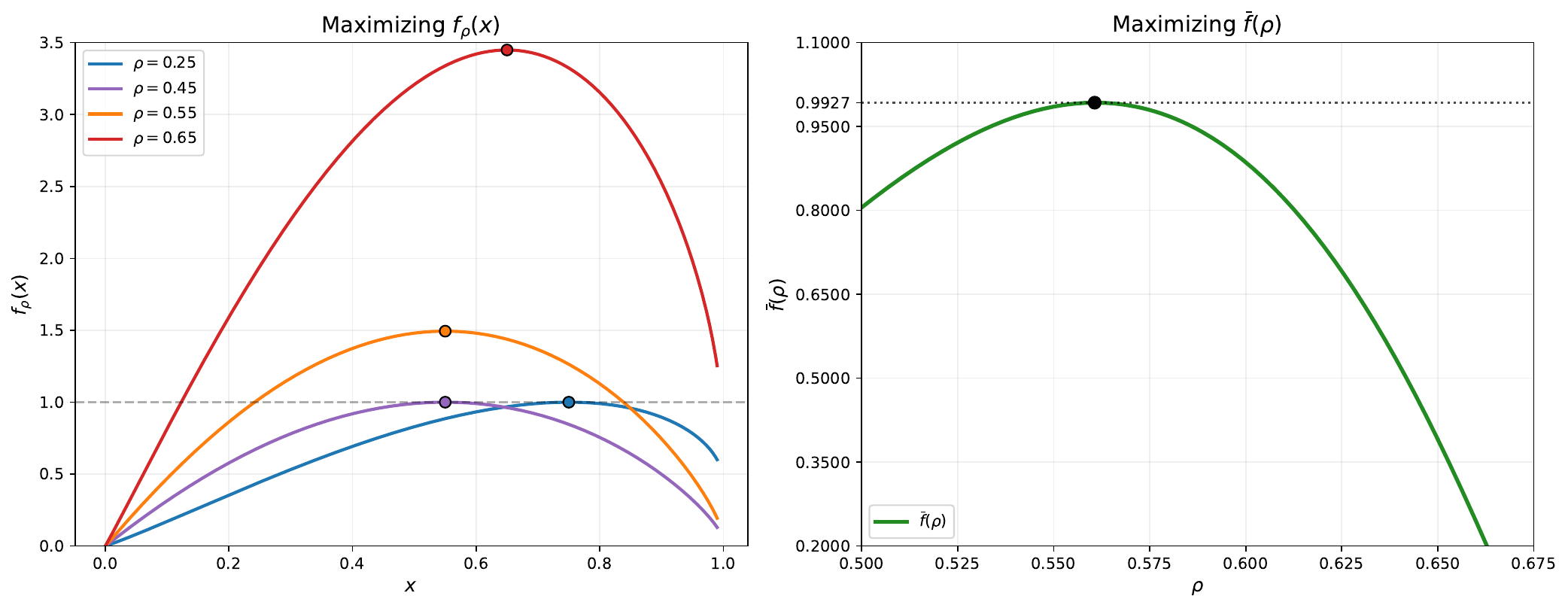}
    \caption{On the left, we plot $f_\rho(x)$ for various values of $\rho$ and observe it is maximized at $1-\rho$ if $\rho\le 1/2$ and at $\rho$ if $\rho\ge 1/2$. On the right, we plot $\bar{f}(\rho)$and observe it is maximized for $\rho\in [0.5, 0.67]$ at $\bar{f}(0.56)\approx 0.9927$.}
    \label{fig:tau}
\end{figure}

All together, in both cases, under the assumption of \cref{thm:unset-tau-biased}, the maximizer $\tau$ is attained at $\tau_\star =0$. This simplification reduces \cref{thm:unset-tau-biased} to \cref{thm:main-biased}, and the latter is proved.
\end{proof}

\bibliographystyle{alpha}
\bibliography{ref.bib}
\end{document}